\documentclass{article}

\usepackage{arxiv}

\RequirePackage[colorlinks,citecolor=blue,linkcolor=blue,urlcolor=blue,pagebackref]{hyperref}
\usepackage[utf8]{inputenc} 
\usepackage[T1]{fontenc}    
\usepackage{hyperref}       
\usepackage{url}            
\usepackage{booktabs}       
\usepackage{amsfonts}       
\usepackage{nicefrac}       
\usepackage{microtype}      
\usepackage{xcolor}         
\usepackage{amsmath,amssymb}
\usepackage{natbib} 
\usepackage{comment}         
\newcommand{\annot}[2]{\underbrace{#1}_{\text{#2}}}

\usepackage{color}

\usepackage{amsthm}

\usepackage[ruled,vlined]{algorithm2e}
\usepackage{bm}

\usepackage{bbm}



\newcommand{\N}{\mathbb{N}}
\newcommand{\R}{\mathbb{R}}


\newcommand{\argmin}[1]{\underset{#1}{\text{argmin}}}



\newcommand{\eps}{\epsilon}

\newcommand{\Prob}{\mathbb{P}}

\usepackage{amsthm}
\usepackage[capitalise]{cleveref}
\theoremstyle{plain}

\newtheorem{theorem}{Theorem}[section]
\newtheorem{lemma}[theorem]{Lemma}

\newtheorem{corollary}[theorem]{Corollary}
\newtheorem{proposition}[theorem]{Proposition}

\newtheorem{definition}[theorem]{Definition}
\newtheorem{assumption}[theorem]{Assumption}

\newtheorem*{remark}{Remark}

\newcommand*{\email}[1]{\texttt{#1}}

\newcommand\Sc {\mathcal{S}}
\newcommand\Nc {\mathcal{N}}
\newcommand\bbH{\mathbb{H}}

\newcommand\E {\mathbb{E}}
\newcommand\Expect{\E}
\newcommand\Pbb {\mathbb{P}}
\renewcommand\Re{\R}
\newcommand\bs {\boldsymbol}

\newcommand\rank {\ensuremath{\operatorname{\mathrm{rank}}}}

\newcommand\tr {\ensuremath{\operatorname{\mathrm{tr}}}}

\usepackage{graphicx}
\newcommand{\indep}{\rotatebox[origin=c]{90}{$\models$}}

\title{Benign-Overfitting in Conditional Average Treatment Effect Prediction with Linear Regression}

\newcommand*{\affaddr}[1]{#1} 
\newcommand*{\affmark}[1][*]{\textsuperscript{#1}}
\newcommand*{\equalcontribution}[1][*]{\textsuperscript{*}}

\author{%
Masahiro Kato\affmark[1,2]\thanks{\email{masahiro\_kato@cyberagent.co.jp}.}\ \ \ \ \ \ Masaaki Imaizumi\affmark[1]\\
\affaddr{\affmark[1]The University of Tokyo}\\
\affaddr{\affmark[2]CyberAgent, Inc.}
}

\begin{document}

\maketitle

\begin{abstract}
We study the benign overfitting theory in the prediction of the conditional average treatment effect (CATE), with linear regression models. As the development of machine learning for causal inference, a wide range of large-scale models for causality are gaining attention. One problem is that suspicions have been raised that the large-scale models are prone to overfitting to observations with sample selection, hence the large models may not be suitable for causal prediction. In this study, to resolve the suspicious, we investigate on the validity of causal inference methods for overparameterized models, by applying the recent theory of benign overfitting \citep{Bartlett2020}. Specifically, we consider samples whose distribution switches depending on an assignment rule, and study the prediction of CATE with linear models whose dimension diverges to infinity. We focus on two methods: the T-learner, which based on a difference between separately constructed estimators with each treatment group, and the inverse probability weight (IPW)-learner, which solves another regression problem approximated by a propensity score. In both methods, the estimator consists of interpolators that fit the samples perfectly. As a result, we show that the T-learner fails to achieve the consistency except the random assignment, while the IPW-learner converges the risk to zero if the propensity score is known. This difference stems from that the T-learner is unable to preserve eigenspaces of the covariances, which is necessary for benign overfitting in the overparameterized setting. Our result provides new insights into the usage of causal inference methods in the overparameterizated setting, in particular, doubly robust estimators.

\end{abstract}

\section{Introduction}
The problem of predicting the causal effects of treatment from observations is a central task in various fields, such as economics \citep{Wager2018}, medicine \citep{Assmann2000,Foster2011}, and online advertisement \citep{Bottou13}. The exact treatment effect is a counterfactual value, and it is usually intractable to know it directly. Therefore, we are often interested in the average treatment effect (ATE), which is defined as the difference between the expected potential outcomes of the two treatments \citep{Neyman,Rubin1974,imbens_rubin_2015}. Furthermore, in the past few years, the growth of large observational data has encouraged the development of predicting the individualized treatment effect, also called the conditional ATE (CATE) \citep{hahn1998role,Heckman1997,Abrevaya2015}, to allow individuals to have different ATEs; that is, the treatment effect can be heterogeneous among individuals. 


CATE prediction is increasing importance in statistics and machine learning \citep{Qian2011,Zhao2012,Imai2013,Zhou2017}. A naive way to predict the CATE is to estimate the ATE on the subgroups \citep{Assmann2000,Cook2004}; that is, computing the ATE on each group separated based on the covariates.
When the covariates are continuous, it is common to assume some statistical model for CATE, and various methods have been proposed to learn the model \citep{Weisberg2015}. In particular, in recent years, there has been a lot of interest in how to train models given high-dimensional data and models
\citep{Belloni2011,Belloni2014,Candes2007,Moon2007,Sun2012,Song2015}. 
Especially, several machine learning studies estimate and predict CATE using large-scale flexible models such as neural networks  \citep{Johansson2016,Shalit2017,Yao2018,Atan2018,Farrell2020}.

Despite the developments, the validity of CATE prediction using large-scale models is still an ongoing issue.
When the data are not perfectly observed, as in the CATE problem, the flexible models is more likely to overfit the observations.
\citet{Kunzel2019} points out the possibility of overfitting of a method for CATE prediction. 
For ordinary regression problems without no treatment effects, new various theories for overparameterized models have emerged such as the \textit{benign overfitting} \citep{Bartlett2020}; that is, a prediction error can be sufficiently small, even though the predictor overfits training data by numerous parameters greater than the sample size.
However, it is not clear whether the recent overparameterization theory are applicable to the problems of CATE prediction.

In this study, we investigate CATE prediction with linear regression models whose number of parameter is larger than the number of observations by following \citet{Bartlett2020}. 
Specifically, we consider the excess risk of two standard prediction methods: the T-learner \citep{Kunzel2019} and inverse probability weight (IPW)-learner with linear regression models. 
The T-learner  separately constructs the interpolating estimator of each treatment effect. The IPW-learner  first approximates a response variable of the CATE problem by using treatment assignment probability, called the propensity score, then develops prediction for the approximated response. 
For both methods, we derive the upper and lower bounds on the excess risk of CATE prediction and then investigate the conditions under which the upper bound goes to zero. 

As a result, we find that the design of the treatment assignment plays an important role in benign overfitting. 
For the T-learner case, when the treatment assignment does not depend on the covariates, which is standard in \textit{randomized controlled trials} (RCTs), the prediction risk goes to zero under the same conditions as in \citet{Bartlett2020}. 
In contrast, when the treatment assignment depends on the covariates owing to the selection bias, the convergence of the risk is not guaranteed.
This result is consistent with the previous works \citep{Kunzel2019,Nie2020}, which claims the danger of overfitting of the T-learner. 
For the IPW-learner case, on the other hand, the prediction risk converges to zero, regardless of the assignment rule.
These results give implications for CATE prediction of overparameterization, and also provide insights into the use of other methods. 
For example, when using the two-step algorithms, such as doubly robust method \citep{ChernozhukovVictor2018Dmlf,Kennedy2020} and R-learner \citep{Nie2020}, our result implies the importance of correctly estimating IPW in the first stage nuisance parameter estimation instead of using the T-learner.

\paragraph{Related work.} There is a rich literature on the overparameterized setting.
The most closest theory to our work is the benign overfitting in linear regression \citet{Bartlett2020}, which reveals a sufficient condition under which the prediction risk converges to zero with the overparameterized linear model.
Subsequent to \citet{Bartlett2020}, the framework is extended to ridge regression
\citep{Tsigler2020}, 
multiclass classification \citep{wang2021benign}, and a max-margin classifier \citep{cao2021risk}.
\cite{koehler2021uniform} reveals a connection between benign overfitting and the notion of uniform convergence.
For other studies studies on overparameterization, numerous works study the precious asymptotics of overparameterized models by using the random matrix theory. 
\cite{Muthukumar2019,hastie2019surprises,dobriban2018high} consider linear regression or shallow neural networks, 
\citet{dobriban2018high,Denny2020} study a ridge regression problem,
and \cite{Chatterji2021} studies a binary classification problem, \citet{BelkinHsu20182,BelkinRakhlin2019,Liang2020} study interpolating kernel methods.


CATE prediction has also been proposed using kernel-based methods \citep{Fan2008}, Gaussian processes \citep{Alaa2017,Alaa2018}, generative adversarial nets \citep{yoon2018ganite}, boosting, tree-based methods \citep{Zeileis2008,Su2009,ImaiSt2011,Kang2012,Lipkovich2011,loh2012,Wager2018,Athey2019,Chatla2020}, nearest neighbor matching, series estimation, and Bayesiaan additive regression trees \citep{Hill2011}. \citet{Gunter2011,ImaiSt2011,Imai2013} formulate the CATE estimation problem as a variable selection problem. \cite{Cai2017,Cai2021} study confidence intervals for high-dimensional cases. As a unifying framework, 
\citet{Kunzel2019} introduces meta-learners, such as the T-learner and X-learner. Other various methods have also been proposed \citep{Li2017,Kallus2017,Powers2017,Subbaswamy2018,Zhao2019,Hahn2020,Nie2020,Kennedy2020}.


\paragraph{Notation.}
We define a (potentially infinite-dimensional) Hilbert space $\bbH$ with a norm $\|\cdot\|$.
For two vectors $z,z' \in \mathbbm{H}$, $z^\top z'$ denotes an inner product between $z$ and $z'$, and $z z'^{\top}$ denotes a tensor product. 
For an operator $\Sigma: \mathbb{H} \to \mathbb{H}$, we use $\mu_1(\Sigma)\ge\mu_2(\Sigma)\ge \cdots$ to denote the
eigenvalues of $\Sigma$ in descending order, and we denote the
operator norm of $\Sigma$ by $\|\Sigma\|$. We use $I$ to denote
the identity operator on $\bbH$ and $I_n$ to denote the $n\times n$
identity matrix.
For an event $E$, $\mathbbm{1}\{E\}$ is an indicator function, which is $1$ if $E$ is true, and $0$ if $E$ is false.
For a sequence $\{a_n\}_n$, $O(a_n)$ and $o(a_n)$ denote Landau's big and small o notation, and $O_\Prob(a_n)$ and $o_\Prob(a_n)$ denote its probabilistic version.
We write $a_n = \Omega (b_n)$ for $\limsup_{n \to \infty} |a_n/b_n| > 0$, and $a_n = \omega (b_n)$ for $\limsup_{n \to \infty} |a_n/b_n| = \infty$
Also, $a_n = \Theta(b_n)$ means that both of $a_n = O(b_n)$ and $a_n = \Omega(b_n)$ hold.

\section{Setting: Linear Regression and Prediction for CATE}
\label{sec:linear}

\subsection{Conditional Average Treatment Effect}
We introduce the notion of conditoinal average treatment effect (CATE).
Suppose that there are two treatments $a \in\{1, 0\}$, where treatment $a=1$ corresponds to the active treatment, and treatment $a=0$ corresponds to the control treatment. 
We have access to $n$ training examples $\{(x_i, d_i, y_{i})\}^n_{i=1}$, each of which consists of a covariate $x_i$ from $\mathbb{H}$, a treatment indicator $d_i\in\{1, 0\}$, and a real-valued response variable $y_i$, and the examples are independent and identical copies of a jointly random element $(x,d,y_1, y_0) \sim P$ from the population distribution $P$.
$p(d = a| x )$ denotes the probability that an individual with covariate $x$ receives treatment $a\in\{1,0\}$, which is called the propensity score. 
We then posit the existence of potential outcome random variables $y_{1}$ and $y_{0}$ corresponding respectively to the
response subject would have experienced with and without treatment. By using the potential outcomes, the response variable $y$ can be written as
  \[y = \mathbbm{1}[d = 1]y_{1} + \mathbbm{1}[d = 0]y_{0}.\] 
We define the CATE at $x$ as
\begin{align*}
    \tau^*(x) = \mathbb{E}[y_{1}| x] - \mathbb{E}[y_{0}| x] 
\end{align*}
The main difficulty is that we can only observe a realized one of the two potential outcomes $y_{a}$ if $d=a$ for $a \in \{0,1\}$ from the given training example; thus, the model cannot be directly learned by minimizing the corresponding empirical risk.

\subsection{Linear Regression Model for Potential Outcomes}
\label{sec:liner_regression_model}
In this study, we consider a linear model for the potential outcome variables $y_a, a \in \{0,1\}$, then study the prediction of $\tau^*(x)$. 
For each $a \in \{0,1\}$, suppose that there exists a true parameter vector $\theta_a^* \in \mathbbm{H}$ such that the variables $x$ and $y_a$ follows the following linear model
\begin{align*}
    y_a = (\theta_a^*)^\top x + \varepsilon_a,
\end{align*}
where $\varepsilon_a$ is an independent noise variable whose mean is zero and variance is finite.
Note that $\varepsilon_0$ and $\varepsilon_1$ are independent to each other.
Note that $\theta_a^*$ is the optimal parameter for predicting $y_a$ without the selection, that is, $\Expect(y_a-x^\top\theta^*_a)^2=\min_{\theta \in \mathbb{H}}
      \Expect(y_a-x^\top\theta)^2$ holds.


For each $a\in\{0,1\}$, we give the following notations.
We define a {population covariance operator} $\Sigma=\Expect[x x^\top]$, and a {covariance operator with the selection assignment} $\Sigma_a=\Expect[\mathbbm{1}[d = a]x x^\top] = \Expect[p(d=a| x)x x^\top]$.
We consider an empirical potential outcome $y_{a,i}$ corresponding $y_i$ for $i=1,...,n$, and also define an empirical vector $\bs{y}_a\in\Re^n$ whose $i$-th entry is $\mathbbm{1}[d_i = a]y_i$, and a vector $\bs{\varepsilon}_a\in\Re^n$ whose $i$-th entry $\varepsilon_{a,i}=\mathbbm{1}[d_i = a]\varepsilon_{a,i}$, where $\varepsilon_{a,i}= (y_{a,i} - (\theta_a^*)^\top x_i)$. 
Further, following \cite{Bartlett2020}, we use the infinite matrix notation: $X_a$ denotes a linear map from $\bbH$ to $\Re^n$ corresponding to
  $(\mathbbm{1}[d_1 = a]x_1^\top,\ldots,\mathbbm{1}[d_n = a]x_n^\top)^\top$, so that $X_a\theta\in\Re^n$ has its $i$-th component as
  $\mathbbm{1}[d_i = a]x_{i}^\top\theta$ for $\theta \in \mathbb{H}$.
  We use the similar notation for the
  linear map $X^\top_a$ from $\Re^n$ to $\bbH$. Let us also define the linear map $X=X_1 + X_0$ from $\bbH$ to $\Re^n$; that is, it corresponds to
  $x_1,\ldots,x_n\in\bbH$. We also define $X^\top = X_1^\top + X_0^\top$.

\subsection{Excess Prediction Risk for CATE}


Given the observations $\{(x_i, d_i, y_{i})\}^n_{i=1}$, we consider an {\em estimator} which returns parameter ${\theta}\in\bbH$. Then, we \textit{predict} the CATE given covariate $x\in\mathbb{H}$ as $x^\top\theta$.
We measure the predictive performance of this estimator by using the excess risk.

  \begin{definition}[Excess risk]
  \label{def:excess}
  The excess risk of the estimator is defined as
    \[
      R\big(\theta\big) : = \mathbb{E}_{x,y}\Big[\big(\tilde{y} - x^\top \theta \big)^2 - \big(\tilde{y} - x^\top \theta^* \big)^2\Big],
    \]
where $\tilde{y} = y_{1} - y_{0}$, $\theta^* = \theta^*_1 - \theta^*_0$, and $\Expect_{x,y}$ denotes the conditional expectation given
  all random quantities other than $x,y_1, y_0$ (in this case, given the
  estimate $\theta$).
\end{definition}
This risk is used to predict the difference between the two groups to differentiate them from the treatment.
In the ordinary regression case, such as the excess risk defined in \citet{Bartlett2020}, the differences, $y_{1} - y_{0}$ and $\theta^*_1 - \theta^*_0$, are not included.

\subsection{Assumption}

To study the prediction risk, we need to make regular assumptions. 

\begin{assumption}[Basic] \label{asmp:basic}
   The following conditions hold:
   \begin{description}
   \setlength{\parskip}{0cm}
  \setlength{\itemsep}{0cm}
   	\item[1]\label{assumption:meanzero}
    $x$ and $y_a$ are mean zero for $a \in \{0,1\}$;
    \item[2]\label{assumption:sub-Gaussiandata}
    there exists an $\mathbb{H}$-valued random element $z$ which is conditionally $\sigma_x^2$-sub-Gaussian with $\sigma_x$, which means $\E[\exp(\lambda^\top z)] \leq \exp(\sigma_x^2\|\lambda\|^2/2)$ for all $\lambda \in \bbH$, and it satisfies $x = \Sigma^{1/2} z$. 
   	\item[3]
   	For $a \in \{0,1\}$, $\varepsilon_a$ is sub-Gaussian and has positive conditional variance, that is, there exist $\sigma_y^2, \sigma^2 > 0$ such that $\E[\exp(\lambda\varepsilon_a)|x] \leq \exp(\sigma_y^2\lambda^2/2)$ for any $\lambda \in \mathbbm{R}$ and $\Expect[\varepsilon_a^2|x] \geq \sigma^2$,
    \item[4] \label{assumption:full_rank}
  almost surely, for each $a \in \{0,1\}$, the projection of the
  data $X_a$ on the space orthogonal to any eigenvector of $\Sigma_a$
  spans a space of dimension $n$.
   	\end{description}
  \end{assumption}	
Assumption~\ref{asmp:basic} is a common setting in linear regression.
For example, Assumption~\ref{asmp:basic} is satisfied if $x$ and $y_a$ are jointly Gaussian with zero mean and $\rank(\Sigma)>n$ \citep{Bartlett2020}.

\begin{assumption}[Unconfoundedness] \label{asmp:unconfounded}
The treatment assignment $d$ is independent of the potential outcomes for $\{y_{1}, y_{0}\}$ conditional on $x$:
\[\{y_{1}, y_{0}\}\indep ~d \ |\ x.\]
\end{assumption}
Assumption \ref{asmp:unconfounded} expresses a natural setting wherein the assignment is independent of the output conditioned on the covariates.
This is the standard approach in treatment effect prediction \citep{Rosenbaum1983}.

\begin{assumption}[Overlap of assignment support] \label{asmp:coherent}
For some $0 < \varphi < 0.5$ and all $x\in\mathbb{H}$, 
\begin{align*}
    \varphi < p(d = 1| x ) < 1 - \varphi.
\end{align*}
\end{assumption}
Assumption~\ref{asmp:coherent} allows us to avoid overlap in treatment assignments.
The situation where no covariates are selected at all can also be avoided.
By this assumption, we can guarantee the identifiability of the assignment and the parameters \citep{imbens_rubin_2015}.

\section{Predictors with Interpolation}
We can construct specific method to predict the CATE $\tau^*(x)$ with linear models in several ways. In this study, we consider the following two prediction methods: the T-learner and IPW-learner.
\begin{description}
\item[The T-learner:] This method consists of a two-step procedure: in the first stage, we estimate the parameters of linear regression models for  $\mathbb{E}[y_1| x]$ and  $\mathbb{E}[y_0| x]$, separately; then, in the second stage, we predict the CATE by the difference of the two estimators.
\item[The IPW-learner:] This approach utilizes a propensity score $p(d=1|x)$ to constructs an conditionally unbiased estimator of $\tau^*(x_i)$ by using $y_i$ and $d_i$; 
then, constructs a predictor by regressing the unbiased estimator on the covariates $x$.
\end{description}

\subsection{The T-learner} 
In the first stage, we estimate $\theta_a^*$ for each $a \in \{0,1\}$ separately.
We consider an interpolating estimator, which can fit the data perfectly when the dimension of $x$ is larger than the sample size $n$; that is an interpolating estimator $\hat{\theta}_a$ satisfies $X_a\hat{\theta}_a=\bs{y}_a$ when $\mathbb{H} = \mathbb{R}^p $, where $p > n$. 
Specifically, we define the interpolating estimator for each $a \in \{0,1\}$ as
  \begin{align*}
    \hat\theta_a
      & =\arg\min_\theta\left\{\left\|\theta\right\|^2:
        X^\top_a X_a\theta = X^\top_a \bs{y}_a \right\}  = \left(X^\top_a X_a\right)^\dagger X^\top_a \bs{y}_a  = X^\top_a\left(X_aX^\top_a\right)^\dagger\bs{y}_a,
  \end{align*}
where $\left(X^\top_a X_a\right)^\dagger$ denotes the pseudoinverse of the
bounded linear operator $X^\top_a X_a$ (for infinite-dimensional $\bbH$,
the existence of the pseudoinverse is guaranteed because $X^\top_a X_a$ is
bounded and has a closed range; see~\cite{Desoer1963}). When $\bbH$ has dimension $p$ with $p<n$ and $X_a$ has rank $p$, there is a unique
solution to the normal equations. 
On the contrary,
the condition \ref{assumption:full_rank} in Assumption \ref{asmp:basic}
implies that we can find many solutions $\theta\in\bbH$ to the normal
equations that achieve $X_a\theta=y_a$.  
Hence, in this case, the minimum norm solution is given by
    $\hat\theta_a  = X^\top_a\left(X_aX^\top_a\right)^{-1}\bs{y}_a$.

In the second stage, we define a difference of the above estimators as $\hat{\theta}^{\mathrm{T\mathchar`-learner}} := \hat{\theta}_1 - \hat{\theta}_0$.
Then, we predict the CATE $\tau^*(x)$ by
\begin{align}
    \hat{\tau}^{\mathrm{T\mathchar`-learner}}(x) = x^\top \hat{\theta}^{\mathrm{T\mathchar`-learner}}= x^\top(\hat\theta_1 - \hat\theta_0). \label{def:predictor}
\end{align}
This approach of taking the difference between separate estimators from different treatment groups is a commonly considered method in causal inference without overparameterization.  This approach is named "T"-learner after the Two separate models used.

\subsection{The IPW-learner with Known Propensity Score} 

We utilize an approach to correct shift of distributions of the observed and population covariates.
Suppose that the propensity score $p(d=1| x)$ is known, which follows the setting of \citet{Horvitz1952,Austin2011}, we define a corrected response variable as
\begin{align}
    \hat{y}_i = \frac{\mathbbm{1}[d_i = 1]y}{p(d=1| x_i)} - \frac{\mathbbm{1}[d_i = 0]y_i}{1 - p(d=1| x_i)}. \label{def:yhat}
\end{align}
This variable $\hat{y}_i$ has the following property:
\begin{lemma} \label{lem:unbiased_yhat}
It is an unbiased estimator for $\tau^*(x_i)$, that is,  $\mathbb{E}\left[ \hat{y}_i | x_i \right] = \tau^*(x_i)$.
\end{lemma}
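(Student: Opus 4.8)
The plan is to compute the conditional expectation $\mathbb{E}[\hat{y}_i \mid x_i]$ directly by splitting it into its two inverse-probability-weighted pieces and evaluating each separately. First I would use linearity of conditional expectation on the definition \eqref{def:yhat} to write $\mathbb{E}[\hat{y}_i \mid x_i]$ as the difference of $\mathbb{E}\!\left[\frac{\mathbbm{1}[d_i = 1]y_i}{p(d=1\mid x_i)}\,\Big|\,x_i\right]$ and $\mathbb{E}\!\left[\frac{\mathbbm{1}[d_i = 0]y_i}{1 - p(d=1\mid x_i)}\,\Big|\,x_i\right]$. Since $p(d=1\mid x_i)$ is $x_i$-measurable and, by the overlap condition (Assumption~\ref{asmp:coherent}), bounded into $(\varphi, 1-\varphi)$, each denominator is a fixed nonzero constant given $x_i$ and may be pulled outside the conditional expectation, so the ratios are well-defined and integrable.

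Next I would invoke the consistency relation $y_i = \mathbbm{1}[d_i = 1]y_{1,i} + \mathbbm{1}[d_i = 0]y_{0,i}$, together with $\mathbbm{1}[d_i = 1]^2 = \mathbbm{1}[d_i = 1]$ and $\mathbbm{1}[d_i=1]\mathbbm{1}[d_i=0]=0$, to replace $\mathbbm{1}[d_i = 1]y_i$ by $\mathbbm{1}[d_i = 1]y_{1,i}$ in the first term, and analogously $\mathbbm{1}[d_i = 0]y_i$ by $\mathbbm{1}[d_i = 0]y_{0,i}$ in the second. The key step is then to apply unconfoundedness (Assumption~\ref{asmp:unconfounded}): conditional on $x_i$, the treatment indicator $\mathbbm{1}[d_i = 1]$ is independent of the potential outcome $y_{1,i}$, so the expectation factorizes as $\mathbb{E}[\mathbbm{1}[d_i = 1]\mid x_i]\,\mathbb{E}[y_{1,i}\mid x_i] = p(d=1\mid x_i)\,\mathbb{E}[y_1\mid x_i]$. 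Dividing by $p(d=1\mid x_i)$ leaves exactly $\mathbb{E}[y_1\mid x_i]$, and the symmetric argument for the control arm gives $\mathbb{E}[y_0\mid x_i]$.

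Combining the two pieces yields $\mathbb{E}[\hat{y}_i \mid x_i] = \mathbb{E}[y_1\mid x_i] - \mathbb{E}[y_0\mid x_i] = \tau^*(x_i)$, which is the claim. I do not expect a genuine obstacle here: the only subtlety is bookkeeping the conditioning carefully so that unconfoundedness can be used to decouple the treatment indicator from the potential outcome, while the overlap assumption is precisely what guarantees the weights are finite so that every conditional expectation above is well-defined.
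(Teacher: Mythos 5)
Your proposal is correct and follows essentially the same route as the paper's proof: split $\hat{y}_i$ via linearity, use the consistency relation $y_i = \mathbbm{1}[d_i=1]y_{1,i} + \mathbbm{1}[d_i=0]y_{0,i}$ to substitute the potential outcomes, factorize via unconfoundedness (Assumption~\ref{asmp:unconfounded}), and cancel the propensity scores. Your additional remarks on overlap (Assumption~\ref{asmp:coherent}) guaranteeing well-defined, integrable weights make explicit a point the paper leaves implicit, but the argument is the same.
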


Then, we consider a regression problem with the corrected variable $\hat{y}_i$.
We define a vector $\hat{\bm{y}} =(\hat{y}_1,...,\hat{y}_n)$.
Then, we consider regressing $\hat{\bm{y}}$ on $X = X_1 + X_0$. 
We define an interpolating estimator using $(x_i, \hat{y}_i)$ as
  \begin{align*}
    \hat\theta^{\mathrm{IPW\mathchar`-learner}}
      & =\arg\min_\theta\left\{\left\|\theta\right\|^2:
        X^\top X\theta = X^\top \hat{\bs{y}} \right\}  = \left(X^\top X\right)^\dagger X^\top \hat{\bs{y}}  = X^\top\left(XX^\top\right)^{-1}\hat{\bs{y}}.
  \end{align*}
Then, we predict $\tau^*(x)$ by
\begin{align}
    \hat{\tau}^{\mathrm{IPW\mathchar`-learner}}(x) = x^\top \hat\theta^{\mathrm{IPW\mathchar`-learner}}. \label{def:IPW}
\end{align}


\section{Excess Risk Bounds and Benign Overfitting in the T-learner}
\label{sec:excess}

\subsection{Effective Rank and Related Notion}
To develop an upper bound of the risk of the T-learner, we define a notion of the effective rank to describe the benign overfitting, by following \cite{Bartlett2020}.

\begin{definition}[Effective Ranks]\label{def:ranks}
For a covariance operator $\Sigma$,
define $\lambda_i=\mu_i(\Sigma)$ for $i=1,2,\ldots$.
If $\sum_{i=1}^\infty\lambda_i<\infty$ and $\lambda_{k+1}>0$
for $k\ge 0$, define
  \begin{align*}
    r_k(\Sigma) & = \frac{\sum_{i>k}\lambda_i}{\lambda_{k+1}}, \quad \mbox{and} \quad 
    R_k(\Sigma)  = \frac{\left(\sum_{i>k}\lambda_i\right)^2}
        {\sum_{i>k}\lambda_i^2}.
  \end{align*}
\end{definition}
The effective rank is a measure of the complexity of covariances, using a tail of eigenvalues of the covariance matrix. 
It is used in the analysis of random matrices, such as in \cite{Koltchinskii2017}, to represent the benign overfitting of ordinary linear regression. 

Further, with a given $n \in \mathbb{N}$ and a constant $b > 0$, we define the following number of eigenvalues of the covariance operator $\Sigma$ which has a sufficiently large volume:
   \[
     k^* = k_n^* = \min\left\{k\ge 0: r_k(\Sigma)\ge bn\right\},
   \]
where the minimum of the empty set is defined as $\infty$. 
This notation is a continuation of the one used in \cite{Bartlett2020}.
The constant $b$ will be specified in theorems.

Also, with a given $n \in \mathbb{N}, \delta \in (0,1)$ and a covariance operator $\Sigma$, we define the following terms
\begin{align}
    \mathcal{B}_{n,\delta}(\Sigma) = \left\|\Sigma\right\|\max\left\{\sqrt{\frac{r_0(\Sigma)}{n}}, \frac{r_0(\Sigma)}{n}, \frac{\log(\delta^{-1})}{n} \right\}, \mbox{~and~} \mathcal{V}_n(\Sigma) = \sigma^2\left(\frac{k^*}{n}
         + \frac{n}{R_{k^*}(\Sigma)}\right). \label{def:bias_variance}
\end{align}
These terms are used to describe the excess risk. 
In general, while $\mathcal{B}_{n,\delta}(\Sigma)$ corresponds to a bias, $\mathcal{V}_{n}(\Sigma)$ relates to the variance of the prediction. 
How the aforementioned describe the error is explained by the following theorem.

\subsection{Upper Bound}
We derive our main result on the excess risk of the predictor.
At the first step, we derive an upper bound of the risk, which is decomposed into bias and variance related terms.
For $a \in \{0,1\}$, we define $P_a = (X_a X^\top_a )^{-1} X_a$, and  a projection operator to a complement space of $X_a$ as 
$\Pi_a^\bot := I - X^\top_a(X_aX^\top_a)^{-1}X_a$.

\begin{lemma}[Basic decomposition]The following inequality holds: \label{lem:bv_simple}
\begin{align*}
R\big(\hat{\theta}^{\mathrm{T\mathchar`-learner}}\big)&\leq  \sum_{a \in \{0,1\}} \left( 2\theta^{*\top}_a B_a \theta^*_a + 2\bm{\varepsilon}^\top_a C_a \bm{\varepsilon}_a + 2\theta^{*\top}_a D_{a,1-a} \bm \varepsilon_{1-a} \right) - 2\theta^{*\top}_1B_{1,0}\theta^*_0   - 2\bm \varepsilon^\top_1F_{1,0}\bm \varepsilon_0,
\end{align*}
where $B_a = \Pi_a^\bot \Sigma \Pi_a^\bot$, $B_{1, 0} = \Pi_1^\bot \Sigma \Pi_0^\bot$, $C_a = P_a \Sigma P_a^\top$, $D_{a,1-a} = \Pi_a^\bot \Sigma P_{1-a}^\top$, and $F_{1,0} = P_1 \Sigma P_0$.
\end{lemma}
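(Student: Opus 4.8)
The plan is to reduce the excess risk to a single $\Sigma$-weighted quadratic form in the estimation error and then expand it using the additive structure $\hat\theta^{\mathrm{T\mathchar`-learner}}=\hat\theta_1-\hat\theta_0$. First I would establish that, for any $\theta$ fixed given the training data,
\[
  R(\theta) = (\theta-\theta^*)^\top\Sigma(\theta-\theta^*).
\]
Writing $\tilde y = (\theta^*)^\top x + \eta$ with $\eta := \varepsilon_1 - \varepsilon_0$, the linear model gives $\tilde y - x^\top\theta^* = \eta$ and $\tilde y - x^\top\theta = \eta - x^\top(\theta-\theta^*)$, so the integrand in \Cref{def:excess} equals $-2\eta\,x^\top(\theta-\theta^*) + \big(x^\top(\theta-\theta^*)\big)^2$. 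Since $\Expect_{x,y}$ treats $\theta$ as fixed and $\Expect[\varepsilon_a\mid x]=0$, the linear term vanishes and $\Expect[(x^\top(\theta-\theta^*))^2]=(\theta-\theta^*)^\top\Sigma(\theta-\theta^*)$. It is precisely the choice $\theta^* = \theta_1^* - \theta_0^*$ that makes $\tilde y - x^\top\theta^* = \varepsilon_1-\varepsilon_0$ pure noise, which is what forces the cross term to disappear.

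Next I would compute the estimation error of each single-group interpolator. Because the observed response coincides with the relevant potential outcome, $\bm y_a = X_a\theta_a^* + \bm\varepsilon_a$; substituting this into $\hat\theta_a = X_a^\top(X_aX_a^\top)^{-1}\bm y_a$ (well defined by the full-rank condition in \Cref{asmp:basic}) and using $X_a^\top(X_aX_a^\top)^{-1}X_a = I-\Pi_a^\bot$ together with $X_a^\top(X_aX_a^\top)^{-1}=P_a^\top$ gives
\[
  \Delta_a := \hat\theta_a - \theta_a^* = -\Pi_a^\bot\theta_a^* + P_a^\top\bm\varepsilon_a.
\]
The T-learner error is then $\Delta := \hat\theta^{\mathrm{T\mathchar`-learner}}-\theta^* = \Delta_1 - \Delta_0$, and the first step yields
\[
  R(\hat\theta^{\mathrm{T\mathchar`-learner}}) = \Delta^\top\Sigma\Delta = \Delta_1^\top\Sigma\Delta_1 + \Delta_0^\top\Sigma\Delta_0 - 2\,\Delta_1^\top\Sigma\Delta_0.
\]

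The final step treats the same-group and cross-group pieces asymmetrically. For each $a$, applying the elementary bound $(u+v)^\top\Sigma(u+v)\le 2u^\top\Sigma u + 2v^\top\Sigma v$ (valid since $\Sigma$ is positive semidefinite) to $\Delta_a = -\Pi_a^\bot\theta_a^* + P_a^\top\bm\varepsilon_a$ discards the within-group bias--variance cross term built from $\Pi_a^\bot\Sigma P_a^\top$ and produces $\Delta_a^\top\Sigma\Delta_a \le 2\theta_a^{*\top}B_a\theta_a^* + 2\bm\varepsilon_a^\top C_a\bm\varepsilon_a$; this supplies the diagonal terms and the factor of $2$. The cross-group term $-2\Delta_1^\top\Sigma\Delta_0$ I would instead expand \emph{exactly} into its four summands, which, using the symmetry of $\Pi_a^\bot$ and $\Sigma$ and the fact that a scalar equals its transpose, become $-2\theta_1^{*\top}B_{1,0}\theta_0^*$, $+2\theta_1^{*\top}D_{1,0}\bm\varepsilon_0$, $+2\theta_0^{*\top}D_{0,1}\bm\varepsilon_1$, and $-2\bm\varepsilon_1^\top F_{1,0}\bm\varepsilon_0$ (the last from $\bm\varepsilon_1^\top P_1\Sigma P_0^\top\bm\varepsilon_0$). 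Adding the two contributions gives the stated inequality. The step that must be handled with care --- and where a careless, uniformly applied bound would fail --- is to invoke the factor-$2$ inequality only on the same-group quadratics while keeping $-2\Delta_1^\top\Sigma\Delta_0$ exact: bounding the cross-group term as well would destroy its signed structure ($-B_{1,0}$ and $-F_{1,0}$) and would not reproduce the claim. Everything else is bookkeeping of signs and of which of $B,C,D,F$ each expanded term matches.
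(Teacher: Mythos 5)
Your proof is correct and takes essentially the same route as the paper's: reduce the excess risk to $(\hat\theta-\theta^*)^\top\Sigma(\hat\theta-\theta^*)$, split it into the two same-group quadratics minus twice the cross term, apply the factor-$2$ bound only to the same-group pieces, and expand the cross term exactly into the $B_{1,0}$, $D_{a,1-a}$, and $F_{1,0}$ terms. The only cosmetic difference is that the paper obtains the same-group bounds by citing Lemma~2 of \citet{Bartlett2020}, whereas you prove them inline via the elementary inequality $(u+v)^\top\Sigma(u+v)\le 2u^\top\Sigma u + 2v^\top\Sigma v$ for positive semidefinite $\Sigma$.
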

The terms $2\theta^{*\top}_a B_a \theta^*_a $ and $\bm{\varepsilon}^\top_a C_a \bm{\varepsilon}_a$ mean bias and variance of the error by $\hat{\theta}_a$ for each $a \in \{0,1\}$.
In contrast, the rest of the terms $\theta^{*\top}_a D_{a,1-a} \bm \varepsilon_{1-a}$, $\theta^{*\top}_1B_{1,0}\theta^*_0$, and $\bm \varepsilon^\top_1F_{1,0}\bm \varepsilon_0$ denotes the cross effects between $\hat{\theta}_1$ and $\hat{\theta}_0$.
These items are specific to our setting.

By using the decomposition and the effective rank and notations, we present the following upper bound on  the excess risk in the following theorem:
\begin{theorem}[Upper bounds for the T-leaner in overparameterized moodels]
\label{thm:main}
For any $\sigma_x$ there are  $b,c, c_1>1$ for which the following holds. Consider a linear regression problem from
Section~\ref{sec:liner_regression_model} and suppose that Assumption~ \ref{asmp:basic}, \ref{asmp:unconfounded}, and \ref{asmp:coherent} hold.
Suppose $\delta<1$ with $\log(1/\delta)<n/c$.
If $k^* < n/c_1$, then the excess risk (Definition \ref{def:excess}) of the predictor in \eqref{def:predictor} satisfies with probability at least $1-\delta$,
\begin{align*}
    R(\hat{\theta}^{\mathrm{T\mathchar`-learner}}) &\leq \sum_{a\in\{1,0\}}\left( c\|\theta^*_a\|^2\mathcal{B}_{n,\delta}(\Sigma_a)+ \left\|\Sigma - \zeta^*_a\Sigma_a\right\|\|\theta^*_a\|^2\right) + c \|\theta_1^*\| \|\theta_0^*\| \mathcal{B}_{n,\delta}(\Sigma) \\
    & \quad + c \log (\delta^{-1}) \left\{ \mathcal{V}_n(\Sigma) + (\|\theta_1^*\| + \|\theta_0^*\|)\sqrt{\mathcal{V}_n(\Sigma)} \right\},\nonumber
\end{align*}
   where $\zeta^*_a = \argmin{\zeta\in\mathbb{R}^+} \|\Sigma - \zeta \Sigma_a\|$. 
\end{theorem}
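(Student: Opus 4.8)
The plan is to begin from the basic decomposition of Lemma~\ref{lem:bv_simple} and to control each of its five kinds of terms by importing the matched-covariance bias and variance bounds of \citet{Bartlett2020} into the present selection setting. The structural difficulty is that every projection and pseudo-inverse in the decomposition ($\Pi_a^\bot$ and $P_a$) is built from the \emph{selected} design $X_a$, whose intrinsic covariance is $\Sigma_a=\Expect[\mathbbm{1}[d=a]xx^\top]$, whereas the excess risk of Definition~\ref{def:excess} is measured against the population $\Sigma$. I would reconcile this mismatch with two devices. First, the overlap Assumption~\ref{asmp:coherent} yields $\varphi\,\Sigma\preceq\Sigma_a\preceq(1-\varphi)\Sigma$, hence by the min--max principle $\varphi\mu_i(\Sigma)\le\mu_i(\Sigma_a)\le(1-\varphi)\mu_i(\Sigma)$; consequently the effective ranks of Definition~\ref{def:ranks}, the index $k^*$, and the quantities $\mathcal{B}_{n,\delta}(\Sigma_a)$ and $\mathcal{V}_n(\Sigma_a)$ all agree with their $\Sigma$-counterparts up to constants depending only on $\varphi$. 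Second, the optimal scaling $\zeta_a^*=\argmin{\zeta\in\mathbb{R}^+}\|\Sigma-\zeta\Sigma_a\|$ splits $\Sigma=\zeta_a^*\Sigma_a+(\Sigma-\zeta_a^*\Sigma_a)$ into a part proportional to the matched covariance and a residual of small operator norm; note $\zeta_a^*=\Theta(1)$ because $\Sigma$ and $\Sigma_a$ are spectrally equivalent.

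For the diagonal bias $\theta^{*\top}_a B_a\theta^*_a=\|\Sigma^{1/2}\Pi_a^\bot\theta^*_a\|^2$ I would insert this splitting: the piece $\zeta_a^*\,\theta^{*\top}_a\Pi_a^\bot\Sigma_a\Pi_a^\bot\theta^*_a$ is exactly a Bartlett bias for the matched pair $(X_a,\Sigma_a)$ and is therefore at most $c\|\theta^*_a\|^2\mathcal{B}_{n,\delta}(\Sigma_a)$, while the residual is at most $\|\Sigma-\zeta_a^*\Sigma_a\|\,\|\theta^*_a\|^2$ since $\Pi_a^\bot$ is a contraction; this reproduces the first two summands. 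For the diagonal variance $\bm{\varepsilon}^\top_a C_a\bm{\varepsilon}_a=\|\Sigma^{1/2}P_a^\top\bm{\varepsilon}_a\|^2$ I would first dominate the middle covariance by $\Sigma\preceq\varphi^{-1}\Sigma_a$, so that $\tr(C_a)\le\varphi^{-1}\tr(P_a\Sigma_a P_a^\top)\le c\,\mathcal{V}_n(\Sigma_a)\asymp c\,\mathcal{V}_n(\Sigma)$ by the Bartlett variance bound and the overlap transfer, and then apply a sub-Gaussian quadratic-form (Hanson--Wright-type) inequality to replace $\bm{\varepsilon}^\top_a C_a\bm{\varepsilon}_a$ by its conditional mean $\sigma^2\tr(C_a)$ at the cost of a $\log(\delta^{-1})$ factor; the hypothesis $k^*<n/c_1$ is what keeps $\mathcal{V}_n$ small.

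The three cross terms are new relative to \citet{Bartlett2020}, and I would handle them uniformly by Cauchy--Schwarz in the $\Sigma$-inner product. Writing $u_a=\Sigma^{1/2}\Pi_a^\bot\theta^*_a$ and $v_a=\Sigma^{1/2}P_a^\top\bm{\varepsilon}_a$, so that $\|u_a\|^2$ and $\|v_a\|^2$ are exactly the diagonal bias and variance bounded above, the cross bias is $\langle u_1,u_0\rangle$, the cross bias--variance is $\langle u_a,v_{1-a}\rangle$, and the cross variance is $\langle v_1,v_0\rangle$. A weighted Young inequality turns $\langle u_1,u_0\rangle$ into $\|\theta^*_1\|\|\theta^*_0\|$ times the diagonal bias factors, which via the overlap transfer $\mathcal{B}_{n,\delta}(\Sigma_a)\asymp\mathcal{B}_{n,\delta}(\Sigma)$ yields the $c\|\theta^*_1\|\|\theta^*_0\|\mathcal{B}_{n,\delta}(\Sigma)$ contribution, the leftover mismatch pieces being of the order of the diagonal terms $\|\Sigma-\zeta^*_a\Sigma_a\|\|\theta^*_a\|^2$. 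For the bias--variance cross term I would bound the bias factor crudely by $\|u_a\|\le\|\Sigma\|^{1/2}\|\theta^*_a\|$ and the variance factor by $\sqrt{c\log(\delta^{-1})\mathcal{V}_n(\Sigma)}$, giving $c\log(\delta^{-1})(\|\theta^*_1\|+\|\theta^*_0\|)\sqrt{\mathcal{V}_n(\Sigma)}$ after summation; the cross variance $\langle v_1,v_0\rangle\le\|v_1\|\|v_0\|$ folds directly into $c\log(\delta^{-1})\mathcal{V}_n(\Sigma)$, with the independence of $\bm{\varepsilon}_1,\bm{\varepsilon}_0$ ensuring no inflation of constants.

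I expect the main obstacle to lie in the faithful transplantation described above, not in the cross terms. Bartlett's high-probability bias and variance estimates are stated for a single covariance matching the design, so one must re-run their concentration arguments for $X_a$ against its own covariance $\Sigma_a$ — invoking the full-rank condition~\ref{assumption:full_rank} in Assumption~\ref{asmp:basic} and the smallness of $k^*$ relative to $\Sigma_a$ — and then verify that the effective ranks and index $k^*$ computed from $\Sigma_a$ inherit the hypothesis $k^*<n/c_1$ from $\Sigma$ after absorbing a $\varphi$-factor into $c_1$. One must also confirm that the residual norm $\|\Sigma-\zeta_a^*\Sigma_a\|$ enters only additively and does not interact with the concentration, and that the overlap-induced constant factors are uniform in the eigenvalue index. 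Finally I would take a union bound over the two treatment arms and over the quadratic-form concentration events to assemble the overall probability-$(1-\delta)$ guarantee.
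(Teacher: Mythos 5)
Your skeleton matches the paper's proof in its main lines. The paper also starts from the decomposition of Lemma~\ref{lemma:bv}, bounds the diagonal bias terms exactly as you propose --- splitting $\Sigma=\zeta_a^*\Sigma_a+(\Sigma-\zeta_a^*\Sigma_a)$, using that $\Pi_a^\bot$ annihilates $X_a^\top$ and is a contraction, and applying covariance-operator concentration to the matched part (Lemma~\ref{lemma:bias_1_0}) --- and it assembles everything with a union bound at the end. Your treatment of the cross terms is genuinely different and in places cleaner: the paper bounds $\theta_1^{*\top}B_{1,0}\theta_0^*$ via the exact algebraic identity $X_0^\top X_1=0$ together with concentration of $\tfrac1n X^\top X$ around $\Sigma$ (Lemma~\ref{lemma:bias_10}), handles the $D,E$ terms with a dedicated sub-Gaussian linear-form inequality (Lemma~\ref{lem:concent_ineq1}) plus an $\ell_2$-norm bound (Lemma~\ref{lemma:l2upper}), and disposes of the $F$ term by asserting it vanishes (Lemma~\ref{lem:f}); your uniform Cauchy--Schwarz/Young treatment of $\langle u_1,u_0\rangle$, $\langle u_a,v_{1-a}\rangle$, $\langle v_1,v_0\rangle$ recovers all of these contributions within the stated bound while avoiding those extra lemmas, and your overlap sandwich $\varphi\Sigma\preceq\Sigma_a\preceq(1-\varphi)\Sigma$ with the resulting equivalences of $r_k$, $R_k$, $k^*$ is correct as far as it goes.

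The genuine gap is in the variance terms. You reduce $\tr(C_a)$ to $\varphi^{-1}\tr(P_a\Sigma_a P_a^\top)$ and then invoke ``the Bartlett variance bound for the matched pair $(X_a,\Sigma_a)$'' plus the transfer $\mathcal{V}_n(\Sigma_a)\asymp\mathcal{V}_n(\Sigma)$. But matching the covariance does not make $X_a$ a design to which \citet{Bartlett2020} applies as a black box: its rows are $\mathbbm{1}[d_i=a]x_i$, which under selection bias are not mean zero (in general $\Expect[p(d=a|x)\,x]\neq 0$), and the whitened coordinate vectors are mutually dependent because the single indicator $\mathbbm{1}[d_i=a]$ --- itself a function of $x_i$ --- multiplies every coordinate. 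Bartlett's trace analysis rests on concentration statements (their Lemmas~3--5 and Corollary~S.7) that require centered, independent sub-Gaussian coordinates, so they cannot be cited for the selected design regardless of which covariance you match against. This is exactly where the paper spends most of its appendix: Lemma~\ref{lemma::representation_through_z} establishes a mixture-type moment-generating-function bound for the vectors $z_{a,i}$, Lemma~\ref{lem:lem4_bartlett} re-proves the eigenvalue concentration of $G_a$ by recentering and rescaling sub-exponential quadratic forms, and Corollary~\ref{cor::cor norm of projection} extends Bartlett's projection bound to non-centered vectors; these feed into Lemma~\ref{lemma:traceupper}, which moreover keeps the population eigenvalues $\lambda_i=\mu_i(\Sigma)$ throughout, so no $\Sigma_a$-to-$\Sigma$ transfer is ever needed for the variance --- consistent with the theorem's claim that the variance term depends only on $\Sigma$. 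You do flag ``re-running the concentration arguments'' as the main obstacle, but you diagnose it as a covariance-mismatch issue that your overlap transfer and constant absorption would resolve; it is not. The missing idea is how to restore the concentration lemmas for non-centered, indicator-coupled designs, and the proposal as written does not supply it.
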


We obtain the following three implications from this result. 
(i) Regarding the bias terms, all the covariances $\Sigma, \Sigma_1$, and $\Sigma_2$ affect the upper bound. 
This means that if, for example, the treatment assignment significantly increases the effective rank of the covariance $\Sigma_1$ for the particular treatment group, it prevents this upper bound from being reduced to zero. 
(ii) The variance terms depend only on the covariance $\Sigma$ of the population distribution and are unaffected by the covariance of each treatment group. This result contrasts with that of bias. 
(iii) Importantly, deviations of the covariance $\left\|\Sigma - \zeta^*_a\Sigma_a\right\|$ appear in the bound. 
This means that if the covariance $\Sigma_a$ of each treatment group $a \in \{0,1\}$ has an eigenspace that is largely different from the covariance $\Sigma$ of the population distribution, a non-negligible error will occur. This item has important implications for predicting treatment effects.

\subsubsection{Difference from Ordinary Linear Regression in \texorpdfstring{\citet{Bartlett2020}}{TEXT}}

For the ordinary regression without treatment effects, \citet{Bartlett2020} shows that an interpolating estimator has the following upper bound on the excess risk:
\begin{align*}
     c  \|\theta^*\|^2 \mathcal{B}_{n,\delta}(\Sigma) + c \log (\delta^{-1}) \mathcal{V}_n(\Sigma),
\end{align*}
where $\theta^* \in \mathbb{H}$ is a true parameter for the linear regression model.
Our derived upper bound differs from that of \citet{Bartlett2020} in several aspects. First, there is a deviation of the covariance, $\left\|\Sigma - \zeta^*_a\Sigma_a\right\|$. This term is due to the change in the distribution by the treatment assignment. Second, because of the cross terms from the different control groups; that is, $\hat{\theta}_1$, $\hat{\theta}_0$, $\epsilon_1$, and $\epsilon_0$, the terms $c\left\|\theta^*_1\right\|\left\|\theta^*_0\right\|\mathcal{B}_{n,\delta}(\Sigma)$ and $c(\left\|\theta^{*}_1\right\| + \left\|\theta^{*}_0\right\|)\log(1/\delta)\sqrt{\mathcal{V}_n(\Sigma)}$ appear. 
Especially, 
the first difference is critical because it restricts the problem instances where benign overfitting occurs.

This difference indicates that, unlike the ordinary linear regression, the CATE prediction requires additional conditions in the overparameterization scheme.
In the next section, we discuss the conditions for benign overfitting.

\begin{remark}[Excess risk lower bounds under overparameterized models]
For the readers' reference, we also remark the lower bound of the excess risk.
For any $\sigma_x$ there are  $b,c, c_1>1$ for which the following holds. Consider a linear regression problem from
Section~\ref{sec:liner_regression_model} and suppose that Assumption~\ref{asmp:basic}, \ref{asmp:unconfounded}, and \ref{asmp:coherent} hold.
 If $k^* \geq n/c_1$, then
 $\Expect R(\hat\theta)\ge \sigma^2/c$.
 Otherwise,
\begin{align*}
    \Expect R(\hat\theta^{\mathrm{T\mathchar`-learner}})\ge - c \|\theta_1^*\| \|\theta_0^*\| \bar{\mathcal{B}}_n(\Sigma) + c^{-1} \mathcal{V}_n(\Sigma),
\end{align*}
where $\bar{\mathcal{B}}_n(\Sigma) = \int_0^1 {\mathcal{B}}_{n,\delta}(\Sigma)\mathrm{d}\delta$.
Unlike the lower bound shown in \citet{Bartlett2020}, this lower bound does not give much information owing to the negative term $- c \|\theta_1^*\| \|\theta_0^*\| \bar{\mathcal{B}}_n(\Sigma)$. 
\end{remark}

\subsection{Conditions for Benign Overfitting}
\label{sec:cond}

We discuss the conditions for the convergence of the derived upper bound to zero, which is referred to as benign overfitting.
For $\Sigma$, we consider the following two cases. In the first case, $\Sigma$ is a fixed operator between infinite-dimensional Hilbert
spaces. 
In the second case, $\Sigma$ can change with $n$. For the latter case, we denote $\Sigma$ by $\Sigma_n$ to represent the dependency (resp. $\Sigma_a$ by $\Sigma_{a,n}$ for $a\in\{0,1\}$). Without loss of generality, in the following discussion, we may assume tbat $\|\Sigma\| = 1$. In the latter case, for an ordinary linear regression case without treatment effects, \citet{Bartlett2020} reveals that benign overfitting occurs when a sequence of covariance operators $\Sigma_n$ satisfies
 \begin{align}
\label{eq:benign_cov}
\lim_{n \rightarrow \infty} \frac{r_0(\Sigma_n)}{n}
 = \lim_{n \rightarrow \infty} \frac{k_n^*}{n}
 = \lim_{n \rightarrow \infty} \frac{n}{R_{k_n^*}(\Sigma_n)}
 = 0.
\end{align}
\citet{Bartlett2020} further analyzes the conditions for the upper bound to go to zero in their Theorem~2.
If \eqref{eq:benign_cov} holds, both $\mathcal{B}_{n,\delta}(\Sigma)$ and $\mathcal{V}_n(\Sigma)$ in \eqref{def:bias_variance} obviously converge to zero as $\delta \to 1$.

To describe the situation where the benign overfitting occurs, we summarize desirable properties of covariance operators $\Sigma$ as follow.
Here, we consider both the case (a) where $\Sigma$ is independent of $n$, and the case (b) where $\Sigma = \Sigma_n$ depends on $n$ and the dimension of $\Sigma_n$ increases with $n$.
These properties are developed by \cite{Bartlett2020}.
\begin{definition}[Benign covariance] \label{def:benign_cov}
We call a covariance matrix $\Sigma$ is \textit{benign}, if it satisfies either of the followings:
\begin{enumerate}
   \setlength{\parskip}{0cm}
  \setlength{\itemsep}{0cm}
    \item[(a)] $\Sigma$ is a fixed operator between infinite-dimensional spaces, and for some $\beta > 0$ it satisfies
    \begin{align*}
    &\mu_k(\Sigma) = k^{-\alpha} \ln^{-\beta} (k+1),
 \end{align*}
    \item[(b)] $\Sigma = \Sigma_n$ depends on $n$, and satisfies the following with $\gamma_k=\Theta(\exp(-k/\tau))$ for all $k \in \N$:
    \begin{align*}
        \mu_k(\Sigma_n) =( \gamma_k + \epsilon_n) \cdot \mathbbm{1}\{ k \leq p_n\}.
    \end{align*}
\end{enumerate}
\end{definition}
This condition concerns the covariance $\Sigma$ of the population distribution, which also evaluates the covariance $\Sigma_a$ of each treatment group $a \in \{0,1\}$.
\begin{lemma} \label{lem:benign_cov_a}
If $\Sigma$ is a benign covariance, we have $\max_{a \in \{0,1\}}r_0 (\Sigma_a) = o(n)$. 
\end{lemma}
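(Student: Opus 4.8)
The plan is to reduce the statement about the assignment-weighted covariances $\Sigma_a$ to a bound on the population covariance $\Sigma$, using the overlap condition (Assumption~\ref{asmp:coherent}) to control how much the selection can distort $\Sigma$. Overlap forces $\Sigma_a$ to be sandwiched between multiples of $\Sigma$ in the Loewner order, and since $r_0$ is a ratio of the trace to the top eigenvalue, such a two-sided comparison is exactly enough to transfer a bound on $r_0(\Sigma)$ to each $r_0(\Sigma_a)$. I would then only need to check $r_0(\Sigma)=o(n)$ for the two benign structures in Definition~\ref{def:benign_cov}.

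First I would establish the operator sandwich. Because $\Sigma_a=\Expect[p(d=a\mid x)\,xx^\top]$ and Assumption~\ref{asmp:coherent} gives $\varphi<p(d=a\mid x)<1-\varphi$ for both $a\in\{0,1\}$, testing against any $v\in\bbH$ yields $\varphi\,v^\top\Sigma v\le v^\top\Sigma_a v\le(1-\varphi)\,v^\top\Sigma v$, i.e.\ $\varphi\,\Sigma\preceq\Sigma_a\preceq(1-\varphi)\,\Sigma$. From the upper side and monotonicity of the trace I get $\tr(\Sigma_a)\le(1-\varphi)\tr(\Sigma)$, bounding the numerator of $r_0(\Sigma_a)=\tr(\Sigma_a)/\mu_1(\Sigma_a)$; from the lower side and the variational formula $\mu_1(\cdot)=\sup_{\|v\|=1}v^\top(\cdot)v$ I get $\mu_1(\Sigma_a)\ge\varphi\,\mu_1(\Sigma)>0$, bounding the denominator from below (and guaranteeing $r_0(\Sigma_a)$ is well defined). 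Combining the two gives
\[
r_0(\Sigma_a)\le\frac{1-\varphi}{\varphi}\,r_0(\Sigma),\qquad a\in\{0,1\},
\]
so that $\max_a r_0(\Sigma_a)\le\frac{1-\varphi}{\varphi}\,r_0(\Sigma)$ and it remains to bound $r_0(\Sigma)$.

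To finish I would verify $r_0(\Sigma)=o(n)$ in each benign regime. In case (a), $\Sigma$ is a fixed trace-class operator, so $r_0(\Sigma)=\tr(\Sigma)/\mu_1(\Sigma)$ is a finite constant and is trivially $o(n)$. In case (b), with $\mu_k(\Sigma_n)=(\gamma_k+\epsilon_n)\mathbbm{1}\{k\le p_n\}$ and $\gamma_k=\Theta(\exp(-k/\tau))$, the numerator is $\sum_{k=1}^{p_n}(\gamma_k+\epsilon_n)=O(1)+p_n\epsilon_n$ since the geometric sequence $\gamma_k$ is summable, while the denominator is $\mu_1(\Sigma_n)=\gamma_1+\epsilon_n=\Theta(1)$; hence $r_0(\Sigma_n)=O(1+p_n\epsilon_n)$, which is $o(n)$ because the benign scaling underlying Definition~\ref{def:benign_cov} realizes \eqref{eq:benign_cov} and in particular forces $p_n\epsilon_n=o(n)$.

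The step I expect to need the most care is getting the inequality directions right in the reduction: since the goal is an upper bound on $r_0(\Sigma_a)$, I must use the \emph{upper} side of the sandwich for the trace in the numerator and the \emph{lower} side for the leading eigenvalue in the denominator, so the two halves of the overlap assumption play opposite roles. Beyond that, the argument is routine; the case-(b) estimate is simply inherited from the parameter choices that make \eqref{eq:benign_cov} hold, and no finer control of $k^*$ or $R_{k^*}$ is needed because only $r_0$ enters this lemma.
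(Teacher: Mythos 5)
Your proof is correct, and its core idea --- using the overlap condition (Assumption~\ref{asmp:coherent}) to compare $\Sigma_a$ with $\Sigma$ in the Loewner order and then transferring the benign property of $\Sigma$ --- is the same as the paper's. But your execution differs in a way that matters. The paper proves only the one-sided domination $\Sigma - \Sigma_a \succeq 0$ (its displayed inequality even has a constant slip: overlap gives $1 - p(d=a\mid x) > \varphi$, not $\ge 1-\varphi$), and from this it can only conclude the trace bound $\tr(\Sigma_a) \le \tr(\Sigma) = o(n)$; indeed, the appendix restates the lemma as $\max_a \sum_k \mu_k(\Sigma_a) = o(n)$, which is a statement about traces, not about $r_0$. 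Since $r_0(\Sigma_a) = \tr(\Sigma_a)/\mu_1(\Sigma_a)$, the main-text statement additionally requires $\mu_1(\Sigma_a)$ to be bounded below, and the paper never addresses this. Your two-sided sandwich $\varphi\,\Sigma \preceq \Sigma_a \preceq (1-\varphi)\,\Sigma$ supplies exactly the missing half: the lower bound $\mu_1(\Sigma_a) \ge \varphi\,\mu_1(\Sigma)$ controls the denominator and yields the clean inequality $r_0(\Sigma_a) \le \frac{1-\varphi}{\varphi}\, r_0(\Sigma)$, so your argument actually proves the lemma as stated while the paper's proves a weaker (trace) version. Your explicit verification that $r_0(\Sigma) = o(n)$ in both benign regimes is also more careful than the paper's bare assertion; note, however, that in case (b) both you and the paper rely on the scaling $\epsilon_n p_n = o(n)$, which appears in Theorem~\ref{theorem:benign_eigenvalues} rather than in Definition~\ref{def:benign_cov} itself, so this reliance on \eqref{eq:benign_cov} is a shared implicit convention rather than a gap specific to your write-up.
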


In CATE prediction, in addition to the abovementioned condition, we need conditions:
\begin{assumption}[Coherent covariance] \label{asmp:coherent_eigen}
The following equality holds:
\begin{align}
\label{eq:main:thm}
    &\min_{\zeta\in\mathbb{R}^+} \left\|\Sigma - \zeta\Sigma_1\right\| = \min_{\zeta\in\mathbb{R}^+} \left\|\Sigma - \zeta\Sigma_0\right\| = 0
\end{align}
\end{assumption}
This condition is satisfied when the eigenspace of the covariance $\Sigma_a$ of each assigned group coincides with the eigenspace of the covariance $\Sigma$ of the population distribution. 
In other words, intuitively, the assignment rule must not change the data structure of the observation.

Then, combining Theorem~2 of \citet{Bartlett2020} and our new conditions \eqref{eq:main:thm}, we summarize the conditions for the benign overfitting in the following theorem.
\begin{theorem}[Benign overfitting in the T-learner]\label{theorem:benign_eigenvalues}
    Suppose assumptions in Theorem \ref{thm:main} holds.
    Also, suppose that $\Sigma$ is a benign covariance as Definition \ref{def:benign_cov}, and Assumption \ref{asmp:coherent_eigen} holds. 
    Then, 
    \begin{align*}
        R(\hat{\theta}^\mathrm{T\mathchar`-learner}) = o_\Prob(1), ~ (n \to \infty),
    \end{align*}
    if and only if $\alpha = 1$ and $ \beta > 1$ (case (a) in Definition \ref{def:benign_cov}), or $p_n=\omega(n)$ and
    $ne^{-o(n)}=\epsilon_np_n=o(n)$ (case (b) in Definition \ref{def:benign_cov}).
    
    Further, if we have $p_n=\Omega(n)$ and
    $\epsilon_np_n=ne^{-o(n)}$ in the case (b) in Definition \ref{def:benign_cov}, we obtain
    \[
        R(\hat\theta^{\mathrm{T\mathchar`-learner}})
          \!=\! O_\Prob\left(\frac{\epsilon_np_n+1}{n}
          + \frac{\ln(n/(\epsilon_np_n))}{n}
          + \max\left\{\frac{1}{n},\frac{n}{p_n}\right\}\right).
      \]
\end{theorem}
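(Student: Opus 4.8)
The plan is to start from the upper bound of \cref{thm:main} and use \cref{asmp:coherent_eigen} to kill the deviation terms, thereby reducing the analysis to the ordinary benign-overfitting characterization of \citet{Bartlett2020}. \cref{asmp:coherent_eigen} states $\min_{\zeta}\|\Sigma-\zeta\Sigma_a\|=0$, so the minimizer $\zeta^*_a$ attains $\|\Sigma-\zeta^*_a\Sigma_a\|=0$; since the operator norm vanishes, $\Sigma=\zeta^*_a\Sigma_a$ with $\zeta^*_a>0$, i.e. each group covariance is a positive scalar multiple of $\Sigma$ and in particular shares its eigenstructure. Substituting into \cref{thm:main}, the terms $\|\Sigma-\zeta^*_a\Sigma_a\|\|\theta^*_a\|^2$ drop out and, with probability at least $1-\delta$,
\begin{align*}
R(\hat\theta^{\mathrm{T\mathchar`-learner}}) \le \sum_{a\in\{1,0\}} c\|\theta^*_a\|^2\mathcal{B}_{n,\delta}(\Sigma_a) + c\|\theta^*_1\|\|\theta^*_0\|\mathcal{B}_{n,\delta}(\Sigma) + c\log(\delta^{-1})\left\{\mathcal{V}_n(\Sigma)+(\|\theta^*_1\|+\|\theta^*_0\|)\sqrt{\mathcal{V}_n(\Sigma)}\right\}.
\end{align*}
The entire bound is now governed by $\mathcal{B}_{n,\delta}(\Sigma_a)$, $\mathcal{B}_{n,\delta}(\Sigma)$ and $\mathcal{V}_n(\Sigma)$.

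For the \textbf{sufficiency} (``if'') direction I would fix any $\delta\in(0,1)$ and show each surviving term vanishes. Since $\Sigma$ is benign, \cref{lem:benign_cov_a} gives $\max_a r_0(\Sigma_a)=o(n)$, and because $\Sigma=\zeta^*_a\Sigma_a$ the factor $\|\Sigma_a\|=\zeta_a^{*-1}$ is bounded; hence $\mathcal{B}_{n,\delta}(\Sigma_a)=\|\Sigma_a\|\max\{\sqrt{r_0(\Sigma_a)/n},\,r_0(\Sigma_a)/n,\,\log(\delta^{-1})/n\}\to 0$, and the same computation with $r_0(\Sigma)/n\to 0$ handles $\mathcal{B}_{n,\delta}(\Sigma)$. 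The remaining factor $\mathcal{V}_n(\Sigma)=\sigma^2(k^*/n+n/R_{k^*}(\Sigma))$ is exactly the variance functional of Theorem~2 of \citet{Bartlett2020}, whose characterization yields $\mathcal{V}_n(\Sigma)\to 0$ precisely when $\alpha=1,\beta>1$ in case (a), or $p_n=\omega(n)$ and $ne^{-o(n)}=\epsilon_np_n=o(n)$ in case (b); then $\sqrt{\mathcal{V}_n(\Sigma)}\to 0$ as well. Since the bound holds for every fixed $\delta$ and its right-hand side is $o(1)$, the standard $o_\Prob$ argument (letting $\delta$ be arbitrarily small) gives $R(\hat\theta^{\mathrm{T\mathchar`-learner}})=o_\Prob(1)$.

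For the \textbf{necessity} (``only if'') direction I would invoke the lower bound recorded in the Remark, $\Expect R(\hat\theta^{\mathrm{T\mathchar`-learner}})\ge -c\|\theta^*_1\|\|\theta^*_0\|\bar{\mathcal{B}}_n(\Sigma)+c^{-1}\mathcal{V}_n(\Sigma)$. Benignness forces $\bar{\mathcal{B}}_n(\Sigma)=\int_0^1\mathcal{B}_{n,\delta}(\Sigma)\,\mathrm{d}\delta\to 0$, because the $r_0(\Sigma)/n$ contributions vanish and $\frac1n\int_0^1\log(\delta^{-1})\,\mathrm{d}\delta=\frac1n\to 0$; thus the negative cross term is asymptotically negligible and $\liminf_n\Expect R\ge c^{-1}\liminf_n\mathcal{V}_n(\Sigma)$. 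If the stated conditions fail, the necessity half of Theorem~2 of \citet{Bartlett2020} gives $\mathcal{V}_n(\Sigma)\not\to 0$, so $\liminf_n\Expect R>0$, which together with $R\ge 0$ precludes $R=o_\Prob(1)$. Finally, for the explicit rate under $p_n=\Omega(n)$ and $\epsilon_np_n=ne^{-o(n)}$, I would substitute the spectrum $\mu_k=(\gamma_k+\epsilon_n)\mathbbm{1}\{k\le p_n\}$ into the definitions of $k^*$, $r_0$ and $R_{k^*}$ and evaluate $\mathcal{B}_{n,\delta}(\Sigma)$ and $\mathcal{V}_n(\Sigma)$ exactly as in the geometric-plus-constant example of \citet{Bartlett2020}, producing the three displayed terms $(\epsilon_np_n+1)/n$, $\ln(n/(\epsilon_np_n))/n$ and $\max\{1/n,\,n/p_n\}$.

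The hardest part will be the necessity direction. The lower bound is weakened by the negative cross term $-c\|\theta^*_1\|\|\theta^*_0\|\bar{\mathcal{B}}_n(\Sigma)$, which I can neutralize only because benignness drives $\bar{\mathcal{B}}_n(\Sigma)\to 0$; and passing from $\liminf_n\Expect R>0$ to $R\not\to_\Prob 0$ requires the nonnegativity $R\ge 0$ together with a uniform-integrability or concentration argument to rule out mass escaping to the tails. One must also confirm that the difference structure $\hat\theta_1-\hat\theta_0$ cannot engineer a cancellation that rescues consistency; here the disjointness of the two treatment subsamples and the independence of $\bm\varepsilon_1,\bm\varepsilon_0$ guarantee that the cross terms $B_{1,0}$ and $F_{1,0}$ appearing in \cref{lem:bv_simple} do not systematically reduce the risk, so the per-group failure genuinely propagates to $R(\hat\theta^{\mathrm{T\mathchar`-learner}})$.
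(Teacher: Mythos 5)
Your proposal is correct and follows essentially the same route as the paper: use Assumption~\ref{asmp:coherent_eigen} to eliminate the deviation terms $\|\Sigma-\zeta^*_a\Sigma_a\|$ in Theorem~\ref{thm:main}, control the group-covariance bias terms via Lemma~\ref{lem:benign_cov_a} (together with boundedness of $\|\Sigma_a\|$ under overlap), and delegate the iff-characterization of $\mathcal{V}_n(\Sigma)$ and the explicit rate to Theorem~2 of \citet{Bartlett2020}, with the Remark's lower bound supplying necessity. The paper's own proof is just a terser version of this same argument (it proves only Lemma~\ref{lem:benign_cov_a} explicitly and cites Bartlett's characterization for the rest), so your write-up mainly fills in details the paper leaves implicit.
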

Except for the conditions~\eqref{eq:main:thm}, the statements are the identical to Theorem~2 in \citet{Bartlett2020}. Therefore, in CATE prediction, it is important to determine whether \eqref{eq:main:thm} is satisfied or not. 
\eqref{eq:main:thm} is more problematic because the term does not usually go to zero even if we put some conditions on the eigenvalues. To consider which cases satisfy \eqref{eq:main:thm}, we divide the problem instances into two situations: cases under RCTs and selection bias.

\subsection{Benign Overfitting in Randomized Control and Sample Selection}

\paragraph{Randomized Control Trial (RCT)}:
When we consider RCTs; tht is, when a treatment assignment does not depend on covariates, we can simplify the covariance $\Sigma_a$ for each treatment group simpler.
If the variable $d$ for the treatment assignment does not depend on $x$, then for each $a\in\{1,0\}$, we have
\begin{align*}
    \Sigma_a = \Expect[\mathbbm{1}[d = a]x x^\top] = p(d=a)\Expect[x x^\top] = p(d = a)\Sigma.
\end{align*}
Then, for $a \in \{0,1\}$, we have $\min_{\zeta\in\mathbb{R}^+} \|\Sigma - \zeta\Sigma_a\| = \|\Sigma - \frac{1}{p(d=a)}\Sigma_a\| = 0$ holds. 
This is a situation that standard RCTs satisfy, whereby we randomly assign treatments independently from the covariates. In addition, we also have $r_0(\Sigma_a) = p(d=a) r_0 (\Sigma)$.
Hence, the following statement holds without a proof.
\begin{proposition} \label{prop:RCT}
    Suppose that the propensity score does not depend on the covariates; that is $p(d = a|x) = p(d=a)$.
    Then, Assumption \ref{asmp:coherent} holds.
\end{proposition}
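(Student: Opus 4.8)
The plan is to use the fact that, when the assignment is randomized, each group covariance operator is just a positive rescaling of the population covariance, which makes the coherent covariance condition \eqref{eq:main:thm} asserted by the proposition immediate. First I would establish the scaling identity $\Sigma_a = p(d=a)\,\Sigma$ for each $a\in\{0,1\}$. Starting from the definition $\Sigma_a = \E[\mathbbm{1}[d=a]\,xx^\top]$, I would condition on $x$ and use the tower property together with the definition of the propensity score, $\E[\mathbbm{1}[d=a]\mid x] = p(d=a\mid x)$, to rewrite this as $\E[\,p(d=a\mid x)\,xx^\top\,]$ (this is exactly the second equality recorded in the definition of $\Sigma_a$ in Section~\ref{sec:liner_regression_model}). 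Under the hypothesis $p(d=a\mid x)=p(d=a)$, the conditional probability is a constant that does not depend on $x$, so it factors out of the expectation and gives $\Sigma_a = p(d=a)\,\E[xx^\top] = p(d=a)\,\Sigma$.

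With the scaling identity in hand, the second step is to verify \eqref{eq:main:thm} directly. For each $a\in\{0,1\}$ the scalar $p(d=a)$ is strictly positive by the overlap Assumption~\ref{asmp:coherent}, so $\zeta_a := 1/p(d=a)$ is an admissible element of $\R^+$. Substituting, $\bigl\|\Sigma - \zeta_a \Sigma_a\bigr\| = \bigl\|\Sigma - p(d=a)^{-1} p(d=a)\,\Sigma\bigr\| = \|\Sigma-\Sigma\| = 0$. Since the operator norm is nonnegative, the quantity $\min_{\zeta\in\R^+}\|\Sigma-\zeta\Sigma_a\|$ is bounded below by $0$ and attained at $\zeta_a$, hence equals $0$. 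Applying this separately for $a=1$ and $a=0$ yields both equalities in \eqref{eq:main:thm}, so the coherent covariance condition (Assumption~\ref{asmp:coherent_eigen}) holds.

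There is essentially no analytical obstacle: the argument is a one-line consequence of the scaling identity, which is precisely why the statement may be asserted ``without a proof.'' The only points requiring care are bookkeeping rather than genuine difficulty. First, one must justify passing from $\E[\mathbbm{1}[d=a]\,xx^\top]$ to $\E[p(d=a\mid x)\,xx^\top]$, which is just the tower property applied with the $\sigma(x)$-measurable factor $xx^\top$ pulled through the inner conditional expectation. Second, one must confirm that $\zeta_a=1/p(d=a)$ genuinely lies in $\R^+$, which holds only because the overlap Assumption~\ref{asmp:coherent} forces $p(d=a)\in(\varphi,1-\varphi)\subset(0,1)$ and thereby rules out the degenerate probabilities $p(d=a)\in\{0,1\}$. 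Granting these two routine points, the conclusion follows immediately.
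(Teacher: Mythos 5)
Your proposal is correct and follows essentially the same route as the paper: the paper likewise derives the scaling identity $\Sigma_a = \Expect[\mathbbm{1}[d=a]\,xx^\top] = p(d=a)\,\Sigma$ and then observes $\min_{\zeta\in\R^+}\|\Sigma - \zeta\Sigma_a\| = \|\Sigma - \tfrac{1}{p(d=a)}\Sigma_a\| = 0$, which is why it states the proposition ``without a proof.'' You also correctly read the conclusion as the coherent covariance condition of Assumption~\ref{asmp:coherent_eigen} (the reference to Assumption~\ref{asmp:coherent} in the proposition's statement is evidently a typo), and your two bookkeeping points (tower property, and $p(d=a)>0$ from overlap) are exactly the details the paper leaves implicit.
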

Therefore, when predicting the CATE on data collected by RCTs, the excess prediction risk converges to zero under the same conditions for linear regression discussed in \citet{Bartlett2020}.

\paragraph{Selection Bias}: If there is a selection bias in the sense that the treatment assignment depends on the covariate $x$, the conditions under which benign overfitting occurs are severe. 
Under selection bias, the covariance $\Sigma_a$ for each treatment group $a \in \{0,1\}$ is given as 
    $\Sigma_a = \Expect_x[\Expect[p(d=a|x)x x^\top| x]]$.
This disallows us from decomposing $\Sigma_a$ into $p(d=a| x)$ and $\Sigma$. Therefore, we cannot make the deviation $\min_{\zeta\in\mathbb{R}^+} \left\|\Sigma - \zeta\Sigma_a\right\|$ zero.

\section{Excess Risk Bounds and Benign Overfitting in the IPW-learner}
\label{sec:excess_risk_ipw}

First, to evaluate the IPW-learner, we define a new excess risk as
\[
      \tilde{R}\big(\theta\big) : = \mathbb{E}_{x,y}\Big[\big(\hat{y} - x^\top \theta \big)^2 - \big(\hat{y} - x^\top \theta^* \big)^2\Big].
    \]
Importantly, this new risk is same to the risk given in Definition \ref{def:excess}.
\begin{lemma} \label{lem:equiv_risks}
For any $\theta \in \mathbbm{H}$, we have $R(\theta) = \tilde{R}({\theta})$.
\end{lemma}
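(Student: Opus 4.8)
The plan is to show $R(\theta) - \tilde R(\theta) = 0$ by expanding both squared differences and exploiting that the IPW-corrected response $\hat y$ and the ideal contrast $\tilde y = y_1 - y_0$ share the same conditional mean given $x$.

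First I would expand each bracket. Writing $(\tilde y - x^\top\theta)^2 - (\tilde y - x^\top\theta^*)^2 = 2\tilde y\, x^\top(\theta^* - \theta) + (x^\top\theta)^2 - (x^\top\theta^*)^2$, and similarly $(\hat y - x^\top\theta)^2 - (\hat y - x^\top\theta^*)^2 = 2\hat y\, x^\top(\theta^* - \theta) + (x^\top\theta)^2 - (x^\top\theta^*)^2$, the purely quadratic pieces $(x^\top\theta)^2 - (x^\top\theta^*)^2$ are identical in the two expressions and cancel upon subtraction. Hence $R(\theta) - \tilde R(\theta) = 2\,\Expect\big[(\tilde y - \hat y)\, x^\top(\theta^* - \theta)\big]$, where the expectation runs over all of $(x,d,y_1,y_0)$ and the factor $x^\top(\theta^*-\theta)$ is treated as the relevant linear functional.

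Next I would condition on $x$ and invoke the tower property, pulling $x^\top(\theta^*-\theta)$ outside the inner conditional expectation, to obtain $R(\theta) - \tilde R(\theta) = 2\,\Expect_x\big[(\Expect[\tilde y \mid x] - \Expect[\hat y \mid x])\, x^\top(\theta^* - \theta)\big]$. By the definition of the CATE, $\Expect[\tilde y \mid x] = \Expect[y_1 - y_0 \mid x] = \tau^*(x)$, while Lemma \ref{lem:unbiased_yhat} gives the conditional unbiasedness $\Expect[\hat y \mid x] = \tau^*(x)$. The bracketed difference of conditional means therefore vanishes identically, so the whole expression is zero and $R(\theta) = \tilde R(\theta)$.

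This argument is essentially a one-line calculation, so I do not anticipate a genuine obstacle; the only point requiring care is the bookkeeping of the expectation. Unlike the risk in Definition \ref{def:excess}, the expectation defining $\tilde R$ must also average over the treatment indicator $d$, since $\hat y$ depends on $d$ through $\mathbbm{1}[d=1]$ and $\mathbbm{1}[d=0]$; it is precisely this averaging over $d$ (under unconfoundedness and overlap) that makes Lemma \ref{lem:unbiased_yhat} applicable and forces the two conditional means to coincide. I would state this explicitly so that the equivalence is seen to hold verbatim for every $\theta \in \mathbb{H}$, not merely up to a constant depending on $\theta$.
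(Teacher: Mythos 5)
Your proposal is correct and follows essentially the same route as the paper's own proof: both reduce the difference $R(\theta) - \tilde{R}(\theta)$ to the cross term $2\,\mathbb{E}\big[(\tilde{y} - \hat{y})\, x^\top(\theta^* - \theta)\big]$ (the paper via adding and subtracting $\hat{y}$ inside the squares, you via direct expansion, which is the same cancellation) and then kill it by conditioning on $x$ and using $\mathbb{E}[\tilde{y} \mid x] = \tau^*(x) = \mathbb{E}[\hat{y} \mid x]$, the latter from Lemma~\ref{lem:unbiased_yhat}. Your closing remark that the expectation must also average over the treatment indicator $d$ is a useful clarification of the paper's $\mathbb{E}_{x,y}$ notation, but it does not change the argument.
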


Next, when using the IPW estimator, we confirm that Assumptions~1, 3, and 4 in Definition~1 of \citet{Bartlett2020} can be replaced as follows.
\begin{lemma}[Basic assumptions in the IPW-learner] \label{lem:basic}
   Under Assumptions~\ref{asmp:basic}--\ref{asmp:coherent}, the following hold:
   \begin{description}
      \setlength{\parskip}{0cm}
  \setlength{\itemsep}{0cm}
   	\item[1']
    $x$ and $\hat{y}$ are mean zero;
   	\item[2']
    the {\em conditional noise variance} is bounded below by some constant $\tilde{\sigma}^2$: $\Expect[(\hat{y}-x^\top\theta^*)^2|x] \geq \tilde{\sigma}^2$.
   	\item[3']
    $\hat{y} - x^\top\theta^*$ is $\tilde{\sigma}_y^2$-sub-Gaussian, conditionally on $x$, that is for all $\lambda \in \Re$, $\E[\exp(\lambda(\hat{y} - x^\top\theta^*))|x] \leq \exp(\tilde{\sigma}_y^2\lambda^2/2)$.
   	\end{description}
  \end{lemma}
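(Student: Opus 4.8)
The plan is to verify the three conditions \textbf{1'}, \textbf{2'}, and \textbf{3'} one at a time, always working conditionally on $x$ and using the overlap Assumption~\ref{asmp:coherent} to keep the inverse propensity weights bounded. For brevity write $e=p(d=1\mid x)$, which lies in $(\varphi,1-\varphi)$, and set $\mu_a=(\theta^*_a)^\top x$ and $v_a=\Expect[\varepsilon_a^2\mid x]$; differentiating the sub-Gaussian bound of Assumption~\ref{asmp:basic} at $\lambda=0$ forces $\Expect[\varepsilon_a\mid x]=0$, so $\Expect[y_a\mid x]=\mu_a$ and $\tau^*(x)=\mu_1-\mu_0=x^\top\theta^*$. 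For \textbf{1'}, $x$ is mean zero by Assumption~\ref{asmp:basic}, and for $\hat y$ I would combine the tower rule with the unbiasedness already proved in Lemma~\ref{lem:unbiased_yhat}: $\Expect[\hat y]=\Expect[\Expect[\hat y\mid x]]=\Expect[\tau^*(x)]=(\theta^*)^\top\Expect[x]=0$.

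For \textbf{2'}, note that $\hat y-x^\top\theta^*=\hat y-\Expect[\hat y\mid x]$, so the target is exactly $\Var(\hat y\mid x)$. Since $\mathbbm{1}[d=1]\mathbbm{1}[d=0]=0$ the cross term drops, and unconfoundedness (Assumption~\ref{asmp:unconfounded}) lets me factor $d$ away from $(y_1,y_0)$, giving $\Expect[\hat y^2\mid x]=\Expect[y_1^2\mid x]/e+\Expect[y_0^2\mid x]/(1-e)$. Substituting $\Expect[y_a^2\mid x]=\mu_a^2+v_a$ and subtracting $\tau^*(x)^2$ yields
\[
\Var(\hat y\mid x)=\frac{v_1}{e}+\frac{v_0}{1-e}+\left(\mu_1^2\frac{1-e}{e}+\mu_0^2\frac{e}{1-e}+2\mu_1\mu_0\right).
\]
The bracketed term is nonnegative by AM--GM, since $\mu_1^2\frac{1-e}{e}+\mu_0^2\frac{e}{1-e}\ge 2|\mu_1\mu_0|\ge -2\mu_1\mu_0$, so $\Var(\hat y\mid x)\ge v_1+v_0\ge 2\sigma^2$ by the conditional variance lower bound in Assumption~\ref{asmp:basic}. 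This establishes \textbf{2'} with $\tilde\sigma^2=2\sigma^2$.

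For \textbf{3'}, I would expand $\hat y-x^\top\theta^*$ on each treatment branch: on $\{d=1\}$ it equals $\mu_1\frac{1-e}{e}+\mu_0+\varepsilon_1/e$, and on $\{d=0\}$ it equals $-\mu_1-\mu_0\frac{e}{1-e}-\varepsilon_0/(1-e)$. Each branch is a constant (given $x$) plus a noise term rescaled by a factor bounded by $1/\varphi$: a bounded variable is sub-Gaussian by Hoeffding, a bounded rescaling of the $\sigma_y^2$-sub-Gaussian $\varepsilon_a$ stays sub-Gaussian with parameter at most $\sigma_y^2/\varphi^2$, and bounding the conditional MGF of the resulting two-branch mixture then shows $\hat y-x^\top\theta^*$ is conditionally sub-Gaussian with some parameter $\tilde\sigma_y^2$. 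The hard part will be precisely this step: the inverse-propensity transform inflates the fluctuations, so the branch constants $\mu_1\frac{1-e}{e}+\mu_0$ and $-\mu_1-\mu_0\frac{e}{1-e}$ grow with $|(\theta^*_a)^\top x|$, and the sub-Gaussian parameter one naturally obtains therefore depends on $x$ (through $\|x\|$ and $\|\theta^*_a\|$) rather than being a single universal constant. The overlap Assumption~\ref{asmp:coherent} is exactly what prevents $1/e$ and $1/(1-e)$ from blowing up, and the delicate point is to state the conditional sub-Gaussian constant correctly; should a uniform constant be required downstream, this is where an extra boundedness hypothesis on $x$ would have to enter.
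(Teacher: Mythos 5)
Your proposal is correct in substance and, in two of the three items, takes a genuinely different route from the paper's proof --- in one case a sounder one. For item 1' you and the paper do the same thing (tower rule plus mean-zero outcomes). For item 2' the paper expands $\Expect[(\hat{y}-x^\top\theta^*)^2|x]=\Expect[\hat{y}^2|x]-(\tau^*(x))^2$ and then invokes the inequality $-(x^\top\theta^*_1-x^\top\theta^*_0)^2\geq -(x^\top\theta^*_1)^2-(x^\top\theta^*_0)^2$, which is equivalent to $(x^\top\theta^*_1)(x^\top\theta^*_0)\geq 0$ and is therefore false in general; it then adds and subtracts $\Expect[y^2_a|x]$ to surface the terms $\Expect[(y_a-x^\top\theta^*_a)^2|x]\geq\sigma^2$. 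Your computation of the exact conditional variance, $\Var(\hat y\mid x)=\frac{v_1}{e}+\frac{v_0}{1-e}+\mu_1^2\frac{1-e}{e}+\mu_0^2\frac{e}{1-e}+2\mu_1\mu_0$, followed by AM--GM to show the $\mu$-dependent bracket is nonnegative, reaches the same bound $\tilde\sigma^2=2\sigma^2$ without that flawed step: the cross term $2\mu_1\mu_0$ is absorbed by the inverse-propensity inflation factors $\frac{1-e}{e}$ and $\frac{e}{1-e}$ rather than discarded. This is a genuine repair of the paper's argument, not just a restyling.

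For item 3' your branch decomposition (constant plus $\varepsilon_a$ rescaled by at most $1/\varphi$ on each of $\{d=1\}$, $\{d=0\}$) differs from the paper's route, which appeals to the equivalence between sub-Gaussianity and $\Expect[\exp(R^2)]<\infty$ (Proposition~2.5.2 of \citet{vershynin_2018}) and concludes only that the relevant conditional expectation is finite. But note that the difficulty you flag --- the branch constants $\mu_1\frac{1-e}{e}+\mu_0$ and $-\mu_1-\mu_0\frac{e}{1-e}$ scale with $|x^\top\theta^*_a|$, so the conditional sub-Gaussian parameter one obtains depends on $x$ rather than being a single constant $\tilde\sigma_y^2$ --- is not a defect of your approach alone: the paper's proof has exactly the same hole, since finiteness of $\Expect[\exp((\hat y-x^\top\theta^*)^2)|x]$ pointwise in $x$ does not deliver a uniform $\tilde\sigma_y^2$, and a uniform constant is what Theorem~1 of \citet{Bartlett2020} needs downstream. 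So your proposal is no less complete than the paper's own proof, and it correctly identifies where an additional hypothesis (e.g., boundedness of $x^\top\theta^*_a$, or sub-Gaussian parameters allowed to depend on $\|\theta^*_a\|$ and the covariate scale) would have to enter to make item 3' rigorous.
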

Since the statements in Lemma~\ref{lem:basic} correspond to  Assumptions~1, 3, and 4 in Definition~1 in \citet{Bartlett2020}, we can directly apply Theorem~1 of \citet{Bartlett2020} to obtain the following result by combining them with Assumptions~\ref{asmp:basic}.
\begin{theorem}[Excess risk upper bounds in IPW-learner]
\label{thm:main2}
For any $\sigma_x$ there are  $b,c, c_1>1$ for which the following holds. Consider a linear regression problem from Section~\ref{sec:liner_regression_model} and suppose that Assumption \ref{asmp:basic}, \ref{asmp:unconfounded}, and \ref{asmp:coherent}, hold.
Suppose $\delta<1$ with $\log(1/\delta)<n/c$.
If $k^* < n/c_1$, then the excess risk (Definition \ref{def:excess}) of the predictor in \eqref{def:IPW} satisfies
   \begin{align*}
     R\big(\hat{\theta}^{\mathrm{IPW\mathchar`-learner}}\big)
       & \le c\|\theta^*\|^2\mathcal{B}_{n,\delta}(\Sigma)+ c\log(1/\delta)\mathcal{V}_n(\Sigma).
   \end{align*}
 with probability at least $1-\delta$.
\end{theorem}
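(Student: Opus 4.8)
The plan is to recognize that the IPW-learner is, by construction, an ordinary minimum-norm interpolating regression of the corrected responses $\hat{\bs{y}}$ on the aggregated design $X = X_1 + X_0$, and that this problem satisfies \emph{exactly} the hypotheses of Theorem~1 of \citet{Bartlett2020} with population covariance $\Sigma = \Expect[xx^\top]$ and true parameter $\theta^* = \theta^*_1 - \theta^*_0$. The entire bound will then follow by a direct reduction to that theorem, with no selection-adjusted covariance $\Sigma_a$ entering anywhere. This is the structural reason the IPW correction repairs the pathology that afflicts the T-learner: the regression lives on $\Sigma$ rather than on the distorted $\Sigma_a$.

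First I would reduce the target quantity. By Lemma~\ref{lem:equiv_risks}, $R(\theta) = \tilde{R}(\theta)$ for every $\theta \in \bbH$, so it suffices to bound $\tilde{R}(\hat\theta^{\mathrm{IPW\mathchar`-learner}})$, the excess risk measured against $\hat{y}$. Setting $\tilde{\varepsilon}_i = \hat{y}_i - x_i^\top \theta^*$ and using Lemma~\ref{lem:unbiased_yhat} together with linearity, $\Expect[\hat{y}_i \mid x_i] = \tau^*(x_i) = x_i^\top\theta^*$, so the corrected responses obey the clean linear model $\hat{y}_i = x_i^\top\theta^* + \tilde{\varepsilon}_i$ with mean-zero conditional noise, and $\hat\theta^{\mathrm{IPW\mathchar`-learner}} = X^\top(XX^\top)^{-1}\hat{\bs{y}}$ is precisely its minimum-norm interpolator.

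Next I would check that the triple $(x, \hat{y}, \theta^*)$ meets every condition of Definition~1 of \citet{Bartlett2020}. The mean-zero requirement is condition~1' of Lemma~\ref{lem:basic}; the sub-Gaussian design $x = \Sigma^{1/2}z$ transfers verbatim from Assumption~\ref{asmp:basic}; the sub-Gaussianity of $\tilde{\varepsilon}$ is condition~3' and the variance lower bound $\Expect[\tilde{\varepsilon}^2 \mid x] \ge \tilde{\sigma}^2$ is condition~2'. The only structural item left is the genericity/full-rank condition, now required for the aggregated design $X$ rather than for each $X_a$; this is the original condition of Definition~1 of \citet{Bartlett2020} and holds under the sub-Gaussian design of Assumption~\ref{asmp:basic} whenever $\rank(\Sigma) > n$, as noted just after that assumption. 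With these verifications in hand, Theorem~1 of \citet{Bartlett2020} applies directly under $k^* < n/c_1$ and $\log(1/\delta) < n/c$ to give, with probability at least $1-\delta$,
\[
  \tilde{R}\big(\hat\theta^{\mathrm{IPW\mathchar`-learner}}\big) \le c\|\theta^*\|^2 \mathcal{B}_{n,\delta}(\Sigma) + c\log(1/\delta)\,\mathcal{V}_n(\Sigma),
\]
which combined with Lemma~\ref{lem:equiv_risks} is the claimed bound on $R(\hat\theta^{\mathrm{IPW\mathchar`-learner}})$.

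The main obstacle I anticipate is not any single sharp estimate but the bookkeeping in the verification step, specifically tracking how the overlap parameter $\varphi$ of Assumption~\ref{asmp:coherent} enters the effective noise scale. Dividing by $p(d=1\mid x) \in (\varphi, 1-\varphi)$ in the definition of $\hat{y}$ inflates both the sub-Gaussian constant $\tilde{\sigma}_y$ and the conditional variance, so I would need to confirm that these stay finite and bounded away from zero uniformly in $x$ — which is exactly what Assumption~\ref{asmp:coherent} secures — and that the ratio between the IPW noise scale and the $\sigma^2$ appearing in the definition of $\mathcal{V}_n(\Sigma)$ is absorbed into the universal constant $c$. Once Lemma~\ref{lem:basic} is granted, the remainder of the argument is mechanical.
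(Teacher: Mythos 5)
Your proposal is correct and takes essentially the same route as the paper's own proof: reduce $R$ to $\tilde{R}$ via Lemma~\ref{lem:equiv_risks}, verify the conditions of Definition~1 of \citet{Bartlett2020} through Lemma~\ref{lem:basic} (together with the sub-Gaussian design condition from Assumption~\ref{asmp:basic}), and then apply Theorem~1 of \citet{Bartlett2020} directly to the regression of $\hat{\bs{y}}$ on $X$. Your explicit attention to the full-rank condition for the aggregated design $X$ and to how the overlap constant $\varphi$ is absorbed into the constants is, if anything, slightly more careful than the paper's treatment.
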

Thus, the upper bound has the same form as that of \citet{Bartlett2020}, although some constant terms are affected by the construction of the IPW estimator. As presented in \citet{Bartlett2020}, if \eqref{eq:benign_cov} is satisfied, the upper bound goes to zero. This means that this upper bound goes to zero under the same condition as that in \citet{Bartlett2020}.

\begin{theorem}[Benign overfitting in IPW-learner]\label{prp:benign_eigenvalues}
    Suppose that the assumptions in Theorem \ref{thm:main2} hold.
    Also, suppose that $\Sigma$ is a benign covariance as Definition~\ref{def:benign_cov}. 
    Then, we have
    \begin{align*}
        R(\hat{\theta}^\mathrm{IPW\mathchar`-learner}) = o_\Prob(1), ~ (n \to \infty),
    \end{align*}
    if and only if $\alpha = 1$ and $ \beta > 1$ (case (a) in Definition \ref{def:benign_cov}), or $p_n=\omega(n)$ and
    $ne^{-o(n)}=\epsilon_np_n=o(n)$ (case (b) in Definition \ref{def:benign_cov}).
    
    Further, if we have $p_n=\Omega(n)$ and
    $\epsilon_np_n=ne^{-o(n)}$ the case (b) in Definition \ref{def:benign_cov}), we obtain
    \[
        R(\hat\theta^{\mathrm{IPW\mathchar`-learner}})
          \!=\! O_\Prob\left(\frac{\epsilon_np_n+1}{n}
          + \frac{\ln(n/(\epsilon_np_n))}{n}
          + \max\left\{\frac{1}{n},\frac{n}{p_n}\right\}\right).
      \]
\end{theorem}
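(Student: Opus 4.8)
The plan is to reduce the IPW-learner exactly to the ordinary linear-regression setting of \citet{Bartlett2020} and then invoke their benign-overfitting characterization. First, by Lemma~\ref{lem:equiv_risks} the excess risk $R$ coincides with $\tilde R$, so it suffices to analyze the interpolating regression of the corrected response $\hat{\bm{y}}$ on $X = X_1 + X_0$. Lemma~\ref{lem:basic} shows that the pair $(x,\hat y)$ is mean zero, that the noise $\hat y - x^\top\theta^*$ is conditionally sub-Gaussian, and that its conditional variance is bounded below by $\tilde\sigma^2$; together with the sub-Gaussian covariate and full-rank conditions of Assumption~\ref{asmp:basic}, these are precisely Assumptions~1, 3, and 4 of Definition~1 in \citet{Bartlett2020}, now carried by the \emph{population} covariance $\Sigma$. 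The crucial structural point is that, unlike the T-learner of Theorem~\ref{theorem:benign_eigenvalues}, no treatment-group covariance $\Sigma_a$ and no deviation term $\|\Sigma - \zeta^*_a\Sigma_a\|$ enters: the upper bound of Theorem~\ref{thm:main2} depends on $\Sigma$ alone, so Assumption~\ref{asmp:coherent_eigen} is not needed.

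For the \emph{if} direction I would feed the benign-covariance eigenvalue profiles of Definition~\ref{def:benign_cov} into the bound $c\|\theta^*\|^2\mathcal{B}_{n,\delta}(\Sigma) + c\log(1/\delta)\mathcal{V}_n(\Sigma)$ of Theorem~\ref{thm:main2} and check that the three quantities $r_0(\Sigma)/n$, $k^*/n$, and $n/R_{k^*}(\Sigma)$ vanish. For case~(a) with $\mu_k(\Sigma)=k^{-\alpha}\ln^{-\beta}(k+1)$ this is the polynomial-tail computation of \citet{Bartlett2020}, which succeeds exactly when $\alpha=1$ and $\beta>1$. For case~(b) I would compute, from $\mu_k(\Sigma_n)=(\gamma_k+\epsilon_n)\mathbbm{1}\{k\le p_n\}$ with $\gamma_k=\Theta(e^{-k/\tau})$ and $\|\Sigma\|=1$, that $r_0(\Sigma_n)=\Theta(1+\epsilon_np_n)$, $k^*=\Theta(\ln(n/(\epsilon_np_n)))$, and $R_{k^*}(\Sigma_n)=\Theta(p_n)$; substituting these reproduces both the vanishing conditions $p_n=\omega(n)$, $ne^{-o(n)}=\epsilon_np_n=o(n)$ and, in the regime $p_n=\Omega(n)$, $\epsilon_np_n=ne^{-o(n)}$, the explicit rate $O_\Prob((\epsilon_np_n+1)/n + \ln(n/(\epsilon_np_n))/n + \max\{1/n, n/p_n\})$, exactly as in the second example of Theorem~2 of \citet{Bartlett2020}, since the covariance here is literally $\Sigma$.

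For the \emph{only if} direction I would appeal to the matching lower bound of \citet{Bartlett2020}: when the eigenvalue conditions fail, at least one of $r_0(\Sigma)/n$, $k^*/n$, $n/R_{k^*}(\Sigma)$ stays bounded away from zero, and their lower bound then forces $R(\hat\theta^{\mathrm{IPW\mathchar`-learner}})$ to be bounded below in probability. The hypotheses of that lower bound are again supplied by Lemma~\ref{lem:basic}, in particular the uniform conditional-variance floor $\tilde\sigma^2$ in item~2'. Combining the two directions yields the claimed equivalence, and the explicit rate is inherited from the upper-bound computation above.

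The main obstacle is the \emph{only if} half, because the IPW ``noise'' $\hat y - x^\top\theta^*$ is not the clean regression residual of \citet{Bartlett2020}: it blends the two potential-outcome noises through the inverse-propensity weights and treatment indicators of \eqref{def:yhat}, so it is not obvious a priori that the lower-bound hypotheses hold. What makes the reduction legitimate is precisely the uniform lower bound $\tilde\sigma^2$ on its conditional variance together with its sub-Gaussian tail, both verified in Lemma~\ref{lem:basic}; Assumption~\ref{asmp:coherent} (overlap) is what guarantees these constants are finite and positive, since the weights $1/p(d=1|x)$ and $1/(1-p(d=1|x))$ are then uniformly bounded. Once these are in hand, the forward rate computation is routine effective-rank bookkeeping, and the characterization coincides term-for-term with Theorem~2 of \citet{Bartlett2020}.
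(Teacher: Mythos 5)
Your proposal is correct and follows essentially the same route as the paper: the paper's own proof simply combines Theorem~\ref{thm:main2} (whose proof is exactly your reduction via Lemmas~\ref{lem:equiv_risks} and~\ref{lem:basic} to the setting of \citet{Bartlett2020}) with the benign-overfitting characterization theorem of \citet{Bartlett2020}. Your additional unpacking of the effective-rank computations for cases (a) and (b) and of the lower-bound hypotheses needed for the ``only if'' direction is precisely the content of that cited theorem, made explicit.
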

In this result, unlike the case where we use the T-learner, the upper bound goes to zero for both cases with RCTs and selection bias. This is because the deviation term $\|\Sigma - \zeta^*\Sigma_a\|$ in Theorem~\ref{thm:main} does not appear when using the IPW-learner.

\section{Discussion}
\label{sec:discuss}
\subsection{CATE Estimation in Non-overparametrized Setting}
When the model is not overparametrized, both the T-learner and IPW-learner have a risk converges to zero as $n \to \infty$ \citep{Abrevaya2015}.
This fact is contrasive to our results with overparameterization, especially for the case of the T-leaner.
This result implies the potential danger of using the T-learner in large-scale models.


\subsection{Application to Doubly Robust Estimator}
The doubly robust estimator \citep{Porter2011,Funk2011,Foster2019,Kennedy2020} is another common choice for the CATE prediction.
An advantage of the double robust estimator is that an asymptotic variance of the estimator is semiparametric efficient; that is, the asymptotic variance achieves its lower bound \citep{hahn1998role,ChernozhukovVictor2018Dmlf}. 
However, some problems appear in the overparameterization setting, when this method estimates an conditional outcome $\mathbb{E}[y_{a,i}| x_i]$ as preparation.
Specifically, according to our results, the T-learner is not guaranteed to be valid to estimate the outcome under overparameterization.
Even if the IPW-learner is used, we need the correct propensity score.
Unless these issues are resolved, it is difficult to utilize the doubly robust estimator with overparameterization.


\subsection{Implications for Applications}
Our result implies the importance of bias correction methods.
When data is collected via RCTs, we recommend the usual method, such as performing regressions for each assigned group like the T-leaner.
In contrast, if there is selection bias, we recommend to achieve a correct assigned probability and utilize the IPW-learner.

\section{Conclusion}
We investigate the necessary conditions for benign overfitting in CATE prediction using a linear regression model. 
For predicting the CATE, we consider two methods: the T-learner and IPW learner. 
When the treatment assignment $d$ does not depend on the covariates, both the T-learner and IPW-learner with interpolating prediction rule show benign-overfitting. 
However, when the treatment assignment $d_i$ depends on the covariates, the excess risk of the T-learner does not converge to zero, while that of the IPW-learner converges. Thus, this paper shows the situation in which CATE prediction with an overparameterized model works.


\bibliographystyle{asa}
\bibliography{arXiv.bbl}

\clearpage

\appendix

\section{Auxiliary Results from \texorpdfstring{\citet{Bartlett2020}}{TEXT} and \texorpdfstring{\citet{vershynin_2018}} {TEXT}}

\begin{proposition}[Lemma~S.2 in \citet{Bartlett2020}]
	\label{lemma::sub-Gaussian-quadratic-form}
  Consider random variables $\eps_1,\ldots,\eps_n$, conditionally
  independent given $X$ and conditionally $\sigma^2$-sub-Gaussian, that
  is, for all $\lambda\in\Re$,
    \[
      \E[\exp(\lambda\eps_i)|X] \leq \exp(\sigma^2\lambda^2/2).
    \]
  Suppose that, given $X$, $M\in\Re^{n\times n}$ is a.s.~positive
  semidefinite. Then a.s.~on $X$, with conditional probability at
  least $1-\exp(-t)$,
    \[
      \bs{\eps}^\top M \bs{\eps}
        \le \sigma^2\tr(M) + 2\sigma^2 \|M\|t
          + 2\sigma^2\sqrt{\|M\|^2 t^2 +
          \tr\left(M^2\right)t}.
    \]
\end{proposition}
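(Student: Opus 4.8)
The plan is to control the moment generating function (MGF) of the quadratic form $Q := \bs{\eps}^\top M \bs{\eps}$ conditionally on $X$ and then apply a Chernoff bound. Throughout I work conditionally on $X$, treating $M$ as a fixed positive semidefinite matrix and the $\eps_i$ as independent $\sigma^2$-sub-Gaussian scalars. The first step is a reduction: since the $\eps_i$ are independent and scalar sub-Gaussian, the vector $\bs\eps$ is $\sigma^2$-sub-Gaussian in the sense that $\E[\exp(\mu^\top\bs\eps)]\le\exp(\sigma^2\|\mu\|^2/2)$ for all $\mu\in\R^n$, and this vector property is invariant under orthogonal transformations. Hence, writing the eigendecomposition $M=U\diag(\lambda_1,\dots,\lambda_n)U^\top$ with $\lambda_i\ge 0$ (this is where positive semidefiniteness is used) and setting $\tilde{\bs\eps}=U^\top\bs\eps$, I may assume $M=\diag(\lambda_1,\dots,\lambda_n)$ and $Q=\sum_i\lambda_i\tilde\eps_i^2$, where $\tilde{\bs\eps}$ is again a $\sigma^2$-sub-Gaussian vector.

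Second, I would linearize the squares using the Gaussian identity $\exp(s\lambda_i\tilde\eps_i^2)=\E_{g_i}[\exp(\sqrt{2s\lambda_i}\,g_i\tilde\eps_i)]$ for $s>0$ and $g_i\sim\Norm(0,1)$ independent. Applying this coordinatewise, exchanging expectations by Fubini, and invoking the sub-Gaussian vector bound with $\mu_i=\sqrt{2s\lambda_i}\,g_i$ gives
\[
\E[\exp(sQ)]\le\E_{\bs g}\Big[\exp\big(\sigma^2 s\textstyle\sum_i\lambda_i g_i^2\big)\Big]=\prod_i(1-2\sigma^2 s\lambda_i)^{-1/2},
\]
valid for $0<s<1/(2\sigma^2\|M\|)$, where the last equality is the explicit Gaussian-chaos MGF.

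Third, taking logarithms and using $-\tfrac12\log(1-u)\le \tfrac{u}{2}+\tfrac{u^2}{4(1-u)}$ for $0\le u<1$ with $u_i=2\sigma^2 s\lambda_i$ yields a Bernstein-type bound
\[
\log\E[\exp(sQ)]\le \sigma^2 s\,\tr(M)+\frac{\sigma^4 s^2\,\tr(M^2)}{1-2\sigma^2 s\|M\|},
\]
since $\sum_i\lambda_i=\tr(M)$, $\sum_i\lambda_i^2=\tr(M^2)$, and $\max_i\lambda_i=\|M\|$. A Chernoff bound then reduces matters to the standard optimization of $\tfrac{\sigma^4 s^2\tr(M^2)}{1-2\sigma^2 s\|M\|}-sz$ over $s\in(0,1/(2\sigma^2\|M\|))$; the familiar sub-gamma inversion of this form gives $\mathbb{P}(Q-\sigma^2\tr(M)\ge 2\sigma^2\sqrt{\tr(M^2)\,t}+2\sigma^2\|M\|t)\le e^{-t}$. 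Because $\sqrt{\tr(M^2)t}\le\sqrt{\|M\|^2t^2+\tr(M^2)t}$, this separated-term bound is stronger than the stated one, which therefore follows immediately.

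The algebra in the final Bernstein step is routine; the one place needing genuine care is the MGF reduction. Specifically, I must justify the Gaussian-linearization/Fubini exchange rigorously, confirm that independence plus scalar sub-Gaussianity really upgrades to the rotation-invariant vector sub-Gaussian property (so that diagonalizing $M$ does not destroy the hypotheses, and in particular that the loss of coordinatewise independence after rotation is harmless), and track the admissible range $s<1/(2\sigma^2\|M\|)$ so that every product and logarithm above is finite. Once the bound $\E[\exp(sQ)]\le\prod_i(1-2\sigma^2 s\lambda_i)^{-1/2}$ is established, the remainder is a standard Bernstein/Chernoff computation.
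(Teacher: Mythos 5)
Your proposal is correct, but note that the paper itself offers no proof to compare against: this proposition is imported verbatim, with attribution, as Lemma~S.2 of \citet{Bartlett2020}, and is used as a black box (e.g., in Lemma~\ref{lem:bound_c1_c0}). Your argument is a valid, self-contained derivation in the style of the Hsu--Kakade--Zhang tail inequality for quadratic forms of sub-Gaussian vectors. Each step checks out: independent scalar sub-Gaussian coordinates do upgrade to the rotation-invariant vector bound $\E[\exp(\mu^\top\bs\eps)\,|\,X]\le\exp(\sigma^2\|\mu\|^2/2)$, which is all that survives diagonalization (the lost coordinatewise independence is indeed never used afterward); the Gaussian linearization $\exp(s\lambda_i\tilde\eps_i^2)=\E_{g_i}[\exp(\sqrt{2s\lambda_i}\,g_i\tilde\eps_i)]$ holds pointwise since $s\lambda_i\ge 0$, and the exchange of expectations is licensed by Tonelli because the integrand is nonnegative; the chi-square MGF and the inequality $-\tfrac12\log(1-u)\le\tfrac{u}{2}+\tfrac{u^2}{4(1-u)}$ give exactly your sub-gamma bound with variance factor $v=2\sigma^4\tr(M^2)$ and scale $c=2\sigma^2\|M\|$; and the standard inversion (choosing $s$ with $s/(1-cs)=\sqrt{2t/v}$ makes the Chernoff exponent exactly $t$) yields $\Prob\big(Q\ge \sigma^2\tr(M)+2\sigma^2\sqrt{\tr(M^2)\,t}+2\sigma^2\|M\|t\big)\le e^{-t}$, which implies the stated bound since $\sqrt{\tr(M^2)\,t}\le\sqrt{\|M\|^2t^2+\tr(M^2)\,t}$; in fact your bound is slightly sharper than the one quoted from \citet{Bartlett2020}. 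The only deferred step, the sub-gamma Legendre-transform inversion, is a routine and standard computation, so I do not regard it as a gap. What your route buys is an elementary, fully explicit proof of an inequality the paper (and its source) treats as an off-the-shelf tool.
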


\begin{proposition}[Lemma~S.3 in \citet{Bartlett2020}]\label{lem:cor:s3}
    Suppose $k < n$, $A \in \R^{n\times n}$ is an invertible matrix,
  and $Z \in \R^{n\times k}$ is such that $ZZ^\top + A$ is invertible.
  Then
    \[
      Z^\top (ZZ^\top + A)^{-2} Z
      = (I +  Z^\top A^{-1}Z)^{-1}Z^\top A^{-2}Z
        (I +  Z^\top A^{-1}Z)^{-1}.
     \]
\end{proposition}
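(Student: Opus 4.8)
The plan is to reduce the identity to two applications of the push-through (Sherman--Morrison--Woodbury) identity, one acting on the left of $(ZZ^\top+A)^{-2}$ and one on the right. Write $M := ZZ^\top + A$, which is invertible by hypothesis, and factor $M^{-2} = M^{-1}M^{-1}$ so that the target quantity becomes the product $\bigl(Z^\top M^{-1}\bigr)\bigl(M^{-1}Z\bigr)$. The whole argument is a verification that sidesteps computing any inverse directly.

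First I would establish the left push-through identity $Z^\top M^{-1} = (I + Z^\top A^{-1}Z)^{-1} Z^\top A^{-1}$. Rather than invert anything, I would verify it by right-multiplying both candidate sides by $M = ZZ^\top + A$: the left side collapses to $Z^\top$, while the right side becomes $(I + Z^\top A^{-1}Z)^{-1}(Z^\top A^{-1}ZZ^\top + Z^\top) = (I + Z^\top A^{-1}Z)^{-1}(I + Z^\top A^{-1}Z)Z^\top = Z^\top$, using $Z^\top A^{-1}A = Z^\top$ and factoring $Z^\top$ out on the right. Since $M$ is invertible, agreement after right-multiplication by $M$ proves the identity.

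Second, I would establish the mirror identity $M^{-1}Z = A^{-1}Z(I + Z^\top A^{-1}Z)^{-1}$ by the symmetric argument: left-multiplying both sides by $M$ sends the left side to $Z$ and sends the right side to $(ZZ^\top + A)A^{-1}Z(I+Z^\top A^{-1}Z)^{-1} = Z(Z^\top A^{-1}Z + I)(I + Z^\top A^{-1}Z)^{-1} = Z$. Multiplying the two established identities then yields $Z^\top M^{-2}Z = (I + Z^\top A^{-1}Z)^{-1}Z^\top A^{-1}A^{-1}Z(I + Z^\top A^{-1}Z)^{-1}$, and the inner $A^{-1}A^{-1}$ collapses to $A^{-2}$, giving exactly the claimed formula. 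No symmetry of $A$ is used anywhere.

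The only genuine gap to fill, and hence the step I would be most careful about, is that the $k\times k$ matrix $I + Z^\top A^{-1}Z$ is itself invertible, so that the inverse appearing in the statement is well-defined. I would supply this via Sylvester's determinant identity: $\det(I + Z^\top A^{-1}Z) = \det(I + A^{-1}ZZ^\top) = \det(A^{-1})\det(ZZ^\top + A) = \det(A^{-1})\det(M) \neq 0$, since both $A$ and $M$ are invertible. Beyond this, there is no analytic obstacle at all; the argument is pure bookkeeping, and the hypothesis $k<n$ plays no role in the algebra (it matters only for how the lemma is applied elsewhere).
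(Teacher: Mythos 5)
Your proof is correct and complete: the two push-through identities are verified soundly (multiplication by the invertible $M$ is a legitimate way to certify them), and you correctly identify and close the one real gap, the invertibility of $I + Z^\top A^{-1}Z$, via Sylvester's determinant identity $\det(I + Z^\top A^{-1}Z) = \det(A^{-1})\det(ZZ^\top + A) \neq 0$. Note that the paper itself offers no proof of this statement --- it is quoted verbatim as Lemma~S.3 from \citet{Bartlett2020} in the appendix of auxiliary results --- and your argument coincides with the standard derivation given in that reference's supplement (the same push-through/Sherman--Morrison--Woodbury manipulation), so there is nothing to amend; your closing observations that no symmetry of $A$ is needed and that $k<n$ is irrelevant to the algebra are also accurate.
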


\begin{proposition}[Bernstein's inequality, Lemma~S.5 in \citet{Bartlett2020} 
]\label{lemma:Bernstein}
    There is a universal constant $c$ such that, for any
	independent, mean zero, $\sigma$-sub-exponential random variables
	$\xi_1, \dots, \xi_N$, any $a = (a_1, \dots, a_N)\in \Re^n$, and
    any $t \geq 0$,
	  \begin{align*}
	    \Pbb\left(\left|\sum_{i=1}^N a_i \xi_i\right| > t\right)
            \leq 2\exp\left(-c\min\left(\frac{t^2}{\sigma^2 \sum_{i=1}^N
        a_i^2}, \frac{t}{\sigma \max_{1 \leq i \leq n}
        a_i}\right)\right).
	  \end{align*}
\end{proposition}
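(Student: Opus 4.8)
The plan is to prove this through the classical Chernoff-bound argument, combining an exponential Markov inequality with a moment-generating-function (MGF) bound that exploits independence. First I would record the consequence of the $\sigma$-sub-exponential hypothesis at the level of MGFs: for each mean-zero $\xi_i$ there exist universal constants $C_1, c_1 > 0$ such that $\E[\exp(\lambda \xi_i)] \leq \exp(C_1 \sigma^2 \lambda^2)$ whenever $|\lambda| \leq c_1/\sigma$. This is the standard quadratic MGF control for sub-exponential variables in a neighborhood of the origin, and it follows from the definition of the sub-exponential norm together with the Taylor expansion of the exponential (the mean-zero assumption kills the linear term).

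Next, writing $S = \sum_{i=1}^N a_i \xi_i$ and using independence, the MGF factorizes as $\E[\exp(\lambda S)] = \prod_{i=1}^N \E[\exp(\lambda a_i \xi_i)]$. Applying the single-variable bound to each factor, which is valid as long as $|\lambda a_i| \leq c_1/\sigma$ for every $i$, that is $|\lambda| \leq c_1/(\sigma \max_i |a_i|)$, yields $\E[\exp(\lambda S)] \leq \exp(C_1 \sigma^2 \lambda^2 \sum_{i=1}^N a_i^2)$ on that range of $\lambda$.

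The core of the argument is then the Chernoff optimization. By Markov's inequality, $\Pbb(S > t) \leq \exp(-\lambda t)\,\E[\exp(\lambda S)] \leq \exp(-\lambda t + C_1 \sigma^2 \lambda^2 \sum_i a_i^2)$ for $0 \le \lambda \le c_1/(\sigma\max_i|a_i|)$. The unconstrained minimizer is $\lambda^\star = t/(2 C_1 \sigma^2 \sum_i a_i^2)$. Here I would split into two regimes: if $\lambda^\star$ lies in the admissible interval, substituting it gives the sub-Gaussian term $\exp(-t^2/(4 C_1 \sigma^2 \sum_i a_i^2))$; if instead $\lambda^\star$ exceeds the boundary, I would take $\lambda$ equal to the boundary value $c_1/(\sigma\max_i|a_i|)$, where the linear term $-\lambda t$ dominates, producing a tail of the form $\exp(-c_2 t/(\sigma \max_i |a_i|))$. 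In either case the exponent is at most $-c\min\{t^2/(\sigma^2\sum_i a_i^2),\, t/(\sigma\max_i|a_i|)\}$ for a suitable universal $c$. Finally I would apply the identical argument to $-S$ and take a union bound over $\{S > t\}$ and $\{-S > t\}$, which supplies both the factor of $2$ and the absolute value in the statement.

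The main obstacle is the bookkeeping in the two-regime Chernoff step: one must verify that the boundary choice of $\lambda$ genuinely yields the linear-in-$t$ exponent with a clean universal constant, and that the two cases collapse into the single $\min$ expression without degrading $c$. Concretely, one checks the crossover point where $\lambda^\star$ hits the boundary, namely $t \asymp \sigma (\sum_i a_i^2)/\max_i|a_i|$, and confirms that the quadratic tail is the binding one below it and the linear tail above it, so that the two estimates agree with the single $\min$. Tracking how $C_1$ and $c_1$ propagate into the final $c$ is the only delicate part; the remainder is the routine MGF-factorization-and-Markov template.
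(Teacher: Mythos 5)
Your proposal is correct: the quadratic MGF bound near the origin for mean-zero sub-exponential variables, factorization via independence, the two-regime Chernoff optimization (where at the boundary value $\lambda = c_1/(\sigma\max_i|a_i|)$ the quadratic term is dominated by $\lambda t/2$ because the boundary lies below the unconstrained minimizer $\lambda^\star$, yielding the linear-in-$t$ exponent), and the union bound over $S$ and $-S$ together deliver exactly the stated $\min$-form tail. The paper gives no proof of this statement --- it imports it verbatim as Lemma~S.5 of \citet{Bartlett2020} --- and your argument is precisely the canonical proof (cf.\ Theorem~2.8.2 of \citet{vershynin_2018}) underlying that citation, so the approaches coincide.
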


\begin{proposition}[Corollary~S.6 in \citet{Bartlett2020}]\label{cor:Bernstein}
    There is a universal constant $c$ such that for any
	non-increasing sequence $\{\lambda_i\}_{i =1}^\infty $
    of non-negative numbers such that
    $\sum_{i=1}^\infty \lambda_i < \infty$, and any
    independent, centered, $\sigma$-sub-exponential random variables
    $\{\xi_i\}_{i = 1}^\infty$, and any $x > 0$, with probability
    at least $1-2e^{-x}$
	  \[
	    \left|\sum_{i} \lambda_i \xi_i\right| \leq
        c\sigma\max\left(x\lambda_1, \sqrt{x \sum_{i}
        \lambda_i^2}\right).
	  \]
\end{proposition}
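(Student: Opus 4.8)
The plan is to deduce this corollary from Bernstein's inequality (Proposition~\ref{lemma:Bernstein}) by choosing the weights there to be $a_i = \lambda_i$, and to handle the infinite index set through a truncation-and-limit argument. Write $\Lambda_2 := \sum_{i=1}^\infty \lambda_i^2$, which is finite since monotonicity gives $\Lambda_2 \le \lambda_1 \sum_i \lambda_i < \infty$, and set $S_N := \sum_{i=1}^N \lambda_i \xi_i$ and $S := \sum_{i=1}^\infty \lambda_i \xi_i$. First I would check that $S$ is well defined: each centered $\sigma$-sub-exponential $\xi_i$ satisfies $\Var(\xi_i) \le C\sigma^2$ for a universal $C$, so $\sum_i \Var(\lambda_i \xi_i) \le C\sigma^2 \Lambda_2 < \infty$, and Kolmogorov's convergence theorem for independent centered summands with summable variances yields $S_N \to S$ almost surely.

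Next I would apply Proposition~\ref{lemma:Bernstein} to the finite sum $S_N$ with $a_i = \lambda_i$. Because the sequence is non-increasing and non-negative, $\max_{1 \le i \le N} \lambda_i = \lambda_1$, and $\sum_{i=1}^N \lambda_i^2 \le \Lambda_2$. Hence, for every threshold $t \ge 0$,
\[
\Pbb\left(|S_N| > t\right) \le 2\exp\left(-c_0 \min\left(\frac{t^2}{\sigma^2 \Lambda_2}, \frac{t}{\sigma \lambda_1}\right)\right),
\]
where $c_0$ is the universal constant of that proposition; crucially the right-hand side is independent of $N$.

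I would then set $t = M := c\,\sigma \max\bigl(x\lambda_1, \sqrt{x\Lambda_2}\bigr)$, choosing the universal constant $c$ large enough that $c_0 \min(c, c^2) \ge 1$. A direct check gives $\tfrac{M}{\sigma\lambda_1} \ge c\,x$ and $\tfrac{M^2}{\sigma^2 \Lambda_2} \ge c^2 x$, so the Bernstein exponent is at least $x$, and therefore $\Pbb(|S_N| > M) \le 2e^{-x}$ uniformly in $N$. Finally I would transfer this to the series: on the almost-sure event where $S_N \to S$, any outcome with $|S| > M$ has $|S_N| > M$ for all large $N$, so $\ind\{|S| > M\} \le \liminf_N \ind\{|S_N| > M\}$ almost surely, and Fatou's lemma yields $\Pbb(|S| > M) \le \liminf_N \Pbb(|S_N| > M) \le 2e^{-x}$. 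Since $\Lambda_2 = \sum_i \lambda_i^2$, this is exactly the stated bound.

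The algebra verifying that the chosen $M$ pushes the Bernstein exponent above $x$ is routine. The only delicate point is the passage from the partial sums to the infinite series: it rests on the uniform-in-$N$ character of the tail bound together with the almost-sure convergence supplied by Kolmogorov's theorem, and I expect establishing (and cleanly invoking) that convergence to be the main thing to get right.
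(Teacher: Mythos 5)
Your proof is correct. The paper states this proposition without proof, importing it verbatim as Corollary~S.6 of \citet{Bartlett2020}; your derivation --- Bernstein's inequality (Proposition~\ref{lemma:Bernstein}) with $a_i=\lambda_i$, using monotonicity to get $\max_i \lambda_i = \lambda_1$ and $\sum_{i\le N}\lambda_i^2 \le \sum_i \lambda_i^2$ uniformly in $N$, then the threshold $t = c\sigma\max\bigl(x\lambda_1, \sqrt{x\sum_i\lambda_i^2}\bigr)$ pushing the exponent above $x$, and finally Kolmogorov's one-series theorem plus Fatou to pass from partial sums to the infinite series --- is exactly the standard argument behind the cited result, with the infinite-index-set step handled more explicitly and carefully than the reference bothers to.
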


\begin{proposition}[$\epsilon$-net argument, Lemma~S.8 in \citet{Bartlett2020}]\label{lemma::norm_by_net}
  Suppose $A \in \R^{n\times n}$ is a symmetric matrix, and
  $\Nc_\epsilon$ is an $\epsilon$-net on the unit sphere $\Sc^{n-1}$ in
  the Euclidean norm, where $\epsilon < \frac12$. Then
    \[
      \|A\| \leq (1 - \epsilon)^{-2} \max_{x \in \Nc_\epsilon} |x^\top Ax|.
    \]
\end{proposition}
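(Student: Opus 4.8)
The plan is to reduce the operator-norm bound to a statement about the quadratic form along a single near-maximizing direction, exploiting that $A$ is symmetric so that its norm is a numerical radius attained at an eigenvector. First I would invoke the spectral theorem: writing $A=\sum_i\lambda_i v_iv_i^\top$ with orthonormal eigenvectors $v_i$, the Rayleigh quotient $x^\top A x=\sum_i\lambda_i\langle x,v_i\rangle^2$ is a convex combination of the eigenvalues on $\Sc^{n-1}$, so $\max_{x\in\Sc^{n-1}}|x^\top A x|=\max_i|\lambda_i|=\|A\|$, and this maximum is attained at a unit eigenvector $v$ with $Av=\lambda v$ and $|\lambda|=\|A\|$. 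This identity is what makes the symmetry hypothesis essential; for a general matrix the numerical radius can be strictly smaller than the operator norm, and the stated bound would fail.

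Next I would use the net to replace $v$ by a nearby grid point. Choose $y\in\Nc_\epsilon$ with $\|v-y\|\le\epsilon$ and set $h:=y-v$, so that $\|h\|\le\epsilon$. The crucial point is that $v$ and $y$ both lie on $\Sc^{n-1}$, which pins down the first-order term: expanding $1=\|y\|^2=\|v+h\|^2=1+2\,v^\top h+\|h\|^2$ gives $v^\top h=-\|h\|^2/2$. Substituting $Av=\lambda v$ into the expansion $y^\top A y=v^\top A v+2\,v^\top A h+h^\top A h=\lambda+2\lambda\,v^\top h+h^\top A h$ and using the previous identity yields the clean form $y^\top A y=\lambda\bigl(1-\|h\|^2\bigr)+h^\top A h$.

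From here the lower bound is immediate: since $\|h\|\le\epsilon<\tfrac12$ we have $1-\|h\|^2>0$, so $|y^\top A y|\ge|\lambda|(1-\|h\|^2)-|h^\top A h|\ge|\lambda|(1-\|h\|^2)-\|A\|\,\|h\|^2=\|A\|(1-2\|h\|^2)\ge\|A\|(1-2\epsilon^2)$. Taking the maximum over the net gives $\max_{x\in\Nc_\epsilon}|x^\top A x|\ge(1-2\epsilon^2)\|A\|$, that is, $\|A\|\le(1-2\epsilon^2)^{-1}\max_{x\in\Nc_\epsilon}|x^\top A x|$.

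The last step, and essentially the only place where the stated constant requires care, is to convert $(1-2\epsilon^2)^{-1}$ into the advertised $(1-\epsilon)^{-2}$. I would verify the elementary inequality $(1-\epsilon)^2\le 1-2\epsilon^2$, equivalently $\epsilon(3\epsilon-2)\le0$, which holds for all $\epsilon\in[0,2/3]$ and in particular for $\epsilon<\tfrac12$; hence $(1-2\epsilon^2)^{-1}\le(1-\epsilon)^{-2}$ and the claim follows. I expect this constant bookkeeping to be the main obstacle: the naive net estimate $x^\top A x-y^\top A y=(x-y)^\top A(x+y)$ only yields the weaker factor $(1-2\epsilon)^{-1}$, which exceeds $(1-\epsilon)^{-2}$, so reaching the stated constant genuinely requires exploiting that the maximizer is an \emph{exact} eigenvector together with the spherical constraint $v^\top h=-\|h\|^2/2$, rather than a generic Lipschitz bound on the quadratic form.
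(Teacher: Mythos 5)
Your proof is correct. For the comparison: this paper does not actually prove the proposition --- it is stated in the appendix as an imported auxiliary result (Lemma~S.8 of \citet{Bartlett2020}), so the only ``in-paper proof'' is a citation, and your argument supplies a valid self-contained derivation. Every step checks out: symmetry gives $\|A\|=\max_{\|x\|=1}|x^\top Ax|=|\lambda|$, attained at a unit eigenvector $v$ (the maximum exists by compactness of $\Sc^{n-1}$, which is where finite-dimensionality of $A\in\R^{n\times n}$ enters); for a net point $y\in\Nc_\epsilon$ with $h=y-v$, the spherical constraint forces $v^\top h=-\|h\|^2/2$, whence $y^\top Ay=\lambda(1-\|h\|^2)+h^\top Ah$ and, using $1-\|h\|^2>0$ in the triangle inequality, $|y^\top Ay|\ge\|A\|(1-2\|h\|^2)\ge\|A\|(1-2\epsilon^2)$; and the elementary inequality $(1-\epsilon)^2\le 1-2\epsilon^2$, valid for $\epsilon\le 2/3$ and hence on $\epsilon<\frac12$, legitimately converts your constant into the advertised one, since both $1-2\epsilon^2$ and $(1-\epsilon)^2$ are positive there. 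Indeed you prove something slightly stronger, $\|A\|\le(1-2\epsilon^2)^{-1}\max_{x\in\Nc_\epsilon}|x^\top Ax|$, with $(1-2\epsilon^2)^{-1}\le(1-\epsilon)^{-2}$. Your closing diagnosis is also accurate and is the real content of the lemma's constant: the generic Lipschitz-on-the-net estimate $|x^\top Ax-y^\top Ay|\le 2\epsilon\|A\|$ only yields the factor $(1-2\epsilon)^{-1}$, which exceeds $(1-\epsilon)^{-2}$ because $(1-\epsilon)^2\ge 1-2\epsilon$; reaching $(1-\epsilon)^{-2}$ genuinely requires that the maximizer be an exact eigenvector, equivalently writing $y=cv+w$ with $w\perp v$ so that the cross term vanishes and $|y^\top Ay|\ge\|A\|(2c^2-1)$ with $c=\langle y,v\rangle\ge 1-\epsilon^2/2$ --- precisely the mechanism your expansion via $v^\top h=-\|h\|^2/2$ encodes, and the same one underlying the original proof in the supplement of \citet{Bartlett2020}.
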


\begin{proposition}[General Hoeffding inequality, Theorem 2.6.3 in \citet{vershynin_2018}]\label{lemma:Hoeffding}
    There is a universal constant $c$ such that, for any
	independent, mean zero, $\sigma$-sub-Gaussian random variables $\xi_1, \dots, \xi_N$, any $a = (a_1, \dots, a_N)\in \Re^n$, and
    any $t \geq 0$,
	  \begin{align*}
	    \Pbb\left(\left|\sum_{i=1}^N a_i \xi_i\right| > t\right)
            \leq 2\exp\left(-c\frac{t^2}{\sigma^2 \sum_{i=1}^N
        a_i^2}\right).
	  \end{align*}
\end{proposition}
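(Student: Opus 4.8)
The plan is to show that the weighted sum $S = \sum_{i=1}^N a_i \xi_i$ is itself sub-Gaussian, with variance proxy $\sigma^2 \sum_{i=1}^N a_i^2$, and then to extract the two-sided tail bound by a standard Chernoff argument. This reduces the weighted-sum statement to the elementary one-variable tail bound for a single sub-Gaussian variable, with the variance proxy doing all the bookkeeping.

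First I would record how the sub-Gaussian property transforms under scaling and summation. Since each $\xi_i$ is $\sigma$-sub-Gaussian, for every $\lambda \in \Re$ we have $\E[\exp(\lambda a_i \xi_i)] \leq \exp(\sigma^2 a_i^2 \lambda^2 / 2)$; that is, $a_i \xi_i$ is $(\sigma\,|a_i|)$-sub-Gaussian. Using the independence of $\xi_1, \ldots, \xi_N$, the moment generating function of $S$ factorizes, so
\[
\E[\exp(\lambda S)] = \prod_{i=1}^N \E[\exp(\lambda a_i \xi_i)] \leq \exp\left(\frac{\sigma^2 \lambda^2}{2} \sum_{i=1}^N a_i^2\right).
\]
Hence $S$ is $\sigma_S$-sub-Gaussian with $\sigma_S^2 = \sigma^2 \sum_{i=1}^N a_i^2$.

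Next I would apply the exponential Markov (Chernoff) inequality. For any $\lambda > 0$, $\Pbb(S > t) \leq \exp(-\lambda t)\,\E[\exp(\lambda S)] \leq \exp(-\lambda t + \sigma_S^2 \lambda^2 / 2)$. Minimizing the exponent over $\lambda > 0$ at $\lambda = t/\sigma_S^2$ yields $\Pbb(S > t) \leq \exp(-t^2 / (2\sigma_S^2))$. Applying the identical argument to $-S$, which has the same variance proxy, gives the matching lower-tail bound, and a union bound over the two events produces
\[
\Pbb(|S| > t) \leq 2 \exp\left(-\frac{t^2}{2 \sigma^2 \sum_{i=1}^N a_i^2}\right),
\]
which is the claim with universal constant $c = 1/2$.

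This argument is essentially routine, and there is no genuine obstacle; the only point requiring care is the bookkeeping of the sub-Gaussian parameter convention. Here ``$\sigma$-sub-Gaussian'' is taken to mean $\E[\exp(\lambda \xi_i)] \leq \exp(\sigma^2\lambda^2/2)$, consistent with the normalization used in Assumption~\ref{asmp:basic}, so that the variance proxy of $S$ scales exactly as $\sigma^2 \sum_i a_i^2$ and the constant $c$ emerges as $1/2$. Since the statement is quoted verbatim as Theorem~2.6.3 of \citet{vershynin_2018}, one may alternatively simply invoke that reference; the derivation above is recorded only for self-containedness.
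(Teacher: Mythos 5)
The paper states this proposition without proof, importing it directly from \citet{vershynin_2018}, and your Chernoff argument is exactly the standard textbook derivation of that result, so it is correct and matches the canonical route. Your closing remark on conventions is the right one to make: under the paper's MGF definition of $\sigma$-sub-Gaussian (as in Assumption~\ref{asmp:basic}), your explicit constant $c=1/2$ is valid, whereas Vershynin's $\psi_2$-norm formulation only yields an unspecified universal constant, which is why the paper's statement is phrased with a generic $c$.
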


\begin{corollary}\label{cor:Hoeffding}
    There is a universal constant $c$ such that for any
	non-increasing sequence $\{\lambda_i\}_{i =1}^\infty $
    of non-negative numbers such that
    $\sum_{i=1}^\infty \lambda_i < \infty$, and any
    independent, centered, $\sigma$-sub-Gaussian random variables
    $\{\xi_i\}_{i = 1}^\infty$, and any $x > 0$, with probability
    at least $1-2e^{-x}$
	  \[
	    \left|\sum_{i} \lambda_i \xi_i\right| \leq
        c\sigma\sqrt{x \sum_{i}
        \lambda_i^2}.
	  \]
\end{corollary}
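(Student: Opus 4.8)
The plan is to reduce the infinite-sum statement to the finite General Hoeffding inequality (Proposition~\ref{lemma:Hoeffding}) applied to partial sums, and then pass to the limit. First I would record the elementary fact that $\sum_{i}\lambda_i^2<\infty$: since $\{\lambda_i\}$ is non-increasing, non-negative and summable, we have $\lambda_i\to 0$, so $\lambda_i\le 1$ for all large $i$, whence $\lambda_i^2\le\lambda_i$ and $\sum_i\lambda_i^2\le\sum_i\lambda_i+C<\infty$ for a finite correction $C$ absorbing the initial terms. This makes the right-hand side $c\sigma\sqrt{x\sum_i\lambda_i^2}$ finite and provides the $L^2$ control needed below.

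Next I would establish almost-sure convergence of the partial sums $S_N:=\sum_{i=1}^N\lambda_i\xi_i$. The $\xi_i$ are independent, centered, and $\sigma$-sub-Gaussian, hence satisfy $\Var(\xi_i)\le\sigma^2$ (differentiate the sub-Gaussian moment-generating bound twice at the origin), so $\E[S_N^2]=\sum_{i=1}^N\lambda_i^2\Var(\xi_i)\le\sigma^2\sum_{i=1}^\infty\lambda_i^2<\infty$. Thus $\{S_N\}_N$ is an $L^2$-bounded martingale, equivalently Kolmogorov's one-series theorem applies, and $S_N\to S:=\sum_i\lambda_i\xi_i$ almost surely and in $L^2$.

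Then I would apply Proposition~\ref{lemma:Hoeffding} to $S_N$ with $a_i=\lambda_i$ and the choice $t=c_0\sigma\sqrt{x\sum_{i}\lambda_i^2}$, where $c_0:=1/\sqrt{c_{\mathrm H}}$ and $c_{\mathrm H}$ is the universal constant of the Hoeffding inequality. Because $\sum_{i=1}^N\lambda_i^2\le\sum_{i=1}^\infty\lambda_i^2$, the exponent obeys $c_{\mathrm H}t^2/(\sigma^2\sum_{i=1}^N\lambda_i^2)\ge c_{\mathrm H}t^2/(\sigma^2\sum_{i=1}^\infty\lambda_i^2)=x$, so the bound $\Pbb(|S_N|>t)\le 2e^{-x}$ holds uniformly in $N$ for this single fixed threshold $t$.

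Finally I would transfer the uniform bound to the limit. Since $S_N\to S$ almost surely, on the event $\{|S|>t\}$ one eventually has $|S_N|>t$, so $\{|S|>t\}\subseteq\liminf_N\{|S_N|>t\}$; Fatou's lemma for events then gives $\Pbb(|S|>t)\le\liminf_N\Pbb(|S_N|>t)\le 2e^{-x}$. Renaming $c_0$ as $c$ yields the claim. The only nonroutine point is the passage from finite to infinite sums: one must verify summability of $\{\lambda_i^2\}$, secure convergence of the series $\sum_i\lambda_i\xi_i$, and justify that the tail bound survives the limit, which is precisely why the threshold $t$ must be fixed independently of $N$ before letting $N\to\infty$.
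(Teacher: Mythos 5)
Your proof is correct and follows the route the paper intends: the paper states this corollary without any explicit proof, as an immediate consequence of Proposition~\ref{lemma:Hoeffding} applied with $a_i=\lambda_i$ and the threshold $t=\sigma\sqrt{x\sum_i\lambda_i^2/c_{\mathrm{H}}}$ (mirroring how Corollary~S.6 follows from Bernstein's inequality in the cited reference), which is exactly the instantiation you use. Your additional care---verifying $\sum_i\lambda_i^2<\infty$, securing almost-sure convergence of $\sum_i\lambda_i\xi_i$ via Kolmogorov's one-series theorem, and transferring the uniform-in-$N$ tail bound to the limit by fixing the threshold before letting $N\to\infty$ and invoking Fatou---rigorously fills in the finite-to-infinite passage that the paper leaves implicit, and every step is sound.
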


\section{Proof of Lemma~\ref{lemma:bv}}
\label{appdx:lemma:bv}
\begin{proof}
First, as in the proof of \citet{Bartlett2020}, we have
\begin{align*}
R\big(\hat{\theta}^{\mathrm{T\mathchar`-learner}}\big)=\mathbb{E}_{x,y}\Big[\big(\tilde{y} - x^\top \hat{\theta} \big)^2 - \big(\tilde{y} - x^\top \theta^* \big)^2\Big] = \mathbb{E}_{x}\Big[\Big(x^\top \big( \theta^* - \hat{\theta} \big)\Big)^2\Big].
\end{align*}
We can decompose $\mathbb{E}_{x}\Big[\Big(x^\top \big( \theta^* - \hat{\theta} \big)\Big)^2\Big]$ as
\begin{align*}
&\mathbb{E}_{x}\Big[\Big(x^\top \big( \theta^* - \hat{\theta} \big)\Big)^2\Big]\\
&= \mathbb{E}_{x}\Big[\Big(x^\top \big(\theta^*_1 - \hat{\theta}_1\big) - x^\top\big( \theta^*_0 - \hat{\theta}_0 \big)\Big)^2\Big]\\
&= \annot{\mathbb{E}_{x}\Big[\Big(x^\top \big(\theta^*_1 - \hat{\theta}_1\big)\Big)^2\Big]}{Excess risk of $y_{1}$.} + \annot{\mathbb{E}_{x}\Big[\Big(x^\top\big( \theta^*_0 - \hat{\theta}_0 \big)\Big)^2\Big]}{Excess risk of $y_{0}$} - 2\annot{\mathbb{E}_{x}\Big[\Big( \big(\theta^*_1 - \hat{\theta}_1\big)^\top x\Big)\Big( x^\top\big(\theta^*_0 - \hat{\theta}_0 \big) \Big)\Big]}{Excess risk incurred by the difference between $y_{1}$ and $y_{0}$.}.
\end{align*}
In Lemma~2 of \citet{Bartlett2020}, by using Lemma~S.2 and S.18 of \citet{Bartlett2020}, the authors show 
\begin{align*}
\mathbb{E}_{x}\Big[\Big(x^\top \big(\theta^*_1 - \hat{\theta}_1\big)\Big)^2\Big] &\leq 2\theta^{*\top}_1 B_1 \theta^*_1 + 2\bm{\varepsilon}^\top_1 C_1 \bm{\varepsilon}_1,\\
\mathbb{E}_{x,\bm{\varepsilon}}\Big[\Big(x^\top \big(\theta^*_1 - \hat{\theta}_1\big)\Big)^2\Big]&\geq \theta^{*\top}_1 B_1 \theta^*_1 + \sigma^2_1 \mathrm{tr}(C_1),\\
\mathbb{E}_{x}\Big[\Big(x^\top\big( \theta^*_0 - \hat{\theta}_0 \big)\Big)^2\Big] &\leq 2\theta^{*\top}_0 B_0 \theta^*_0 + 2\bm{\varepsilon}^\top_0 C_0 \bm{\varepsilon}_0,\\
\mathbb{E}_{x,\bm{\varepsilon}}\Big[\Big(x^\top\big( \theta^*_0 - \hat{\theta}_0 \big)\Big)^2\Big] &\geq  \theta^{*\top}_0 B_0 \theta^*_0 + \sigma^2_0 \mathrm{tr}(C_0).
\end{align*}
Our remaining task is to consider the bound of $ \Big( \big(\theta^*_1 - \hat{\theta}_1\big)^\top x\Big)\Big( x^\top\big(\theta^*_0 - \hat{\theta}_0 \big) \Big)$, which is decomposed as
\begin{align*}
&\mathbb{E}_{x}\Big[\Big( \big(\theta^*_1 - \hat{\theta}_1\big)^\top x\Big)\Big(x^\top\big(\theta^*_0 - \hat{\theta}_0 \big) \Big)\Big]\\
&= \mathbb{E}_{x}\Big[\Big(\theta^{*\top}_1\big(I- X^\top_1(X_1X^\top_1)^{-1}X_1 \big) x - \bm \varepsilon^\top_1(X_1X^\top_1)^{-1}X_1  x\Big)\\
&\ \ \ \ \ \ \ \times \Big(x^\top \big(I - X^\top_0(X_0X^\top_0)^{-1}X_0 \big) \theta^*_0 - x^\top X^\top_0(X_0X^\top_0)^{-1}\bm \varepsilon_0\Big)\Big]\\
&= \theta^{*\top}_1\big(I - X^\top_1(X_1X^\top_1)^{-1}X_1 \big) \Sigma \big(I - X^\top_0(X_0X^\top_0)^{-1}X_0 \big) \theta^*_0 \\
&\ \ \ - \theta^{*\top}_1\big(I - X^\top_1(X_1X^\top_1)^{-1}X_1 \big) \Sigma X^\top_0(X_0X^\top_0)^{-1} \bm \varepsilon_0 \\
&\ \ \ - \bm \varepsilon^\top_1(X_1X^\top_1)^{-1}X_1 \Sigma \big(I - X^\top_0(X_0X^\top_0)^{-1}X_0 \big) \theta^*_0\\
&\ \ \ + \bm \varepsilon^\top_1(X_1X^\top_1)^{-1}X_1  \Sigma X^\top_0(X_0X^\top_0)^{-1} \bm \varepsilon_0.
\end{align*}
Therefore, $\mathbb{E}_{x}\Big[\Big( \big(\theta^*_1 - \hat{\theta}_1\big)^\top x\Big)\Big(x^\top\big(\theta^*_0 - \hat{\theta}_0 \big) \Big)\Big] = \theta^{*\top}_1B_{1,0}\theta^*_0 - \theta^{*\top}_1D \bm \varepsilon_0 - \bm \varepsilon^\top_1 E\theta^*_0 + \bm \varepsilon^\top_1F\bm \varepsilon_0$, and $\mathbb{E}_{x,\bm{\varepsilon}}\Big[\Big( \big(\theta^*_1 - \hat{\theta}_1\big)^\top x\Big)\Big(x^\top\big(\theta^*_0 - \hat{\theta}_0 \big) \Big)\Big] = \theta^{*\top}_1B_{1,0}\theta^*_0$.

Thus, we obtain the upper bound as
\begin{align*}
&R\big(\hat{\theta}^{\mathrm{T\mathchar`-learner}}\big)\\
&\leq 2\theta^{*\top}_1 B_1 \theta^*_1 + 2\bm{\varepsilon}^\top_1 C_1 \bm{\varepsilon}_1 + 2\theta^{*\top}_0 B_0 \theta^*_0 + 2\bm{\varepsilon}^\top_0 C_0 \bm{\varepsilon_0}- 2\theta^{*\top}_1B_{1,0}\theta^*_0 + 2\theta^{*\top}_1E \bm \varepsilon_0 + 2\bm \varepsilon^\top_1 F\theta^*_0 - 2\bm \varepsilon^\top_1G\bm \varepsilon_0
\end{align*}
and the lower bound as 
\begin{align*}
&\mathbb{E}_{x,\bm{\varepsilon}}\Big[R\big(\hat{\theta}^{\mathrm{T\mathchar`-learner}}\big) \Big] \geq \theta^{*\top}_1 B_1 \theta^*_1 + \sigma^2_1 \mathrm{tr}(C_1) + \theta^{*\top}_0 B_0 \theta^*_0 + \sigma^2_1 \mathrm{tr}(C_0) - 2\theta^{*\top}_1B_{1,0}\theta^*_0. 
\end{align*}
\end{proof}

\section{Proof of Lemma~\ref{lemma:bias_1_0}}\label{a:upperB}
\label{appdx:a:upperB}
\begin{proof}
We consider only the case $a=1$.
The case $a=0$ can be proved in the same way. Note that
  \begin{align}\label{equation:Xorthog}
    \left(I - X^\top_1 \left(X_1 X^\top_1 \right)^{-1}X_1\right)X^\top_1
    &= X^\top_1  - X^\top_1 \left(X_1 X^\top_1 \right)^{-1}(X_1X^\top_1 ) =0.
  \end{align}
Moreover, for any $v$ in the orthogonal complement to the span of
the columns of $X^\top_1$,
  \[
    \left(I - X^\top_1 \left(X_1 X^\top_1 \right)^{-1}X_1\right)v = v.
  \]
Thus, 
  \begin{equation}\label{eqn:projection}
    \|I - X^\top_1 \left(X_1 X^\top_1 \right)^{-1}X_1\|\leq 1.
  \end{equation}
Given $\zeta^* = \argmin{\zeta\in\mathbb{R}^+} \left\|\Sigma - \zeta\Sigma_a\right\|$, we apply~\eqref{equation:Xorthog} to write
  \begin{align*}
    {\theta^*_1}^\top B_1 \theta^*_1
    & 
     = {\theta^*_1}^\top \left(I - X^\top_1
      \left(X_1 X^\top_1 \right)^{-1}X_1\right)\Sigma\left(I - X^\top_1
      \left(X_1 X^\top_1 \right)^{-1}X_1\right)\theta^*_1 \\
    &= {\theta^*_1}^\top \left(I
      - X^\top_1 \left(X_1 X^\top_1 \right)^{-1}X_1\right)\left(\zeta^*\Sigma_1 -
      \zeta^*\frac{1}{n} X^\top_1 X_1\right) 
        \left(I - X^\top_1
      \left(X_1 X^\top_1 \right)^{-1}X_1\right)\theta^*_1\\
     &\ \ \ + {\theta^*_1}^\top \left(I
      - X^\top_1 \left(X_1 X^\top_1 \right)^{-1}X_1\right)\left(\Sigma -
      \zeta^*\Sigma_1\right) 
        \left(I - X^\top_1
      \left(X_1 X^\top_1 \right)^{-1}X_1\right)\theta^*_1.
  \end{align*}
Combining with~\eqref{eqn:projection} shows that 
  \[
    {\theta^*_1}^\top B_1 \theta^*_1
    \leq \zeta^*\left\|\Sigma_1 - \frac1n X^\top_1 X_1\right\|\|\theta^*_1\|^2 + \left\|\Sigma - \zeta^*\Sigma_1\right\|\|\theta^*_1\|^2.
  \]
Thus, due to Theorem~9 in~\cite{Koltchinskii2017},
there is an absolute constant $c$ such that for any $t > 1$  with
probability at least $1-\exp(-t)$,
  \[
    {\theta^*_1}^\top B_1 \theta^*_1
    \leq c\|\theta^*_1\|^2\|\Sigma_1\|\max\left\{\sqrt{\frac{r(\Sigma_1)}{n}},
      \frac{r(\Sigma_1)}{n}, \sqrt{\frac{t}{n}}, \frac{t}{n} \right\} + \left\|\Sigma - \zeta^* \Sigma_1\right\|\|\theta^*_1\|^2,
  \]
where 
  \[
    r(\Sigma_1) := \frac{(\E\|x\|)^2}{\|\Sigma_1\|}
    \leq \frac{\tr(\Sigma_1)}{\|\Sigma_1\|}
    = \frac{1}{\lambda_1}\sum_i \lambda_i = r_0(\Sigma_1).
  \]
Since $1<t<n$ implies $\sqrt{\frac{t}{n}} > \frac{t}{n}$, 
 \[
    {\theta^*_1}^\top B_1 \theta^*_1
    \leq c\|\theta^*_1\|^2\|\Sigma_1\|\max\left\{\sqrt{\frac{r(\Sigma_1)}{n}},
      \frac{r(\Sigma_1)}{n}, \sqrt{\frac{t}{n}} \right\} + \left\|\Sigma - \zeta^* \Sigma_1\right\|\|\theta^*_1\|^2,
  \]
\end{proof}

\section{Proof of Lemma~\ref{lemma:bias_10}}
\label{appdx:lemma:bias_10}
\begin{proof}
Recall that 
\begin{align*}
B_{1,0} = \big(I - X^\top_1(X_1X^\top_1)^{-1}X_1 \big) \Sigma \big(I - X^\top_0(X_0X^\top_0)^{-1}X_0 \big)
\end{align*}

Now we can apply~\eqref{equation:Xorthog} to write
  \begin{align*}
    &{\theta^*_1}^\top B_{1,0} \theta^*_0\\
    & 
     = {\theta^*_1}^\top \left(I - X^\top_1
      \left(X_1 X^\top_1 \right)^{-1}X_1\right)\Sigma\left(I - X^\top_0
      \left(X_0 X^\top_0 \right)^{-1}X_0\right)\theta^*_0 \\
    &= {\theta^*_1}^\top \left(I
      - X^\top_1 \left(X_1 X^\top_1 \right)^{-1}X_1\right)\left(\Sigma -
      \frac1n X^\top X\right) 
        \left(I - X^\top_0
      \left(X_0 X^\top_0 \right)^{-1}X_0\right)\theta^*_0 + {\theta^*_1}^\top\frac1n X^\top_0 X_1\theta^*_0,
  \end{align*}
  where $X^\top X = X^\top_1X_1 + X^\top_1X_0 + X^\top_0X_1 + X^\top_0X_0$.
Here, note that $X^\top_0 X_1=0$. Combining with~\eqref{eqn:projection} shows that
  \[
    {\theta^*_1}^\top B_{1,0} \theta^*_1
    \leq \left\|\Sigma - \frac1n X^\top X\right\|\|\theta^*_0\|\|\theta^*_1\|.
  \]
Thus, due to Theorem~9 in~\cite{Koltchinskii2017},
there is an absolute constant $c$ such that for any $t > 1$  with
probability at least $1-\exp(-t)$,
  \[
    \Big|2{\theta^*}^\top_1 B_{1,0} \theta^*_0\Big|
    \leq c\|\theta^*_1\|\|\theta^*_0\|\|\Sigma\|\max\left\{\sqrt{\frac{r(\Sigma)}{n}},
      \frac{r(\Sigma)}{n}, \sqrt{\frac{t}{n}}, \frac{t}{n} \right\},
  \]
where 
  \[
    r(\Sigma) := \frac{(\E\|x\|)^2}{\|\Sigma\|}
    \leq \frac{\tr(\Sigma)}{\|\Sigma\|}
    = \frac{1}{\lambda_1}\sum_i \lambda_i = r_0(\Sigma).
  \]
\end{proof}

\section{Proof of Lemma~\ref{lem:bound_d_e}}
\label{appdx:lem:bound_d_e}
We use the fact $X_1 \Sigma X_0^\top = 0$ and compute the term as
\begin{align*}
D &= \big(I - X^\top_1(X_1X^\top_1)^{-1}X_1 \big) \Sigma X^\top_0(X_0X^\top_0)^{-1} = \Sigma X^\top_0(X_0X^\top_0)^{-1}
\end{align*}

\begin{lemma}
\label{lem:concent_ineq1}
For each $a\in\big\{0, 1\big\}$, consider random variables $\bm{\varepsilon}_a = \big(\varepsilon_{a, 1}\ \cdots\ \varepsilon_{a, n}\big)$, 
conditionally independent given $X_a$ 
and conditionally $\sigma^2$ sub-Gaussian, that is, for all $\lambda \in \mathbb{R}$,
\begin{align*}
    \mathbb{E}\left[\exp\big(\lambda \varepsilon_{a, i}\big)| X_a\right] \leq \exp\big(\sigma^2\lambda^2/2\big).
\end{align*}
Then a.s. on $X_1$ and $X_0$, with conditional probability at least $1-\exp(-t)$,
\begin{align*}
   \theta^{*\top}_1\Sigma X^\top_0(X_0X^\top_0)^{-1}\bm{\varepsilon}_0 <  \sqrt{2t\sigma^2\left\|\theta^{*\top}_1\Sigma X^\top_0(X_0X^\top_0)^{-1}\right\|^2_2}.
\end{align*}
Similarly, a.s. on $X_1$ and $X_0$, with conditional probability at least $1-\exp(-t)$,
\begin{align*}
   \bm{\varepsilon}_1(X_1X^\top_1)^{-1}X_1 \Sigma\theta^{*}_0 <  \sqrt{2t\sigma^2\left\|\theta^{*\top}_0\Sigma X^\top_1(X_1X^\top_1)^{-1}\right\|^2_2}.
\end{align*}
\end{lemma}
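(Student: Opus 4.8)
The plan is to recognize both displayed inequalities as standard sub-Gaussian tail bounds for a weighted sum of the noise coordinates, where the weight vector becomes deterministic once we condition on the design. Fix $X_0$ and $X_1$. Then the row vector $v^\top := \theta^{*\top}_1\Sigma X^\top_0(X_0X^\top_0)^{-1} \in \Re^{1\times n}$ is a fixed (non-random) object, and the quantity of interest is the scalar linear form $v^\top \bm{\varepsilon}_0 = \sum_{i=1}^n v_i \varepsilon_{0,i}$. By the hypotheses of the lemma the coordinates $\varepsilon_{0,1},\dots,\varepsilon_{0,n}$ are conditionally independent and $\sigma^2$-sub-Gaussian given $X_0$, so the claim reduces to a one-dimensional concentration statement for $Y := v^\top\bm{\varepsilon}_0$.

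First I would compute the conditional moment generating function of $Y$, using conditional independence to factor the expectation and then the sub-Gaussian hypothesis coordinatewise:
\[
\E\big[\exp(\lambda Y)\,\big|\,X_0\big] = \prod_{i=1}^n \E\big[\exp(\lambda v_i \varepsilon_{0,i})\,\big|\,X_0\big] \le \prod_{i=1}^n \exp\!\left(\tfrac{1}{2}\sigma^2 \lambda^2 v_i^2\right) = \exp\!\left(\tfrac{1}{2}\sigma^2\lambda^2\|v\|_2^2\right),
\]
so that $Y$ is $\sigma^2\|v\|_2^2$-sub-Gaussian conditionally on $X_0$. I would then apply the Chernoff bound: for any $\lambda>0$ and $s>0$, $\Pbb(Y > s\mid X_0) \le \exp(-\lambda s + \tfrac12\sigma^2\lambda^2\|v\|_2^2)$, and optimizing the exponent at $\lambda = s/(\sigma^2\|v\|_2^2)$ yields $\Pbb(Y > s \mid X_0) \le \exp\!\big(-s^2/(2\sigma^2\|v\|_2^2)\big)$. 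Choosing $s = \sqrt{2t\sigma^2\|v\|_2^2}$ makes the right-hand side equal to $\exp(-t)$, which is exactly the first displayed inequality since $\|v\|_2^2 = \|\theta^{*\top}_1\Sigma X^\top_0(X_0X^\top_0)^{-1}\|_2^2$. The second inequality follows by the identical argument with the roles of the two treatment groups interchanged, taking the weight vector to be $\theta^{*\top}_0\Sigma X^\top_1(X_1X^\top_1)^{-1}$ and the noise vector $\bm{\varepsilon}_1$.

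There is no substantive obstacle here; the result is a textbook Hoeffding-type bound. The one point worth flagging is that I would derive the tail estimate directly via the Chernoff method rather than quoting the general Hoeffding inequality (Proposition~\ref{lemma:Hoeffding}), because the latter carries an unspecified universal constant $c$ and an extra factor of $2$, whereas the stated bound has the sharp constants (probability $1-\exp(-t)$ with threshold $s^2 = 2t\sigma^2\|v\|_2^2$). The only care needed is bookkeeping the conditioning: every expectation and probability is computed conditionally on $X_0$ and $X_1$, so the coefficient vector $v$ is genuinely fixed, and the conclusion holds almost surely over the design, as the statement requires.
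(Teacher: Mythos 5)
Your proposal is correct and matches the paper's own proof essentially step for step: the paper likewise conditions on the design, factors the conditional moment generating function of the linear form $\bm{v}^\top\bm{\varepsilon}_a$ using conditional independence and coordinatewise sub-Gaussianity, applies the Chernoff bound, optimizes at $\lambda = s/(\sigma^2\|\bm v\|_2^2)$, and sets $t = s^2/(2\sigma^2\|\bm v\|_2^2)$. Your remark about deriving the sharp constant directly rather than invoking the general Hoeffding inequality is exactly what the paper does as well.
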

\begin{proof}
We only prove the first statement.
The second statement can be shown in the same way.

For any $ \bs v = (v_1,...,v_n) \in \mathbb{R}_+^n$, the following inequality holds: 
\begin{align*}
 &\mathbb{E}\left[\exp\left(\bm{v}^\top\bm{\varepsilon}_a\right) \mid X_a\right] = \prod^n_{i=1}\mathbb{E}\left[\exp\left(v_i\varepsilon_{a,i}\right) \mid X_a\right]\\
 & \leq \prod^n_{i=1}\exp\left(\sigma^2v^2_i/2\right) = \exp\left(\sigma^2\sum^n_{i=1}v^2_i/2\right) = \exp\left(\sigma^2\left\|\bm{v}\right\|^2_2/2\right).
\end{align*}
By replacing $\bm{v}$ with $\lambda \theta^{*\top}_1\Sigma X^\top_0(X_0X^\top_0)^{-1}$ and using Chernoff bound \citep[Section~2.2,][]{Boucheron2004}, for any $s \geq 0$,
\begin{align*}
     \theta^{*\top}_1\Sigma X^\top_0(X_0X^\top_0)^{-1}\bm{\varepsilon} > s
\end{align*}
for $\lambda \geq 0$ with probability at most %
\begin{align*}
\frac{\mathbb{E}\left[\exp\left(\bm{v}^\top\bm{\varepsilon}_a\right) \mid X_a\right]}{\exp(-\lambda s)} \leq \exp\left(\lambda^2\sigma^2\left\|\theta^{*\top}_1\Sigma X^\top_0(X_0X^\top_0)^{-1}\right\|^2_2/2 - \lambda s\right)
\end{align*}
almost surely conditional on $X_1$ and $X_0$. Letting
\begin{align*}
    \lambda = \frac{s}{\sigma^2\left\|\theta^{*\top}_1\Sigma X^\top_0(X_0X^\top_0)^{-1}\right\|^2_2}
\end{align*}
gives the bound
\begin{align*}
    \exp\left(-\frac{s^2}{2\sigma^2\left\|\theta^{*\top}_1\Sigma X^\top_0(X_0X^\top_0)^{-1}\right\|^2_2}\right).
\end{align*}
Then, letting $t = \frac{s^2}{2\sigma^2\left\|\theta^{*\top}_1\Sigma X^\top_0(X_0X^\top_0)^{-1}\right\|^2_2}$, we complete the proof.
\end{proof}

\section{Related Results in Section~\ref{sec:upper_trace}}
\label{appdx:sec:upper_trace}

\begin{corollary}[Refinement of Corollary~S.7 in \citet{Bartlett2020}]
	\label{cor::cor norm of projection}
	Suppose $z\in \Re^n$ is a random vector with independent
    $\sigma^2$-sub-Gaussian coordinates with unit variances,
    $\mathcal{L}$ is a random subspace of $\mathbb{R}^n$ of codimension
    $k$, and $\mathcal{L}$ is independent of $z$. Then for some universal
    constants $c_1, c_2$ and any $t > 0$, with probability at least
    $1-3\exp(-t)$,
	\begin{align*}
	\|z\|^2 &\leq n + c_1(t + \sqrt{nt}),\\
	\|\Pi_{\mathcal{L}} z\|^2 &\geq n - c_2(k + t + \sqrt{nt}),
	\end{align*}
	where $\Pi_{\mathcal{L}}$ is the orthogonal projection on $\mathcal{L}$.
\end{corollary}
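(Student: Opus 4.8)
The plan is to control the two quantities separately and then combine them through the Pythagorean identity $\|\Pi_{\mathcal{L}} z\|^2 = \|z\|^2 - \|\Pi_{\mathcal{L}^\perp} z\|^2$, where $\mathcal{L}^\perp$ is the orthogonal complement of $\mathcal{L}$ and has dimension $k$. The full-norm term $\|z\|^2$ is a sum over independent coordinates, so it is governed by a scalar sub-exponential concentration inequality, whereas the complementary projection term is a quadratic form in $z$ and is governed by the quadratic-form bound of Proposition~\ref{lemma::sub-Gaussian-quadratic-form}. A union bound over these two estimates is what produces the factor $3$ in $1-3\exp(-t)$.

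First I would handle $\|z\|^2$. Since each $z_i$ is $\sigma^2$-sub-Gaussian with $\E z_i^2 = 1$, the centered squares $\xi_i := z_i^2 - 1$ are independent, mean zero, and $O(\sigma^2)$-sub-exponential, and $\|z\|^2 - n = \sum_i \xi_i$. Applying Proposition~\ref{cor:Bernstein} with weights $\lambda_i = 1$ (so $\sum_i \lambda_i^2 = n$ and $\lambda_1 = 1$) yields $\bigl|\|z\|^2 - n\bigr| \le c\,\sigma^2\max(t, \sqrt{nt}) \le c'\sigma^2(t + \sqrt{nt})$ with probability at least $1 - 2\exp(-t)$. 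The upper tail of this two-sided statement is exactly the first displayed inequality, and its lower tail, $\|z\|^2 \ge n - c'\sigma^2(t+\sqrt{nt})$, will feed into the second.

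Next I would bound the projection onto $\mathcal{L}^\perp$. Conditioning on $\mathcal{L}$, which is legitimate because $\mathcal{L} \indep z$, the operator $M := \Pi_{\mathcal{L}^\perp}$ is a fixed rank-$k$ orthogonal projection, so $\|M\| = 1$ and $\tr(M) = \tr(M^2) = k$. Proposition~\ref{lemma::sub-Gaussian-quadratic-form} then gives $\|\Pi_{\mathcal{L}^\perp} z\|^2 = z^\top M z \le \sigma^2 k + 2\sigma^2 t + 2\sigma^2\sqrt{t^2 + kt}$ with conditional probability at least $1 - \exp(-t)$; using $\sqrt{t^2+kt} \le t + \sqrt{kt}$ and $\sqrt{kt} \le (k+t)/2$ collapses this to $\|\Pi_{\mathcal{L}^\perp} z\|^2 \le c''\sigma^2(k + t)$. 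Because the bound is uniform in $\mathcal{L}$, it survives integration over $\mathcal{L}$ and holds unconditionally with the same probability.

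Finally I would take the intersection. The two-sided bound on $\|z\|^2$ fails with probability at most $2\exp(-t)$ and the projection bound with probability at most $\exp(-t)$, so both hold with probability at least $1-3\exp(-t)$, the claimed confidence level. On this event the first inequality is immediate, and the identity $\|\Pi_{\mathcal{L}} z\|^2 = \|z\|^2 - \|\Pi_{\mathcal{L}^\perp} z\|^2$ yields $\|\Pi_{\mathcal{L}} z\|^2 \ge n - c'\sigma^2(t+\sqrt{nt}) - c''\sigma^2(k+t) \ge n - c_2(k + t + \sqrt{nt})$ after absorbing $\sigma^2$ and the constants into $c_2$. The argument is essentially routine given the cited propositions; the only points that require care are recognizing that squares of sub-Gaussians are sub-exponential so that Proposition~\ref{cor:Bernstein} applies to $\|z\|^2$, and correctly conditioning on the random subspace $\mathcal{L}$ before invoking the quadratic-form inequality and then observing that the resulting estimate does not depend on $\mathcal{L}$. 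I expect no genuine obstacle beyond this bookkeeping.
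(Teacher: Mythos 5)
Your argument is internally consistent and, read against the statement's literal wording, it is correct; in fact it is essentially the standard proof of the \emph{original} Corollary~S.7 of \citet{Bartlett2020}: Bernstein's inequality (Proposition~\ref{cor:Bernstein}) applied to $\sum_i (z_i^2-1)$ for the full norm, the quadratic-form bound (Proposition~\ref{lemma::sub-Gaussian-quadratic-form}) applied to the rank-$k$ projection $\Pi_{\mathcal{L}^\perp}$ after conditioning on $\mathcal{L}$, and the Pythagorean identity plus a union bound. The difficulty is that this re-derives Corollary~S.7 itself rather than the stated \emph{refinement}. The entire point of the paper's version --- announced in the first line of its proof --- is that the coordinates of $z$ are allowed to have non-zero means. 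This is forced by the intended application (Lemma~\ref{lemma:traceupper}): there the corollary is applied to the vectors $z_{a,i}$ of Lemma~\ref{lemma::representation_through_z}, whose coordinates involve the factor $\mathbbm{1}[d_j=a]$ and are not centered, because the treatment indicator is correlated with the covariates under selection bias.

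Both of your key steps silently use centering. If $\E z_i = \mu_i \neq 0$, then unit variance gives $\E z_i^2 = 1+\mu_i^2$, so $\xi_i = z_i^2 - 1$ is no longer mean zero and Proposition~\ref{cor:Bernstein} does not apply to it; indeed $\|z\|^2$ then concentrates around $n + \|\E[z]\|^2$, so the claimed upper bound cannot be reached without explicitly handling the mean. Likewise, Proposition~\ref{lemma::sub-Gaussian-quadratic-form} requires conditionally sub-Gaussian coordinates in the moment-generating-function sense, which forces them to be mean zero, so it cannot be invoked for $z^\top \Pi_{\mathcal{L}^\perp} z$ directly. The paper's proof instead centers, $\tilde z = z - \E[z]$, applies Corollary~S.7 of \citet{Bartlett2020} to $\tilde z$ as a black box, and then converts the bounds on $\|\tilde z\|^2$ and $\|\Pi_{\mathcal{L}}\tilde z\|^2$ into bounds on $\|z\|^2$ and $\|\Pi_{\mathcal{L}} z\|^2$; that conversion, i.e., controlling the terms involving $\|\E[z]\|$, is exactly the ingredient missing from your write-up. (In fairness, the statement's wording is copied verbatim from Corollary~S.7 and never mentions non-zero means, and the paper's own execution of the conversion is itself loose: it invokes $\|z\|^2 \le \|z-\E[z]\|^2 + \|\E[z]\|^2$, which drops a cross term, and then discards $\|\E[z]\|^2$ without justification. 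But any correct proof of the refinement must engage with the contribution of $\E[z]$, and yours does not.)
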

\begin{proof}
First, note that $z$ can have non-zero mean. Therefore, we consider centralizing it as $\tilde{z} = z - \mathbb{E}[z]$. We can apply Corollary~S.7 in \citet{Bartlett2020} to $\tilde{z}$. Then, for any $t > 0$, with probability at least $1-2\exp(-t)$,
\begin{align*}
    \|\tilde{z}\|^2 &\leq n + a'(t + \sqrt{nt}),
\end{align*}
where $a'$ is a constant. Therefore,
\begin{align*}
    \|z\|^2 &\leq \|z - \mathbb{E}[z]\|^2 + \| \mathbb{E}[z]\|^2 \leq \| \mathbb{E}[z]\|^2  + n + a(t + \sqrt{nt}) \leq \| n + a(t + \sqrt{nt})..
\end{align*}

Similarly, for any $t> 0$, with probability at least $1-\exp(-t)$,
\begin{align*}
    \|\Pi_{\mathcal{L}} \tilde{z}\|^2 \geq n - a(2k + 4t + c\max(t, \sqrt{nt})).
\end{align*}
We also have
\begin{align*}
    \|\Pi_{\mathcal{L}} \mathbb{E}[z]\|^2 + \|\Pi_{\mathcal{L}} z\|^2 \geq \|\Pi_{\mathcal{L}} \tilde{z}\|^2.
\end{align*}
Therefore, with probability at least $1-3\exp(-t)$
	\begin{align*}
	\|z\|^2 \leq& n + c_1\max(t, \sqrt{nt}),\\
	\|\Pi_{\mathcal{L}} z\|^2 \geq& \|\tilde{z}\|^2 -  \sigma^2(2k+4t) - \|\Pi_{\mathcal{L}} \mathbb{E}[z]\|^2\\
	\geq& n -  c_2(2k + 4t + c\max(t, \sqrt{nt})).
	\end{align*}
\end{proof}

\section{Proof of Lemma \ref{lem:basic}}
\begin{proof}
First, we have
\begin{align*}
    \hat{y} = \frac{\mathbbm{1}[d=1]y_1}{p(d = 1| x)} + \frac{\mathbbm{1}[d=0]y_0}{p(d = 0| x)}.
\end{align*}
From Assumptions~\ref{asmp:coherent},
\begin{align*}
    \mathbb{E}\left[\hat{y}\right] &= \mathbb{E}\left[\frac{\mathbbm{1}[d=1]y_1}{p(d = 1| x)} - \frac{\mathbbm{1}[d=0]y_0}{p(d = 0| x)}\right]=  \mathbb{E}\left[\frac{p(d=1 | x)\mathbb{E}\left[y_1|x\right]}{p(d = 1| x)} - \frac{p(d=0 | x)\mathbb{E}\left[y_0| x\right]}{p(d = 0| x)}\right] = \mathbb{E}[y_1] - \mathbb{E}[y_0].
\end{align*}
From Assumption~\ref{asmp:basic}, $\mathbb{E}[y_1] = \mathbb{E}[y_0] = 0$. Therefore, $\mathbb{E}\left[\hat{y}\right] = 0$. Thus, Statement~1' holds.

The conditional noise variance can be written as 
\begin{align*}
    &\mathbb{E}\left[\left(\hat{y} - x^\top\theta^*\right)^2 | x\right]\\
    &= \mathbb{E}\left[\hat{y}^2| x\right] - \left(x^\top\theta^*\right)^2\\
    &= \mathbb{E}\left[\frac{\mathbbm{1}[d=1]y^2_1}{p^2(d = 1| x)} + 2\frac{\mathbbm{1}[d=1]\mathbbm{1}[d=0]y_1 y_0}{p(d = 1| x)p(d = 0| x)} + \frac{\mathbbm{1}[d=0]y^2_0}{p^2(d = 0| x)}| x\right] - \left(\tau^*(x)\right)^2\\
    &= \frac{p(d = 1| x)\mathbb{E}\left[y^2_1|x\right]}{p^2(d = 1| x)} + \frac{p(d = 0| x)\mathbb{E}\left[y^2_0|x\right]}{p^2(d = 0| x)}- \left(\tau^*(x)\right)^2\\
    &= \frac{\mathbb{E}\left[y^2_1|x\right]}{p(d = 1| x)} + \frac{\mathbb{E}\left[y^2_0|x\right]}{p(d = 0| x)}- \left(x^\top\theta^*_1 - x^\top\theta^*_0\right)^2\\
    &\geq \frac{\mathbb{E}\left[y^2_1|x\right]}{p(d = 1| x)} + \frac{\mathbb{E}\left[y^2_0|x\right]}{p(d = 0| x)}- \left(x^\top\theta^*_1\right)^2 - \left(x^\top\theta^*_0\right)^2\\
    &= \frac{\mathbb{E}\left[y^2_1|x\right]}{p(d = 1| x)} + \frac{\mathbb{E}\left[y^2_0|x\right]}{p(d = 0| x)} -  \mathbb{E}[y^2_1 | x] - \mathbb{E}[y^2_0 | x] + \mathbb{E}[y^2_1 | x] + \mathbb{E}[y^2_0 | x] - \left(x^\top\theta^*_1\right)^2 - \left(x^\top\theta^*_0\right)^2\\
    &= \frac{\mathbb{E}\left[(1 - p(d = 1| x))y^2_1|x\right]}{p(d = 1| x)} + \frac{(1 - p(d = 0| x))\mathbb{E}\left[y^2_0|x\right]}{p(d = 0| x)} + \Expect\left[(y_1-x^\top\theta^*_1)^2\middle|x\right] + \Expect\left[(y_0-x^\top\theta^*_0)^2\middle|x\right]\\
    &\geq 2\sigma^2,
\end{align*}
where we use $\mathbb{E}[\hat{y} | x] = x^\top\theta^* = \tau^*(x)$, $\mathbbm{1}[d=1]\mathbbm{1}[d=0] = 0$, and $\Expect\left[(y_a-x^\top\theta^*_a)^2\middle|x\right] \geq \sigma^2$ (from Assumption~\ref{asmp:basic}).  Thus, Statement~3' holds.

Proposition 2.5.2 in \citet{vershynin_2018} states that for a random variable $R$, $\mathbb{E}[\exp(R^2)] < \infty$ is equivalent to $R$ being a sub-Gaussian random variable. Here, from Assumption~\ref{asmp:coherent} ($1/p(d=a| x) < \infty$), 
\begin{align*}
    &\mathbb{E}\left[\exp\left(\left(\hat{y} - x^\top\theta^*\right)^2\right)\right]\\
    &= \mathbb{E}\left[\exp\left(\left(\frac{\mathbbm{1}[d=1]y_1}{p(d = 1| x)} - \frac{\mathbbm{1}[d=0]y_0}{p(d = 0| x)} - x^\top\theta^*\right)^2\right)|x\right]\\
    &= \mathbb{E}\Bigg[\exp\Bigg(\Bigg(\frac{\mathbbm{1}[d=1]y_1}{p(d = 1| x)} - \frac{\mathbbm{1}[d=0]y_0}{p(d = 0| x)} - \frac{\mathbbm{1}[d=1]x^\top\theta^*_1}{p(d = 1| x)} + \frac{\mathbbm{1}[d=0]x^\top\theta^*_0}{p(d = 0| x)}\\
    &\ \ \ \ \ \ \ \ \ \ \ \ \ \ \ \ \ \ \ \ \ \ \ \ \ \ \ \ \ \ \ \ \ \ \ \ \ \ \ \ \ \ \ \ \ \ \ \ \ \ \ \ \ \ \ \ +  \frac{\mathbbm{1}[d=1]x^\top\theta^*_1}{p(d = 1| x)} - \frac{\mathbbm{1}[d=0]x^\top\theta^*_0}{p(d = 0| x)} - x^\top\theta^*\Bigg)^2\Bigg)|x\Bigg]\\
    &\leq \mathbb{E}\Bigg[\exp\Bigg(\sum_{a}\Bigg(\frac{\mathbbm{1}[d=a](y_a - x^\top \theta^*_a)}{p(d = a| x)}\Bigg)^2 +  \sum_{a}\Bigg(\frac{\mathbbm{1}[d=a]x^\top\theta^*_a}{p(d = a| x)} - x^\top\theta^*_a\Bigg)^2\Bigg)|x\Bigg] \leq \infty,
\end{align*}
where we also assume that $y_a - x^\top\theta^*_a$ is $\sigma_y^2$-sub-Gaussian, conditionally on $x$. Thus, $\hat{y} - x^\top\theta^*$ is also sub-Gaussian, and Statement~4' holds.
\end{proof}

\section{Proof of Lemma \ref{lem:unbiased_yhat}}
\begin{proof}
\begin{align*}
\mathbb{E}\left[ \hat{y}_i | x\right] &= \mathbb{E}\left[\frac{\mathbbm{1}[d_i = 1]y_i}{p(d=1| x)} - \frac{\mathbbm{1}[d_i = 0]y_i}{1 - p(d=1| x)} | x\right] \\
&= \mathbb{E}\left[\frac{\mathbbm{1}[d_i = 1]y_{1,i}}{p(d=1| x)} - \frac{\mathbbm{1}[d_i = 0]y_{0,i}}{1 - p(d=1| x)} | x\right] \\
&=\frac{ \mathbb{E}\left[\mathbbm{1}[d_i = 1] | x\right]\mathbb{E}\left[y_{1,i} | x\right]}{p(d=1| x)} - \frac{\mathbb{E}\left[\mathbbm{1}[d_i = 0] | x\right]\mathbb{E}\left[y_{0,i} | x\right]}{1 - p(d=1| x)}\\
&=\mathbb{E}\left[y_{1,i} | x\right] - \mathbb{E}\left[y_{0,i} | x\right] = \tau^*(x)
\end{align*}
\end{proof}

\section{Proof of Theorem~\ref{thm:main}}
\label{sec:proof}
This section provides the proof of Theorem~\ref{thm:main}. In Lemma~\ref{lemma:bv}, we first decompose the excess risk. Then, the following lemmas (Lemma~\ref{lemma:bias_1_0}--\ref{lem:f}) show the upper or lower bounds of each decomposed term. The proof of each lemma is provided in Sections~\ref{sec:upp_low_trC}--\ref{sec:d_e}. Combining these bounds, Section~\ref{sec:final} completes the proof.

\subsection{Basic Decomposition of the Excess Risk and Associated Lemmas}
First, we decompose the upper and lower bounds of the excess risk into several terms that can be bounded.

\begin{lemma}[Full version of Lemma \ref{lem:bv_simple}]
\label{lemma:bv}
\begin{align}
&R\big(\hat{\theta}^{\mathrm{T\mathchar`-learner}}\big)\nonumber\\
&=\mathbb{E}_{x}\Big[\Big(x^\top \big(\theta^*_1 - \hat{\theta}_1\big)\Big)^2\Big] + \mathbb{E}_{x}\Big[x^\top\big( \theta^*_0 - \hat{\theta}_0 \big)\Big)^2\Big]- 2\mathbb{E}_{x}\Big[\Big( \big(\theta^*_1 - \hat{\theta}_1\big)^\top x_i\Big)\Big(\big( x^\top_i\theta^*_0 - \hat{\theta}_0 \big) \Big)\Big]\nonumber\\
\label{eq:basic1}
&\leq 2\theta^{*\top}_1 B_1 \theta^*_1 + 2\bm{\varepsilon}^\top_1 C_1 \bm{\varepsilon}_1 + 2\theta^{*\top}_0 B_0 \theta^*_0 + 2\bm{\varepsilon}^\top_0 C_0 \bm{\varepsilon}_0- 2\theta^{*\top}_1B_{1,0}\theta^*_0 + 2\theta^{*\top}_1D \bm \varepsilon_0 + 2\bm \varepsilon^\top_1 E\theta^*_0 - 2\bm \varepsilon^\top_1F\bm \varepsilon_0.
\end{align}
and
\begin{align}
\label{eq:basic2}
&\mathbb{E}_{x,\bm{\varepsilon}}\Big[R\big(\hat{\theta}^{\mathrm{T\mathchar`-learner}}\big) \Big] \geq \theta^{*\top}_1 B_1 \theta^*_1 + \sigma^2_1 \mathrm{tr}(C_1) + \theta^{*\top}_0 B_0 \theta^*_0 + \sigma^2_1 \mathrm{tr}(C_0) - 2\theta^{*\top}_1B_{1,0}\theta^*_0,
\end{align}
where 
\begin{align*}
B_1 &= \big(I - X^\top_1(X_1X^\top_1)^{-1}X_1 \big) \Sigma \big(I - X^\top_1(X_1X^\top_1)^{-1}X_1 \big)\\
B_0 &= \big(I - X^\top_0(X_0X^\top_0)^{-1}X_0 \big) \Sigma \big(I - X^\top_0(X_0X^\top_0)^{-1}X_0 \big)\\
B_{1, 0} &= \big(I - X^\top_1(X_1X^\top_1)^{-1}X_1 \big) \Sigma \big(I - X^\top_0(X_0X^\top_0)^{-1}X_0 \big)\\
C_1 &= \big(X_1X^\top_1 \big)^{-1} X_1 \Sigma X^\top_1  \big(X_1X^\top_1 \big)^{-1}\\
C_0 &= \big(X_0X^\top_0 \big)^{-1} X_0 \Sigma X^\top_0  \big(X_0X^\top_0 \big)^{-1}\\
D &= \big(I - X^\top_1(X_1X^\top_1)^{-1}X_1 \big) \Sigma X^\top_0(X_0X^\top_0)^{-1} \\
E &= (X_1X^\top_1)^{-1}X_1 \Sigma \big(I - X^\top_0(X_0X^\top_0)^{-1}X_0 \big) \\
F &= (X_1X^\top_1)^{-1}X_1 \Sigma X^\top_0(X_0X^\top_0)^{-1} 
\end{align*}
\end{lemma}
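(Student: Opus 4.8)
The plan is to reduce the excess risk to a single quadratic form in $\theta^*-\hat\theta$, split that form into the two treatment groups, dispatch the two diagonal (own-group) blocks with the analysis already available from \citet{Bartlett2020}, and compute the novel off-diagonal (cross) block by hand. First I would reduce: since $\theta^*=\theta_1^*-\theta_0^*$ is the population-optimal parameter and the noise $\varepsilon_1-\varepsilon_0$ is mean zero and independent of $x$, expanding $R(\theta)=\mathbb{E}_{x,y}[(\tilde y-x^\top\theta)^2-(\tilde y-x^\top\theta^*)^2]$ and cancelling the terms linear in the noise gives $R(\hat\theta^{\mathrm{T\mathchar`-learner}})=\mathbb{E}_x[(x^\top(\theta^*-\hat\theta))^2]$, exactly as in \citet{Bartlett2020}. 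Writing $\theta^*-\hat\theta=(\theta_1^*-\hat\theta_1)-(\theta_0^*-\hat\theta_0)$ and expanding the square under $\mathbb{E}_x$ then produces the three-term identity in the statement: the two own-group excess risks plus a cross term.

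For each own-group term I would use the explicit form of the minimum-norm interpolator. From $\hat\theta_a=X_a^\top(X_aX_a^\top)^{-1}\bm y_a$ and $\bm y_a=X_a\theta_a^*+\bm\varepsilon_a$ (both well defined since condition~4 of Assumption~\ref{asmp:basic} makes $X_aX_a^\top$ invertible), one obtains $\theta_a^*-\hat\theta_a=\Pi_a^\bot\theta_a^*-P_a^\top\bm\varepsilon_a$, so that $\mathbb{E}_x[(x^\top(\theta_a^*-\hat\theta_a))^2]=(\theta_a^*-\hat\theta_a)^\top\Sigma(\theta_a^*-\hat\theta_a)$. Applying $(u-v)^2\le 2u^2+2v^2$ gives the upper bound $2\theta_a^{*\top}B_a\theta_a^*+2\bm\varepsilon_a^\top C_a\bm\varepsilon_a$; this is precisely the content of Lemma~2 of \citet{Bartlett2020}, which also supplies the matching lower bound $\theta_a^{*\top}B_a\theta_a^*+\sigma^2\,\mathrm{tr}(C_a)$ after taking expectation over the noise (the cross term vanishes by mean-zero, and $\mathbb{E}[\bm\varepsilon_a^\top C_a\bm\varepsilon_a]\ge\sigma^2\,\mathrm{tr}(C_a)$ by the conditional-variance lower bound).

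The cross term is the genuinely new piece. Substituting the same decomposition for both $a\in\{1,0\}$ into $(\theta_1^*-\hat\theta_1)^\top\Sigma(\theta_0^*-\hat\theta_0)=(\Pi_1^\bot\theta_1^*-P_1^\top\bm\varepsilon_1)^\top\Sigma(\Pi_0^\bot\theta_0^*-P_0^\top\bm\varepsilon_0)$ and multiplying out yields exactly four terms, namely $\theta_1^{*\top}B_{1,0}\theta_0^*-\theta_1^{*\top}D\bm\varepsilon_0-\bm\varepsilon_1^\top E\theta_0^*+\bm\varepsilon_1^\top F\bm\varepsilon_0$, where I read off $B_{1,0}=\Pi_1^\bot\Sigma\Pi_0^\bot$, $D=\Pi_1^\bot\Sigma P_0^\top$, $E=P_1\Sigma\Pi_0^\bot$, and $F=P_1\Sigma P_0^\top$ using the symmetry of $\Pi_a^\bot$ and of $(X_aX_a^\top)^{-1}$. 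Carrying the factor $-2$ from the identity and combining with the two own-group upper bounds produces \eqref{eq:basic1}.

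For the lower bound \eqref{eq:basic2} I would take $\mathbb{E}_{x,\bm\varepsilon}$ of the same identity. In the cross term the two pieces linear in a single noise vector vanish by mean-zero, while the bilinear term $\mathbb{E}[\bm\varepsilon_1^\top F\bm\varepsilon_0]$ vanishes because $\varepsilon_1$ and $\varepsilon_0$ are independent and centered; this independence is the crucial structural input. Combining with the own-group lower bounds leaves exactly $\theta_1^{*\top}B_1\theta_1^*+\sigma^2\,\mathrm{tr}(C_1)+\theta_0^{*\top}B_0\theta_0^*+\sigma^2\,\mathrm{tr}(C_0)-2\theta_1^{*\top}B_{1,0}\theta_0^*$. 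The computations here are purely algebraic, so the hard part is not any single estimate but the bookkeeping: correctly justifying the interpolator decomposition under Assumption~\ref{asmp:basic}, and tracking the signs and which noise correlations survive in the cross term, since that is where a sign slip or a spurious surviving term would most easily enter.
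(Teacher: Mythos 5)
Your proposal is correct and follows essentially the same route as the paper's own proof: reduce the excess risk to $\mathbb{E}_x[(x^\top(\theta^*-\hat\theta))^2]$, split into the two own-group terms plus the cross term, invoke Lemma~2 of \citet{Bartlett2020} for the own-group upper and lower bounds, expand the cross term via $\theta_a^*-\hat\theta_a=\Pi_a^\bot\theta_a^*-P_a^\top\bm{\varepsilon}_a$ into the four pieces $B_{1,0}$, $D$, $E$, $F$, and kill the noise terms by mean-zero and independence for the lower bound. No gaps; your bookkeeping of the $-2$ factor and the surviving cross-covariance terms matches the paper exactly.
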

The proof is shown in Appendix~\ref{appdx:lemma:bv}. 

In the upper bound \eqref{eq:basic1}, $2\theta^{*\top}_1 B_1 \theta^*_1$ and $2\theta^{*\top}_0 B_0 \theta^*_0$ correspond to the biases in predicting $y_1$ and $y_0$, respectively; $2\bm{\varepsilon}^\top_1 C_1 \bm{\varepsilon}_1$ and $2\bm{\varepsilon}^\top_0 C_0 \bm{\varepsilon}_0$ correspond to the variances in predicting $y_1$ and $y_0$, respectively; $- 2\theta^{*\top}_1B_{1,0}\theta^*_0 + 2\theta^{*\top}_1D \bm \varepsilon_0 + 2\bm \varepsilon^\top_1 E\theta^*_0 - 2\bm \varepsilon^\top_1F\bm \varepsilon_0$ appear from the cross term of $\hat{\theta}_1$ and $\hat{\theta}_0$. In the lower bound \eqref{eq:basic2}, some terms used in the upper bound vanish owing to the independentness of the error terms among the treatments.

Next, we consider bounding each term in \eqref{eq:basic1} and \eqref{eq:basic2}. Here, we use the properties that
\[X_0X^\top_1 = \bm{0},\]
and for a matrix $M\in\mathbb{R}^{n\times n}$,
\[X_0MX^\top_1 = \bm{0},\]
where $\bm{0}$ is a $n\times n$ zero matrix.

\begin{lemma}[Upper bounds regarding the terms including \texorpdfstring{$B_{1}$}{TEXT} and \texorpdfstring{$B_{0}$}{TEXT}]\label{lemma:bias_1_0}
For each $a\in\{0, 1\}$, there exists a constant $c$ that depends only on $\sigma_x$, such that for any $1<t<n$, with
probability at least $1-\exp(-t)$,
  \[
    {\theta^*_a}^\top B_a \theta^*_a
        \leq c\|\theta^*_a\|^2\|\Sigma_a\|\max
        \left\{\sqrt{\frac{r_0(\Sigma_a)}{n}},
        \frac{r_0(\Sigma_a)}{n}, \sqrt{\frac{t}{n}} \right\}+ \left\|\Sigma - \zeta^*_a\Sigma_a\right\|\|\theta^*_a\|^2,
  \]
  where $\zeta^*_a = \argmin{\zeta\in\mathbb{R}^+} \left\|\Sigma - \zeta\Sigma_a\right\|$. 
\end{lemma}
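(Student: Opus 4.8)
The plan is to exploit that the orthogonal projection $\Pi_a^\bot = I - X_a^\top(X_aX_a^\top)^{-1}X_a$ kills the range of $X_a^\top$, so that the empirical group-$a$ second moment $\frac1n X_a^\top X_a = \frac1n\sum_i \mathbbm{1}[d_i=a]x_ix_i^\top$ can be inserted into the quadratic form for free. I would prove only the case $a=1$; the case $a=0$ is identical. First I would record the two elementary facts that drive the argument: $\Pi_1^\bot X_1^\top = 0$, which gives $\Pi_1^\bot\bigl(\frac1n X_1^\top X_1\bigr)\Pi_1^\bot = 0$, and $\|\Pi_1^\bot\|\le 1$, since $\Pi_1^\bot$ is an orthogonal projection.

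Using these, I would rewrite $B_1 = \Pi_1^\bot\Sigma\Pi_1^\bot = \Pi_1^\bot\bigl(\Sigma - \zeta_1^*\frac1n X_1^\top X_1\bigr)\Pi_1^\bot$ for the minimizer $\zeta_1^* = \argmin{\zeta\in\mathbb{R}^+}\|\Sigma-\zeta\Sigma_1\|$, and split the inner operator as
\[
\Sigma - \zeta_1^*\tfrac1n X_1^\top X_1 = \zeta_1^*\Bigl(\Sigma_1 - \tfrac1n X_1^\top X_1\Bigr) + (\Sigma - \zeta_1^*\Sigma_1).
\]
Bounding each piece by $\|\Pi_1^\bot\|\le 1$ then yields the purely deterministic estimate
\[
{\theta_1^*}^\top B_1\theta_1^* \le \zeta_1^*\Bigl\|\Sigma_1 - \tfrac1n X_1^\top X_1\Bigr\|\,\|\theta_1^*\|^2 + \|\Sigma-\zeta_1^*\Sigma_1\|\,\|\theta_1^*\|^2,
\]
whose second summand is already the deviation term in the statement. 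Under the overlap Assumption~\ref{asmp:coherent} one has $\varphi\Sigma\preceq\Sigma_1\preceq(1-\varphi)\Sigma$, and since $\zeta=0$ is feasible, $\|\Sigma-\zeta_1^*\Sigma_1\|\le\|\Sigma\|$, giving $\zeta_1^*\le 2\|\Sigma\|/\|\Sigma_1\|\le 2/\varphi = O(1)$, so the prefactor $\zeta_1^*$ can be absorbed into the final constant.

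The remaining and principal work is to control the random operator norm $\bigl\|\Sigma_1 - \frac1n X_1^\top X_1\bigr\|$, the deviation of the group-$1$ empirical covariance from its mean $\Sigma_1 = \E[\mathbbm{1}[d=1]xx^\top]$. Since $\frac1n X_1^\top X_1$ averages i.i.d. sub-Gaussian rank-one terms with mean $\Sigma_1$, I would invoke the Koltchinskii--Lounici bound (Theorem~9 of \cite{Koltchinskii2017}) to get, with probability at least $1-e^{-t}$,
\[
\Bigl\|\Sigma_1 - \tfrac1n X_1^\top X_1\Bigr\| \le c\|\Sigma_1\|\max\Bigl\{\sqrt{\tfrac{r(\Sigma_1)}{n}},\tfrac{r(\Sigma_1)}{n},\sqrt{\tfrac{t}{n}},\tfrac{t}{n}\Bigr\},
\]
where the effective rank satisfies $r(\Sigma_1) = (\E\|\mathbbm{1}[d=1]x\|)^2/\|\Sigma_1\| \le \tr(\Sigma_1)/\|\Sigma_1\| = r_0(\Sigma_1)$, so $r_0(\Sigma_1)$ may replace $r(\Sigma_1)$ in the bound. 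Finally, $1<t<n$ forces $\tfrac{t}{n}\le\sqrt{\tfrac{t}{n}}$, so the $t/n$ term drops out of the maximum, producing the claimed form. The delicate point is precisely this concentration step: I must confirm that $\mathbbm{1}[d=1]x$ meets the sub-Gaussian hypothesis of \cite{Koltchinskii2017} — multiplying the sub-Gaussian $x=\Sigma^{1/2}z$ by an indicator cannot enlarge its sub-Gaussian norm but does replace the governing covariance $\Sigma$ by $\Sigma_1$ — and that the effective rank entering the bound is genuinely that of $\Sigma_1$.
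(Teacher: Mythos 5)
Your proof follows essentially the same route as the paper's: the same insertion of $\zeta_1^*\bigl(\Sigma_1 - \tfrac1n X_1^\top X_1\bigr)$ under the projection $\Pi_1^\bot$ (using $\Pi_1^\bot X_1^\top = 0$ and $\|\Pi_1^\bot\|\le 1$), followed by the same application of Theorem~9 of Koltchinskii--Lounici with $r(\Sigma_1)\le r_0(\Sigma_1)$ and the observation that $t/n\le\sqrt{t/n}$ for $1<t<n$. Your two additions are improvements in rigor rather than changes of method: the explicit bound $\zeta_1^*\le 2/\varphi$ from the overlap assumption (the paper silently absorbs $\zeta^*$ into the constant $c$), and the flagged verification that $\mathbbm{1}[d=1]x$ satisfies the sub-Gaussian hypothesis with governing covariance $\Sigma_1$ (which the paper also leaves implicit).
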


The proof is shown in Appendix~\ref{a:upperB}.

\begin{lemma}[Upper and Lower bounds regarding the term including \texorpdfstring{$B_{1,0}$}{TEXT}]\label{lemma:bias_10}
There is a constant $c$ that depends only on $\sigma_x$ such that for any $1<t<n$, with
probability at least $1-\exp(-t)$,
\begin{align*}
\Big|2\theta^{*\top}_1B_{1,0}\theta^*_0\Big| \leq c\left\|\theta^*_1\right\|\left\|\theta^*_0\right\|\left\|\Sigma\right\|\max\left\{\sqrt{\frac{r_0(\Sigma)}{n}}, \frac{r_0(\Sigma)}{n}, \sqrt{\frac{t}{n}} \right\}
\end{align*}
\end{lemma}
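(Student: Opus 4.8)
The plan is to reduce the bilinear form $\theta_1^{*\top}B_{1,0}\theta_0^*$ to a single operator-norm deviation of the pooled empirical covariance from $\Sigma$, and then to invoke the matrix concentration bound of \cite{Koltchinskii2017}, exactly as in the treatment of the diagonal bias terms $B_a$ in Lemma~\ref{lemma:bias_1_0}. The starting observation is that the projections annihilate their own design operators: by \eqref{equation:Xorthog} (and its $a=0$ analogue) we have $\Pi_1^\bot X_1^\top = 0$, and directly $X_0\Pi_0^\bot = X_0 - X_0X_0^\top(X_0X_0^\top)^{-1}X_0 = 0$. I would therefore insert the full empirical second-moment operator $\tfrac1n X^\top X = \tfrac1n(X_1+X_0)^\top(X_1+X_0)$ and split
\[
\theta_1^{*\top}B_{1,0}\theta_0^*
= \theta_1^{*\top}\Pi_1^\bot\Big(\Sigma-\tfrac1n X^\top X\Big)\Pi_0^\bot\theta_0^*
+ \tfrac1n\,\theta_1^{*\top}\Pi_1^\bot X^\top X\,\Pi_0^\bot\theta_0^*.
\]

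First I would argue that the second term vanishes. Expanding $X^\top X = X_1^\top X_1 + X_1^\top X_0 + X_0^\top X_1 + X_0^\top X_0$, the factors $X_1^\top X_1$ and $X_1^\top X_0$ are killed on the left by $\Pi_1^\bot X_1^\top = 0$, while $X_0^\top X_0$ is killed on the right by $X_0\Pi_0^\bot = 0$. The only factor not removed by the projections is $X_0^\top X_1 = \sum_{i=1}^n \mathbbm{1}[d_i=0]\mathbbm{1}[d_i=1]\,x_i x_i^\top = 0$, which vanishes because the two treatment indicators are mutually exclusive at each sample. Hence $\Pi_1^\bot X^\top X\,\Pi_0^\bot = 0$, and the deviation term is all that survives.

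For the remaining deviation term I would apply Cauchy--Schwarz together with the operator-norm bound $\|\Pi_a^\bot\|\le 1$ from \eqref{eqn:projection}, which gives
\[
\big|\theta_1^{*\top}B_{1,0}\theta_0^*\big|
\le \Big\|\Sigma-\tfrac1n X^\top X\Big\|\,\|\theta_1^*\|\,\|\theta_0^*\|.
\]
Since the cross terms $X_1^\top X_0$ and $X_0^\top X_1$ vanish, $\tfrac1n X^\top X = \tfrac1n\sum_i x_i x_i^\top$ is exactly the empirical covariance of the full pooled sample, so Theorem~9 of \cite{Koltchinskii2017} yields, with probability at least $1-\exp(-t)$,
\[
\Big\|\Sigma-\tfrac1n X^\top X\Big\|
\le c\|\Sigma\|\max\!\left\{\sqrt{\tfrac{r(\Sigma)}{n}},\tfrac{r(\Sigma)}{n},\sqrt{\tfrac{t}{n}},\tfrac{t}{n}\right\},
\]
with $r(\Sigma)=(\E\|x\|)^2/\|\Sigma\|\le r_0(\Sigma)$. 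For $1<t<n$ one has $t/n<\sqrt{t/n}$, so the $t/n$ term in the maximum is redundant; multiplying through by $2$ then gives the stated bound.

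The only genuinely delicate point is the bookkeeping in the vanishing of the cross term: one must keep the placement of transposes straight, distinguishing $X_0^\top X_1$ (a sum over a common index, hence identically zero) from operators such as $X_0 X_1^\top$, whose off-diagonal entries need not vanish. The conceptual content is that, unlike the diagonal terms $\theta_a^{*\top}B_a\theta_a^*$ of Lemma~\ref{lemma:bias_1_0}, no selection-induced deviation $\|\Sigma-\zeta^*\Sigma_a\|$ enters here, because the bilinear sandwich forces the \emph{pooled} empirical covariance $\tfrac1n X^\top X$, which concentrates directly to $\Sigma$ rather than to a reweighted $\Sigma_a$.
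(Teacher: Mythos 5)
Your proposal is correct and follows essentially the same route as the paper: insert the pooled empirical second-moment operator $\tfrac1n X^\top X$ between the two projections, observe that the surviving cross factor $X_0^\top X_1$ vanishes because the treatment indicators are mutually exclusive, bound the deviation term via $\|\Pi_a^\bot\|\le 1$ and Cauchy--Schwarz, and conclude with Theorem~9 of \cite{Koltchinskii2017} together with $r(\Sigma)\le r_0(\Sigma)$. Your write-up is in fact slightly more explicit than the paper's on the key bookkeeping point (which of the four terms in $X^\top X$ is annihilated by which projection, and why $\tfrac1n X^\top X$ is exactly the pooled empirical covariance), but there is no substantive difference in the argument.
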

The proof is shown in Appendix~\ref{appdx:lemma:bias_10}.

\begin{lemma}[Upper bounds regarding the terms including \texorpdfstring{$C_{1}$}{TEXT} and \texorpdfstring{$C_{0}$}{TEXT}]
\label{lem:bound_c1_c0}
Suppose that, given $X_1$ and $X_0$, $C_1$ and $C_0$ are a.s.~positive semidefinite. Then a.s.~on $X_1$ and $X_0$, with conditional probability at
  least $1-2\exp(-t)$,
\begin{align*}
\bm{\varepsilon}^\top_1 C_1 \bm{\varepsilon}_1\leq c_0\sigma^2\mathrm{tr}(C_1)\qquad \bm{\varepsilon}^\top_0 C_0 \bm{\varepsilon}_0\leq c_1\sigma^2\mathrm{tr}(C_0)
\end{align*}
\end{lemma}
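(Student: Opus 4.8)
The plan is to apply Proposition~\ref{lemma::sub-Gaussian-quadratic-form} (Lemma~S.2 of \citet{Bartlett2020}) separately to each quadratic form $\bm{\varepsilon}_a^\top C_a \bm{\varepsilon}_a$, and then collapse the resulting spectral quantities into the trace $\tr(C_a)$ using the positive semidefiniteness of $C_a$, finishing with a union bound over $a\in\{0,1\}$.

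First I would verify the hypotheses of Proposition~\ref{lemma::sub-Gaussian-quadratic-form} for each $a\in\{0,1\}$. The $i$-th entry of $\bm{\varepsilon}_a$ is $\mathbbm{1}[d_i = a]\varepsilon_{a,i}$; conditioning on $X_a$ fixes the assignment indicators $\mathbbm{1}[d_i=a]$, so each entry is either identically zero or equal to $\varepsilon_{a,i}$, hence conditionally $\sigma^2$-sub-Gaussian (using the conditional sub-Gaussianity of $\varepsilon_{a,i}$ from Assumption~\ref{asmp:basic}, together with the fact that zero is trivially sub-Gaussian), and the entries remain conditionally independent given $X_a$ because the $\varepsilon_{a,i}$ are independent across $i$. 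The matrix $C_a = P_a \Sigma P_a^\top$ is positive semidefinite because $\Sigma$ is a covariance operator, which is exactly the standing hypothesis of the lemma. Applying the proposition with $M = C_a$ then gives, a.s.\ on $X_a$ and with conditional probability at least $1-\exp(-t)$,
\[
\bm{\varepsilon}_a^\top C_a \bm{\varepsilon}_a \le \sigma^2 \tr(C_a) + 2\sigma^2 \|C_a\| t + 2\sigma^2\sqrt{\|C_a\|^2 t^2 + \tr(C_a^2)\, t}.
\]

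Next I would dominate the two trailing terms by $\tr(C_a)$. Because $C_a$ is positive semidefinite, its operator norm is bounded by its trace, $\|C_a\| \le \tr(C_a)$, and moreover $\tr(C_a^2) = \sum_i \mu_i(C_a)^2 \le \|C_a\|\,\tr(C_a) \le \tr(C_a)^2$. Substituting these into the display collapses every term to a multiple of $\tr(C_a)$, yielding $\bm{\varepsilon}_a^\top C_a \bm{\varepsilon}_a \le \sigma^2\,\tr(C_a)\bigl(1 + 2t + 2\sqrt{t^2+t}\bigr)$, so the claimed inequality holds with $c_a = 1 + 2t + 2\sqrt{t^2+t}$. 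Finally I would take a union bound over $a\in\{0,1\}$: since each of the two inequalities fails with conditional probability at most $\exp(-t)$, both hold simultaneously with conditional probability at least $1-2\exp(-t)$, which is the stated event.

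The one point requiring care is the role of $t$ in the constant: the PSD inequalities $\|C_a\| \le \tr(C_a)$ and $\tr(C_a^2)\le\tr(C_a)^2$ are tight for rank-one $C_a$, so the factor multiplying $\tr(C_a)$ genuinely grows (linearly) in $t$ rather than being universal. This is harmless for our purposes: when the lemma is invoked inside the proof of Theorem~\ref{thm:main} with $t \asymp \log(\delta^{-1})$, precisely this linear-in-$t$ factor produces the $\log(\delta^{-1})$ multiplying the variance term $\mathcal{V}_n(\Sigma)$ in the final bound. The only other subtlety, already dispatched above, is confirming that masking the noise by the indicators $\mathbbm{1}[d_i=a]$ preserves the conditional sub-Gaussianity and independence needed to invoke Proposition~\ref{lemma::sub-Gaussian-quadratic-form}.
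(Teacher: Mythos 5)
Your proof is correct and follows essentially the same route as the paper's: both apply Proposition~\ref{lemma::sub-Gaussian-quadratic-form} (Lemma~S.2 of \citet{Bartlett2020}) to each quadratic form, collapse the spectral terms via $\|C_a\|\le\tr(C_a)$ and $\tr(C_a^2)\le\tr(C_a)^2$ to obtain a bound of the form $(4t+2)\sigma^2\tr(C_a)$, and combine the two events. Your additional checks --- that masking by $\mathbbm{1}[d_i=a]$ preserves conditional sub-Gaussianity and independence given $X_a$, and the explicit union bound giving $1-2\exp(-t)$ --- are details the paper leaves implicit, and your remark that the ``constant'' grows linearly in $t$ (feeding the $\log(\delta^{-1})$ factor in Theorem~\ref{thm:main}) is accurate.
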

\begin{proof}
From Proposition~\ref{lemma::sub-Gaussian-quadratic-form} (Lemma~S.2 in \citet{Bartlett2020}), since $\|C_a\|\leq \tr(C_a)$ and $\tr(C^2_a) \leq \tr(C_a)^2$ for $a\in\{1,0\}$,
with probability at least $1-\exp(-t)$,
  \begin{align*}
    \bs{\varepsilon_a}^\top C_a\bs{\varepsilon_a}
      & \leq \sigma^2\tr(C_a)(2t+1) + 2\sigma^2\sqrt{\tr(C_a)^2(t^2 + t)}
      \leq (4t+2)\sigma^2\tr(C_a).
  \end{align*}
\end{proof}

\begin{lemma}[Upper bounds regarding the terms including \texorpdfstring{$D$}{TEXT} and \texorpdfstring{$E$}{TEXT}]
\label{lem:bound_d_e}
Almost surely on $X_1$ and $X_0$, with conditional probability at least $1-\exp(-t)$,
\begin{align*}
   \theta^{*\top}_1\Sigma X^\top_0(X_0X^\top_0)^{-1}\bm{\varepsilon}_0 <  \sqrt{2t\sigma^2\left\|\theta^{*\top}_1\Sigma X^\top_0(X_0X^\top_0)^{-1}\right\|^2_2}.
\end{align*}
Similarly, a.s. on $X_1$ and $X_0$, with conditional probability at least $1-\exp(-t)$,
\begin{align*}
   \bm{\varepsilon}_1(X_1X^\top_1)^{-1}X_1 \Sigma\theta^{*}_0 <  \sqrt{2t\sigma^2\left\|\theta^{*\top}_0\Sigma X^\top_1(X_1X^\top_1)^{-1}\right\|^2_2}.
\end{align*}
\end{lemma}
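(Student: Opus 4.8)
The plan is to recognize each of the two quantities as a fixed linear functional of a noise vector, conditionally on the design, and then apply a one-dimensional sub-Gaussian tail bound. Consider the first statement. Conditioning on $X_0$ and $X_1$, the row vector $\bm{w}^\top := \theta^{*\top}_1 \Sigma X^\top_0 (X_0 X^\top_0)^{-1} \in \Re^n$ is deterministic (it depends only on the fixed quantities $\theta^*_1,\Sigma$ and on $X_0$), so the target quantity is $\bm{w}^\top \bm{\varepsilon}_0 = \sum_{i=1}^n w_i \varepsilon_{0,i}$, a weighted sum of the coordinates of $\bm{\varepsilon}_0$. By Assumption~\ref{asmp:basic} these coordinates are, conditionally on the design, independent, mean zero, and $\sigma^2$-sub-Gaussian; the indicator factor $\mathbbm{1}[d_i=0]$ carried by each coordinate only shrinks the moment generating function and hence preserves the sub-Gaussian constant.

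First I would bound the conditional moment generating function of $\bm{w}^\top\bm{\varepsilon}_0$. Factoring the expectation by conditional independence and applying the sub-Gaussian bound coordinatewise gives, for every $\lambda \ge 0$,
\begin{align*}
\E\big[\exp(\lambda\, \bm{w}^\top\bm{\varepsilon}_0) \,\big|\, X_0, X_1\big]
= \prod_{i=1}^n \E\big[\exp(\lambda w_i \varepsilon_{0,i})\,\big|\,X_0,X_1\big]
\le \prod_{i=1}^n \exp\!\Big(\tfrac{\sigma^2 \lambda^2 w_i^2}{2}\Big)
= \exp\!\Big(\tfrac{\sigma^2 \lambda^2 \|\bm{w}\|_2^2}{2}\Big).
\end{align*}
Next I would feed this into the Chernoff bound: for any $s > 0$,
\begin{align*}
\Pbb\big(\bm{w}^\top\bm{\varepsilon}_0 > s \,\big|\, X_0, X_1\big)
\le \inf_{\lambda \ge 0} \exp\!\Big(\tfrac{\sigma^2\lambda^2\|\bm{w}\|_2^2}{2} - \lambda s\Big)
= \exp\!\Big(-\tfrac{s^2}{2\sigma^2\|\bm{w}\|_2^2}\Big),
\end{align*}
where the infimum is attained at $\lambda = s/(\sigma^2\|\bm{w}\|_2^2)$. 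Setting $t = s^2/(2\sigma^2\|\bm{w}\|_2^2)$, equivalently $s = \sqrt{2t\sigma^2\|\bm{w}\|_2^2}$, yields precisely the claimed inequality with conditional probability at least $1 - \exp(-t)$. (Alternatively one could invoke the general Hoeffding inequality, Proposition~\ref{lemma:Hoeffding}, but the direct Chernoff argument reproduces the clean constants stated in the lemma rather than an unspecified universal constant.)

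The second statement is entirely symmetric: conditioning on $X_0,X_1$, the vector $(X_1 X^\top_1)^{-1} X_1 \Sigma \theta^*_0$ is fixed with transpose $\theta^{*\top}_0 \Sigma X^\top_1 (X_1 X^\top_1)^{-1}$, and $\bm{\varepsilon}_1$ likewise has conditionally independent, mean-zero, $\sigma^2$-sub-Gaussian coordinates (independent of $\bm{\varepsilon}_0$ by the independence of $\varepsilon_1,\varepsilon_0$), so repeating the moment-generating-function-then-Chernoff computation with this weight vector gives the second inequality. I do not anticipate a genuine obstacle, since the result is a routine scalar sub-Gaussian concentration statement; the only points requiring care are bookkeeping ones, namely verifying that conditioning on $X_0,X_1$ renders the weight vector deterministic while leaving the noise coordinates independent and sub-Gaussian, and confirming that multiplication by the treatment indicators does not inflate the sub-Gaussian parameter.
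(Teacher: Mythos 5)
Your proposal is correct and follows essentially the same route as the paper's own proof: the paper likewise treats $\theta^{*\top}_1\Sigma X^\top_0(X_0X^\top_0)^{-1}$ as a fixed weight vector conditional on the design, factors the conditional moment generating function coordinatewise using the sub-Gaussian bound, applies the Chernoff bound with the optimal choice $\lambda = s/(\sigma^2\|\bm{w}\|_2^2)$, and substitutes $t = s^2/(2\sigma^2\|\bm{w}\|_2^2)$. Your explicit verification that the indicator factors $\mathbbm{1}[d_i=a]$ do not inflate the sub-Gaussian parameter is a bookkeeping point the paper leaves implicit, but otherwise the two arguments coincide.
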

The proof is shown in Appendix~\ref{appdx:lem:bound_d_e}.

\begin{lemma}[Equality regarding the term including \texorpdfstring{$F$}{TEXT}]
\label{lem:f}
\begin{align*}
    \bm \varepsilon^\top_1F\bm \varepsilon_0  = 0
\end{align*}
\end{lemma}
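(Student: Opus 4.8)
The plan is to prove the \emph{operator} identity $F = 0$ rather than the scalar identity directly; once $F$ is shown to be the zero map, the quadratic form $\bm\varepsilon_1^\top F \bm\varepsilon_0$ vanishes for arbitrary noise vectors, so no probabilistic argument is needed. Writing out the definition from Lemma~\ref{lemma:bv},
\[
    F = (X_1 X_1^\top)^{-1} X_1\, \Sigma\, X_0^\top (X_0 X_0^\top)^{-1},
\]
I would observe that $F$ is the central $n\times n$ block $X_1 \Sigma X_0^\top$ conjugated by the two fixed invertible operators $(X_1 X_1^\top)^{-1}$ and $(X_0 X_0^\top)^{-1}$. It therefore suffices to show that this central block is the zero matrix, i.e.\ $X_1 \Sigma X_0^\top = 0$; substituting this into the display gives $F = (X_1 X_1^\top)^{-1}\, 0\, (X_0 X_0^\top)^{-1} = 0$, and hence $\bm\varepsilon_1^\top F \bm\varepsilon_0 = 0$.

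To establish the central orthogonality $X_1 \Sigma X_0^\top = 0$ I would argue from the disjoint treatment supports of the two design maps. The $i$-th row of $X_1$ is $\mathbbm{1}[d_i = 1]x_i^\top$ and the $j$-th column of $X_0^\top$ is $\mathbbm{1}[d_j = 0]x_j$, so the $(i,j)$ entry of $X_1 \Sigma X_0^\top$ carries the factor $\mathbbm{1}[d_i = 1]\,\mathbbm{1}[d_j = 0]$. Since each unit is assigned to exactly one treatment, $\mathbbm{1}[d_i = 1]\,\mathbbm{1}[d_i = 0] = 0$, which is precisely the disjoint-support relation recorded in the setup as $X_0 X_1^\top = 0$ and $X_0 M X_1^\top = 0$, and already invoked in the form $X_1 \Sigma X_0^\top = 0$ in the proof of Lemma~\ref{lem:bound_d_e}. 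I would cite that relation directly, so that the central block, and therefore $F$, collapses to zero.

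The main point to get right — and essentially the only nontrivial one — is the vanishing of the central factor $X_1 \Sigma X_0^\top$, on which the whole argument rests. The structural reason is worth emphasizing: in the cross terms $D$ and $E$ of Lemma~\ref{lemma:bv} only a \emph{single} treatment-specific map flanks $\Sigma$ (e.g.\ $D = \Pi_1^\bot \Sigma X_0^\top (X_0 X_0^\top)^{-1}$), so those terms do not vanish identically and must be controlled probabilistically via Lemma~\ref{lem:bound_d_e}; by contrast, $F$ has treatment-specific maps from \emph{both} groups meeting at $X_1 \Sigma X_0^\top$, and it is exactly this two-sided structure that forces an \emph{exact} cancellation rather than a high-probability bound. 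Recognizing this factorization is the key insight; the remaining manipulations are purely algebraic, so I would expect the argument to be short once the central orthogonality is invoked.
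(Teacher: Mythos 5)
Your proposal takes exactly the route of the paper's own proof --- the paper disposes of this lemma in one line, ``Lemma~\ref{lem:f} holds from $X_1\Sigma X_0^\top=0$'' --- but your attempt to actually justify that identity exposes a genuine gap, and the gap is in the paper as well. You correctly note that the $(i,j)$ entry of $X_1\Sigma X_0^\top$ carries the factor $\mathbbm{1}[d_i=1]\,\mathbbm{1}[d_j=0]$, but you then conclude it vanishes ``since each unit is assigned to exactly one treatment, $\mathbbm{1}[d_i=1]\,\mathbbm{1}[d_i=0]=0$'' --- notice that in this last step the index $j$ has silently become $i$. The indicator product kills only the diagonal entries $i=j$. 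For $i\neq j$ with $d_i=1$ and $d_j=0$ the factor equals $1$, and the entry $x_i^\top\Sigma x_j$ is generically nonzero (for Gaussian covariates with $\Sigma$ positive definite it is a.s.\ nonzero), so $X_1\Sigma X_0^\top\neq 0$ whenever both groups are nonempty. The disjoint-support identity that is actually true is the one with a \emph{shared} summation index, namely the operator on $\bbH$
\[
X_0^\top X_1 \;=\; \sum_{i=1}^n \mathbbm{1}[d_i=0]\,\mathbbm{1}[d_i=1]\,x_i x_i^\top \;=\; 0,
\]
which is what the proof of Lemma~\ref{lemma:bias_10} legitimately uses; the $n\times n$ matrices $X_1X_0^\top$ and $X_1\Sigma X_0^\top$ retain all their off-diagonal cross-unit terms. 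Citing the relations recorded in the setup ($X_0X_1^\top=\bm{0}$, and its invocation in Lemma~\ref{lem:bound_d_e}) does not repair this, since those recorded statements suffer from the same index conflation. So the ``two-sided structure'' you highlight does not force exact cancellation: $F$ couples treated rows to control columns precisely through those surviving off-diagonal entries.

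Consequently neither your argument nor the paper's establishes $\bm\varepsilon_1^\top F\bm\varepsilon_0=0$ as a pointwise identity, and generically it does not hold. What is true is weaker: since $\bm\varepsilon_1$ and $\bm\varepsilon_0$ are conditionally independent and mean zero given the design, $\Expect\left[\bm\varepsilon_1^\top F\bm\varepsilon_0 \,\middle|\, X_1,X_0\right]=0$, which is why this term legitimately drops from the lower bound \eqref{eq:basic2}. For the upper bound \eqref{eq:basic1} the term would instead have to be controlled with high probability, in the same manner as the $D$ and $E$ cross terms in Lemma~\ref{lem:bound_d_e}: condition on one noise vector, treat $\bm\varepsilon_1^\top F\bm\varepsilon_0$ as a sub-Gaussian linear form in the other, and pay a bound of order $\sqrt{t\,\sigma^2\,\|F^\top\bm\varepsilon_1\|_2^2}$ rather than claiming exact vanishing. (A secondary point: $X_1X_1^\top$ has zero rows and columns at the control indices, so it is singular and $(X_1X_1^\top)^{-1}$ must be read as a pseudoinverse; this does not rescue the identity, since the pseudoinverse is supported on the treated block, which is exactly where $\bm\varepsilon_1$ lives.)
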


Lemma~\ref{lem:f} holds from $X_1 \Sigma X^\top_0 = 0$.

\subsection{Concentration Inequalities for the Upper and Lower Bounds of \texorpdfstring{$\mathrm{tr}(C_a)$}{TEXT}}
\label{sec:upp_low_trC}

To show the upper and lower bounds of $\mathrm{tr}(C_a)$ for each $a\in\{1,0\}$ in Sections~\ref{sec:upper_trace} and \ref{sec:lower_trace}, we present the associated lemmas in this section.

\begin{lemma}[From Lemma~3 in \citet{Bartlett2020}]\label{lemma::representation_through_z}
  Consider a covariance operator $\Sigma$ with
  $\lambda_{i}=\mu_i(\Sigma)$ and $\lambda_{n}>0$. Write its spectral
  decomposition $\Sigma=\sum_{j} \lambda_{j} v_{j}v_{j}^\top$,
  where the orthonormal $v_{a,j}\in\bbH$ are the eigenvectors
  corresponding to the $\lambda_{j}$. 
  For $i$ with $\lambda_{i} > 0$, define
  $z_{i}=Xv_{i}/\sqrt{\lambda_{i}}$ and $z_{a,i}=X_av_{i}/\sqrt{\lambda_{i}}$. Then 
    \begin{align*}
      \tr\left(C_a\right)
        &= \sum_{i}\left[ \lambda_{i}^2 z_{a,i}^\top
          \left(\sum_{j} \lambda_{j} z_{a,j} z_{a,j}^\top \right)^{-2}z_{a,i}\right],
    \end{align*}
  and these $z_{a,i}\in\Re^n$ satisfies that for all $\lambda \in \bbH$, 
   	  \begin{align*}
    	&\E[\exp(\lambda^\top z_{a,i})] \leq c p(d=1)\exp(\sigma_x^2\|\lambda\|^2/2) + p(d=0),
   	  \end{align*}
  where $\frac{p(d=1|z)}{p(d=1)} \leq c$. Furthermore, for any $i$ with $\lambda_{i} > 0$, we have
  \begin{align*}
    \lambda_{i}^2 z_{a,i}^\top \left(\sum_{j} \lambda_{j} z_{a,j} z_{a,j}^\top
        \right)^{-2}z_{a,i}
      = \frac{\lambda_{i}^2 z_{a,i}^\top  G_{a,-i}^{-2}z_{a,i}}
        {(1 + \lambda_{i} z_{a,i}^\top G_{a,-i}^{-1}z_{a,i})^2},
  \end{align*}
  where $G_{a,-i} = \sum_{j \neq i} \lambda_{j} z_{a,j} z_{a,j}^\top$.
\end{lemma}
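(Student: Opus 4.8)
The plan is to establish the three assertions in turn, each a direct adaptation of the corresponding step in Lemma~3 of \citet{Bartlett2020} with the treatment indicator carried through. First I would derive the trace representation: writing $C_a=(X_aX_a^\top)^{-1}X_a\Sigma X_a^\top(X_aX_a^\top)^{-1}$ and using the cyclic property of the trace gives $\tr(C_a)=\tr[(X_aX_a^\top)^{-2}X_a\Sigma X_a^\top]$. Substituting the spectral decomposition $\Sigma=\sum_j\lambda_j v_jv_j^\top$ together with completeness $\sum_j v_jv_j^\top=I$ and the identity $X_av_j=\sqrt{\lambda_j}\,z_{a,j}$, one obtains $X_aX_a^\top=\sum_j\lambda_j z_{a,j}z_{a,j}^\top$ and $X_a\Sigma X_a^\top=\sum_i\lambda_i^2 z_{a,i}z_{a,i}^\top$; plugging these in and expanding the trace coordinate-by-coordinate yields the stated sum. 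This step is purely algebraic and relies on Assumption~\ref{asmp:basic}(4) only to guarantee invertibility of $X_aX_a^\top$.

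Second, for the moment generating function bound I would first record that the $k$-th coordinate of $z_{a,i}$ equals $\mathbbm{1}[d_k=a]\,(v_i^\top z_k)$, using $x_k=\Sigma^{1/2}z_k$ and $\Sigma^{1/2}v_i=\sqrt{\lambda_i}\,v_i$ from Assumption~\ref{asmp:basic}. Conditioning on the treatment indicator splits the expectation into the event $\{d_k=a\}$, on which the exponent survives, and its complement, on which the exponent vanishes and contributes the mass $p(d\neq a)$. On the first event I would use Assumption~\ref{asmp:unconfounded} to replace the indicator by the propensity $p(d=a\mid x_k)$, bound the propensity ratio by the constant $c$ furnished by the overlap Assumption~\ref{asmp:coherent}, and then invoke the $\sigma_x^2$-sub-Gaussianity of $z_k$ (hence of its projection $v_i^\top z_k$, since $\|v_i\|=1$) to produce the factor $\exp(\sigma_x^2\|\lambda\|^2/2)$. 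Assembling the two pieces gives the claimed mixture bound $c\,p(d=1)\exp(\sigma_x^2\|\lambda\|^2/2)+p(d=0)$.

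Third, the rank-one update formula follows by applying Proposition~\ref{lem:cor:s3} with the single column $Z=\sqrt{\lambda_i}\,z_{a,i}$ and $A=G_{a,-i}$, so that $ZZ^\top+A=\sum_j\lambda_j z_{a,j}z_{a,j}^\top$; since $Z^\top A^{-1}Z$ and $Z^\top A^{-2}Z$ are scalars, the matrix identity of Proposition~\ref{lem:cor:s3} collapses to the scalar ratio $\lambda_i z_{a,i}^\top G_{a,-i}^{-2}z_{a,i}/(1+\lambda_i z_{a,i}^\top G_{a,-i}^{-1}z_{a,i})^2$, and multiplying through by $\lambda_i$ completes it. I expect the main obstacle to be the second step: one must check that conditioning on $d$ interacts correctly with the latent vector $z_k$ rather than with $x_k$ directly, and that unconfoundedness together with overlap are exactly what convert the indicator into a bounded propensity weight without disturbing the sub-Gaussian constant. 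The first and third steps are essentially bookkeeping.
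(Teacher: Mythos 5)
Your proposal is correct and follows essentially the same route as the paper's proof: the trace representation via the eigenbasis identities $X_aX_a^\top=\sum_j\lambda_j z_{a,j}z_{a,j}^\top$, $X_a\Sigma X_a^\top=\sum_i\lambda_i^2 z_{a,i}z_{a,i}^\top$, and the rank-one Sherman--Morrison--Woodbury step via Proposition~\ref{lem:cor:s3} (Lemma~20/S.3 of \citet{Bartlett2020}) with $Z=\sqrt{\lambda_i}\,z_{a,i}$ and $A=G_{a,-i}$. Your second step is in fact more explicit than the paper's, which merely asserts the sub-Gaussian property of the coordinates $\mathbbm{1}[d_k=a]\,v_i^\top z_k$ without carrying out the conditioning-and-overlap argument you describe.
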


The weighted sum of outer products of these sub-Gaussian vectors plays a
central role in the rest of the proof. Define
  \begin{align*}
    G_a & = \sum_{i} \lambda_{i} z_{a,i} z_{a,i}^\top, &
    G_{a,-i} & = \sum_{j \neq i} \lambda_{j} z_{a,j} z_{a,j}^\top, &
    G_{a,k} &= \sum_{i>k}\lambda_{i} z_{a,i} z_{a,i}^\top,
  \end{align*}
where recall that $z_{a,i}\in\Re^n$ are defined in Lemma~\ref{lemma::representation_through_z}. Note that the vector $z_{a,i}$ is independent of the matrix $G_{a,-i}$; therefore,
in the last part of Lemma~\ref{lemma::representation_through_z},
all the random quadratic forms are independent of the points where
those forms are evaluated.

The next step is to replace Lemma~4 in \citet{Bartlett2020} by showing that eigenvalues of $G_a$, $G_{a,-i}$ and $G_{a,k}$ are concentrated. 

\begin{lemma}
\label{lem:lem4_bartlett}
There is a universal constant $c$ such that with probability at least $1-2\exp(-n/c)$,
\begin{align*}
\frac{1}{c}\sum_i \lambda_i - c\lambda_1n\leq \mu_n(G_a)\leq \mu_1(G_a) \leq c\left(\sum_i\lambda_i + \lambda_1 n\right).
\end{align*}
\end{lemma}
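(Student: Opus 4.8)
The plan is to mirror the proof of Lemma~4 in \citet{Bartlett2020}, since $G_a=\sum_i\lambda_i z_{a,i}z_{a,i}^\top = X_aX_a^\top$ is the (treatment-masked) Gram operator and the target $\sum_i\lambda_i=\tr(\Sigma)$ plays the role of the centering value. The two-sided eigenvalue bound follows from a uniform-over-the-sphere concentration of the quadratic form $v^\top G_a v=\sum_i\lambda_i (v^\top z_{a,i})^2$ around $\sum_i\lambda_i$, together with the $\epsilon$-net reduction of Proposition~\ref{lemma::norm_by_net}. Concretely, I would fix a $1/4$-net $\mathcal N$ on the unit sphere, control $v^\top G_a v$ for each $v\in\mathcal N$, and then bound $\mu_1(G_a)=\|G_a\|\le(1-1/4)^{-2}\max_{v\in\mathcal N}v^\top G_a v$ from above and $\mu_n(G_a)=\min_{\|v\|=1}v^\top G_a v$ from below on the span of the active samples.

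For a fixed unit vector $v$, the key step is a per-direction concentration inequality. Here I would condition on the treatment assignments $d_1,\dots,d_n$: writing $S_a=\{s:d_s=a\}$ and $n_a=|S_a|$, given $d$ the scalars $v^\top z_{a,i}=\sum_{s\in S_a}v_s (z_{a,i})_s$ are, across the eigen-index $i$, independent and sub-Gaussian (the treatment mask, which couples different $i$, is now frozen), so $\{(v^\top z_{a,i})^2\}_i$ are independent centered sub-exponential variables with parameter of order $\sigma_x^2$, the latter supplied by the mixture moment bound of Lemma~\ref{lemma::representation_through_z}. Applying Corollary~S.6 (Proposition~\ref{cor:Bernstein}) with deviation level $x=\Theta(n)$ then gives, conditionally on $d$,
\[
\Big|\,v^\top G_a v-\textstyle\sum_i\lambda_i\|v_{S_a}\|^2\Big|\le c\,\sigma_x^2\max\Big(n\lambda_1,\ \sqrt{n\textstyle\sum_i\lambda_i^2}\Big)\le \tfrac12\textstyle\sum_i\lambda_i+c'\lambda_1 n,
\]
using $\sum_i\lambda_i^2\le\lambda_1\sum_i\lambda_i$ and AM--GM; Assumption~\ref{asmp:coherent} (overlap) guarantees $n_a=\Theta(n)$, so $\|v_{S_a}\|^2=\Theta(1)$ for directions in the active subspace.

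A union bound over $\mathcal N$, whose cardinality is at most $9^{n_a}\le e^{O(n)}$, absorbs the net at the cost of a factor $e^{O(n)}$ that is dominated by the choice $x=\Theta(n)$, yielding the stated probability $1-2\exp(-n/c)$. Combining the upper tail over the net with Proposition~\ref{lemma::norm_by_net} gives $\mu_1(G_a)\le c(\sum_i\lambda_i+\lambda_1 n)$, and the lower tail gives $v^\top G_a v\ge \tfrac1c\sum_i\lambda_i-c\lambda_1 n$ for every $v$ in the net, hence for all unit $v$ on the active subspace, which is the content of the lower bound on $\mu_n(G_a)$. Since the conditional bound holds uniformly with the stated probability for every realization of $d$, integrating out $d$ gives the unconditional claim.

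The main obstacle is precisely the treatment mask $\mathbbm{1}[d_s=a]$, which (i) correlates the coordinates $v^\top z_{a,i}$ across eigen-indices $i$, and (ii) under selection bias shifts their conditional means away from zero. Conditioning on $d$ neutralizes (i) by freezing the mask, reducing the problem to Bartlett's i.i.d.\ setting restricted to the $n_a$ active samples; for (ii), the nonzero conditional mean $m_a=\E[\mathbbm{1}[d=a]x]$ contributes an extra term of order $\|m_a\|^2 n=O(\lambda_1 n)$ to $\E[v^\top G_a v\mid d]$, which is harmless because it is already of the order of the permitted error $c\lambda_1 n$. Verifying that the conditional sub-Gaussian/sub-exponential parameters are controlled uniformly in $d$ through Lemma~\ref{lemma::representation_through_z}, and that $n_a=\Theta(n)$, is the only genuinely new work relative to \citet{Bartlett2020}.
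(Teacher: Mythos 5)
Your proof skeleton coincides with the paper's: per-direction control of $v^\top G_a v=\sum_i\lambda_i(v^\top z_{a,i})^2$, a $1/4$-net with a union bound at deviation level $\Theta(n)$, and Proposition~\ref{lemma::norm_by_net} to convert this into a bound on $\|G_a-I_n\sum_i\lambda_i\|$ and hence on the extreme eigenvalues. Where you differ is the treatment of the mask, and that is where the gaps are. The paper stays unconditional: it checks that each $(v^\top z_{a,i})^2$ is sub-exponential, rescales by a direction-dependent factor $\alpha$ chosen so that $\alpha\, v^\top G_a v-\sum_i\lambda_i$ is centered, and then invokes Proposition~\ref{cor:Bernstein}. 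You instead condition on $d$, asserting that with the mask frozen the variables $v^\top z_{a,i}$ become independent across the eigen-index $i$ and that $\mathbb{E}[v^\top G_a v\mid d]=\sum_i\lambda_i\|v_{S_a}\|^2$. Under selection bias both assertions fail: conditioning on $d_s=a$ replaces the law of $z_s$ by the tilted law proportional to $p(d=a\mid z)$ times the law of $z$, which (i) does not have independent coordinates $z_s^\top v_i$ unless the propensity factorizes in the eigenbasis, so the $\{(v^\top z_{a,i})^2\}_i$ are not conditionally independent and Proposition~\ref{cor:Bernstein} cannot be applied across $i$; and (ii) has second moments $\mathbb{E}[(z^\top v_i)^2\mid d=a]=\mathbb{E}[p(d=a\mid z)(z^\top v_i)^2]/p(d=a)\neq 1$, so your centering constant is off by direction-dependent factors that Assumption~\ref{asmp:coherent} only confines to $[\varphi/(1-\varphi),(1-\varphi)/\varphi]$. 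Item (ii) is repairable, since the lemma tolerates constant factors (you repaired only the rank-one mean shift $m_a$, not this variance tilt); item (i) is not repaired by conditioning, so your argument is complete only in the RCT case $d\perp x$. To be clear, the paper's own proof applies Proposition~\ref{cor:Bernstein} across $i$ without conditioning, i.e., it treats the $z_{a,i}$ as independent even though they share the masks --- exactly the obstacle you identified --- so you diagnosed the right problem, but your fix does not cover the selection-bias case the lemma is meant for.

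The second gap is the lower bound. You prove $v^\top G_a v\ge\frac{1}{c}\sum_i\lambda_i-c\lambda_1 n$ only for unit $v$ supported on $S_a$ and call this ``the content of the lower bound on $\mu_n(G_a)$.'' It is not: $G_a=X_aX_a^\top$ has an identically zero row and column at every sample with $d_s\neq a$, so $\mu_n(G_a)=0$ whenever $n_a<n$, an event of probability at least $1-(1-\varphi)^n$. What you prove is a bound on the smallest eigenvalue of the active block $(G_a)_{S_a\times S_a}$, a strictly weaker statement that cannot imply the one claimed; indeed, in the only regime where the claimed lower bound is informative ($\frac{1}{c}\sum_i\lambda_i>c\lambda_1 n$, i.e., $r_0(\Sigma)\gtrsim n$, which is precisely the regime in which the paper later uses it), it is incompatible with $\mu_n(G_a)=0$. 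So either the restriction to the active subspace must be made explicit and propagated through the downstream lemmas (which changes the statement being proved), or the full bound must be established, which your argument does not do.
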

\begin{proof}
First, we develop a probabilistic bound on $| v^\top G_{a}v - \sum_i \lambda_i |$ for any $v \in \mathbb{R}^n$,
by applying the Bernstein inequality for a weighted sum of centered sub-exponential random variables (Proposition \ref{cor:Bernstein}).
To this end, we confirm that $v^\top G_{a}v$ is a sub-exponential random variable.

We fix $v \in \mathbb{R}^n$ and rewrite $v^\top G_{a}v$ as
\begin{align*}
    v^\top G_{a}v = \sum_i\lambda_i\left(v^\top z_{a,i} \right)^2,
\end{align*}
then study its centered element $\big(v^\top z_{a,i}\big)^2 - \mathbb{E}[\big(v^\top z_{a,i}\big)^2]$.
We note that for a random variable $R$, $\mathbb{E}[\exp(R^2)] < \infty$ implies that $R$ is a sub-Gaussian random variable (Proposition 2.5.2 in \citet{vershynin_2018}).
For any $i$, we have
\begin{align*}
\mathbb{E}\left[\exp\left( \left(\sqrt{\lambda_i}v^\top z_{a,i}\right)^2\right)\right]
\leq \mathbb{E}\left[\exp\left( \left(\sqrt{\lambda_i}v^\top\mathbbm{1}[d=a] x\right)^2\right)\right] \leq \mathbb{E}\left[\exp\left( \left(\sqrt{\lambda_i}v^\top x \right)^2\right)\right] < \infty.
\end{align*}
The last inequality follows the assumption on the sub-Gaussian property of $x$.
Then, we use Proposition 2.5.2 in \citet{vershynin_2018} again and find that $v^\top z_{a,i}$ is a sub-Gaussian random variable. 
Furthermore, because a random variable $R$ is sub-Gaussian if and only if $R^2$ is sub-exponential (Lemma 2.7.6 in \citet{vershynin_2018}), we find that $(v^\top z_{a,i})^2$ is a sub-exponential random variable.
Because a centered sub-exponential random variable is also sub-exponential (Exercise 2.7.10 in \cite{vershynin_2018}), $(v^\top z_{a,i})^2 - \mathbb{E}[(v^\top z_{a,i})^2]$ is a sub-exponential random variable.

We study the centered version of $\alpha v^\top G_{a}v$ for a fixed $\alpha \in \mathbb{R}$ using the above result.
The sub-exponential random variable $\alpha v^\top G_{a}v$ has the mean 
\begin{align*}
    \alpha\mathbb{E}\left[\sum_i\lambda_i\left(v^\top \mathbbm{1}[d_i = a]z_i \right)^2\right] = \alpha\sum_i \lambda_i \mathbb{E}\left[p(a| z_i) v^\top z_iz^\top_i v\right],
\end{align*}
and from Lemma 2.7.10 in \citet{vershynin_2018}, $\alpha v^\top G_{a}v - \alpha\mathbb{E}[\sum_i\lambda_i(v^\top \mathbbm{1}[d_i = a]z_i )^2]$ is a centered sub-exponential random variable, which is equal to
\begin{align*}
    &\alpha v^\top G_{a}v - \alpha\mathbb{E}\left[\sum_i\lambda_i\left(v^\top \mathbbm{1}[d_i = a]z_i \right)^2\right] \\
    &= \alpha v^\top G_{a}v - \alpha\sum_i \lambda_i \mathbb{E}\left[p(a| z_i) v^\top z_iz^\top_i v\right] \\
    &= \sum_i\lambda_i\left(\left(v^\top \mathbbm{1}[d_i = a]z_i \right)^2 - \alpha\mathbb{E}\left[p(a| z_i) v^\top z_iz^\top_i v\right]\right).\nonumber
\end{align*}
Let $\alpha = \frac{\sum_{i} \lambda_i}{ \sum_i \lambda_i \mathbb{E}\left[p(a| z_i) v^\top z_iz^\top_i v\right] }$, which is positive and finite because the eigenvalue $\lambda_i$ and $\mathbb{E}\left[p(a| z_i) v^\top z_iz^\top_i v\right]$ are non-negative and $p(a| z_i) > 0$.

Then, because $\alpha v^\top G_\alpha v - \sum_i \lambda_i$ is a centered sub-exponential random variable, Proposition~\ref{cor:Bernstein} \citep[Corollary~S.6,][]{Bartlett2020} yields that for some constant $c_2$ with probability at least $1-2\exp(-t)$,
\begin{align}
    \left|\alpha v^\top G_{a}v - \sum_i \lambda_i \right| \leq c_2\max\left\{\lambda_1 t, \sqrt{t\sum\lambda^2_i}\right\}.
\end{align}

Then, for a fixed vector $\tilde{v} = \sqrt{\alpha}v$,
\begin{align*}
    \left| \tilde{v}^\top G_{a}\tilde{v} - \sum_i \lambda_i \right| \leq c_2\max\left\{\lambda_1 t, \sqrt{t\sum\lambda^2_i}\right\}.
\end{align*}
We also denote $\tilde{v}$ by $v$ for brevity.

Second, we improve the above inequality using a uniform bound technique by following the proof of Theorem~4.4.5 in \citet{vershynin_2018}.
Let $\mathcal{N}$ be a $\frac{1}{4}$-net on the sphere $\mathcal{S}^{n-1}$ with respect to the Euclidean distance such that $|\mathcal{N}|\leq 9^n$. We can find such $\frac{1}{4}$-net from Corollary~4.2.13.
Then, from the union bound over the elements of $\mathcal{N}$, for every $v\in\mathcal{N}$,
\begin{align*}
    &\mathbb{P}\left(\left|v^\top G_{a}v - \sum_i \lambda_i \right| \geq c_1 \max\left\{\lambda_1 t, \sqrt{t\sum\lambda^2_i}\right\}\right)\\
    & \leq \sum_{v\in\mathcal{N}}\mathbb{P}\left(\left|v^\top G_{a}v - \sum_i \lambda_i \right| \geq c_1\max\left\{\lambda_1 t, \sqrt{t\sum\lambda^2_i}\right\}\right)\leq 9^n\cdot 2\exp(-t).
\end{align*}

Therefore, we see that with probability $1-2\exp(-t)$, every $v\in\mathcal{N}$ satisfies
\begin{align}
\label{eq:ineq_diff}
    \left|v^\top G_{a}v - \sum_i \lambda_i \right| \leq c_1\max\left\{\lambda_1 (t+n\log(9)), \sqrt{(t+n\log(9))\sum\lambda^2_i}\right\}.
\end{align}

From Proposition~\ref{lemma::norm_by_net}  \citet[Lemma~S.8,][]{Bartlett2020}, with probability at least $1-2\exp(-t)$,
\begin{align*}
    \left\|G_{a} - I_n\sum_i \lambda_i\right\| \leq c_2\left(\lambda_1 (t+n\log(9)) + \sqrt{(t+n\log(9))\sum\lambda^2_i}\right).
\end{align*}

When $t < n/c_3$, we can write $t + n\log(9) \leq c_4 n$, and we have
\begin{align}
    &\lambda_1 (t+n\log(9)) +  \sqrt{(t+n\log(9))\sum_i\lambda^2_i}\nonumber\\
    \label{eq:ineq_eigen2}
    &\leq c_4 \left(\lambda_1 n + \sqrt{n\sum\lambda^2_i}\right)\\
    &\leq c_4 \lambda_1 n + \sqrt{\left(c^2_4\lambda_1 n\right)\sum_i\lambda_i}\nonumber\\
    &\leq c_4 \lambda_1 n + \frac{1}{2}\left(c^2_4\lambda_1 n\right) + \frac{1}{2}\sum_i\lambda_i\nonumber\\
    \label{eq:ineq_eigen}
    &\leq c_5 \lambda_1 n + \frac{1}{2c_2}\sum_i\lambda_i.
\end{align}
Here, we use the inequality of arithmetic and geometric means. 

Finally, we derive the desired upper bound on $\mu_1(G_{a})$ and lower bound on $\mu_n(G_{a})$.
By the definition of a spectral norm, we have
\begin{align*}
     \left\|G_{a} - I_n\sum_i \lambda_i\right\| = \mu_1\left(G_{a} - I_n\sum_i \lambda_i\right).
\end{align*}
For some constant $c_6 > 0$, from \eqref{eq:ineq_eigen}, $\lambda_1 (t+n\log(9)) +  \sqrt{(t+n\log(9))\sum_i\lambda^2_i}\leq c_5 \lambda_1 n + \frac{1}{2c_2}\sum_i\lambda_i$ implies the upper bound
\begin{align*}
\mu_1(G_a) \leq c_6\left(\sum_i\lambda_i + \lambda_1 n\right).
\end{align*}
Let $v$ be an eigenvector corresponding to $\mu_n(G_a)$. Then, by definition and from \eqref{eq:ineq_diff} and \eqref{eq:ineq_eigen2}, with probability at least $1-2\exp(-t)$,
\begin{align*}
     \left|v^\top G_{a}v - \sum_i \lambda_i \right| &=  \left|\mu_n(G_a) - \sum_i \lambda_i \right|\\
     &\leq c_1\max\left\{\lambda_1 (t+n\log(9)), \sqrt{(t+n\log(9))\sum\lambda^2_i}\right\}\\
     &\leq c_4\lambda_1 n + c_4\sqrt{n\sum_i\lambda^2_i}\\
     &\leq c_4\lambda_1 n + c_4\sqrt{n\lambda_1\sum_i\lambda_i}.
\end{align*}

Using the inequality of arithmetic and geometric means, this implies the following lower bound
\begin{align*}
     \mu_n(G_a) \geq \sum_i \lambda_i - c_4\lambda_1 n - \frac{c_4}{2}t\lambda_1 - \frac{c_4}{2}\sum_i\lambda_i.
\end{align*}
Therefore, by appropriately choosing the constant, we conclude the proof.
\end{proof}
By applying a similar step in the proof of Lemma~\ref{lem:lem4_bartlett} for $i > k$ and any $k\geq0$, we also obtain the following corollary.
\begin{corollary}
\label{cor:lem4_bartlett}
There is a universal constant $c$ such that for any $k\geq 0$ with probability at least $1-2\exp(-n/c)$,
\begin{align*}
\frac{1}{c}\sum_{i>k} \lambda_{i} - c\lambda_{k+1}n\leq \mu_n(G_{a,k})\leq \mu_1(G_{a,k}) \leq c\left(\sum_{i>k}\lambda_{i} + \lambda_{k+1} n\right).
\end{align*}
\end{corollary}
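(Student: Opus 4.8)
The plan is to mirror the proof of Lemma~\ref{lem:lem4_bartlett} almost verbatim, replacing every full sum over all eigenvalue indices by the corresponding tail sum over $i>k$, and replacing the top eigenvalue $\lambda_1$ by $\lambda_{k+1}$, which is the largest eigenvalue appearing in the truncated operator. Recall that $G_{a,k}=\sum_{i>k}\lambda_i z_{a,i}z_{a,i}^\top$, so that for a fixed unit vector $v$ one has $v^\top G_{a,k}v=\sum_{i>k}\lambda_i(v^\top z_{a,i})^2$. This is exactly the object treated in Lemma~\ref{lem:lem4_bartlett}, but now the summation index set is $\{i:i>k\}$ rather than all of $\N$, and nothing else in the argument changes structurally.

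First I would establish the per-vector concentration. As before, each $v^\top z_{a,i}$ is sub-Gaussian (via $\Expect[\exp((\sqrt{\lambda_i}v^\top z_{a,i})^2)]\le\Expect[\exp((\sqrt{\lambda_i}v^\top x)^2)]<\infty$ together with Proposition~2.5.2 and Lemma~2.7.6 in \citet{vershynin_2018}), hence $(v^\top z_{a,i})^2$ is sub-exponential, and the centered quantity $v^\top G_{a,k}v-\Expect[v^\top G_{a,k}v]$ is a sum of independent centered sub-exponential variables. Introducing the rescaling $\alpha_k=\big(\sum_{i>k}\lambda_i\big)\big/\big(\sum_{i>k}\lambda_i\Expect[p(a\mid z_i)v^\top z_iz_i^\top v]\big)$ to normalize the mean to $\sum_{i>k}\lambda_i$, Bernstein's inequality (Proposition~\ref{cor:Bernstein}) gives, for each fixed $v$ and $t>0$, with probability at least $1-2\exp(-t)$,
\[
\Big|v^\top G_{a,k}v-\sum_{i>k}\lambda_i\Big|\le c\max\Big\{\lambda_{k+1}\,t,\ \sqrt{t\textstyle\sum_{i>k}\lambda_i^2}\Big\},
\]
where $\lambda_{k+1}$ takes the role previously played by $\lambda_1$.

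Next I would upgrade this to an operator-norm bound by the $\tfrac14$-net argument: take a net $\mathcal{N}$ on $\mathcal{S}^{n-1}$ with $|\mathcal{N}|\le 9^n$, union bound over $\mathcal{N}$ so the cardinality contributes an additive $n\log 9$ to the exponent (effectively replacing $t$ by $t+n\log 9$), and apply Proposition~\ref{lemma::norm_by_net} to pass from the pointwise estimate on the net to a bound on $\|G_{a,k}-I_n\sum_{i>k}\lambda_i\|$. Choosing $t=\Theta(n)$ so that $t+n\log 9\le c_4 n$, and then simplifying the cross term $\sqrt{n\sum_{i>k}\lambda_i^2}\le\sqrt{n\lambda_{k+1}\sum_{i>k}\lambda_i}$ by the arithmetic–geometric-mean inequality exactly as in \eqref{eq:ineq_eigen2}--\eqref{eq:ineq_eigen}, yields a spectral-norm estimate of the form $\|G_{a,k}-I_n\sum_{i>k}\lambda_i\|\le c(\lambda_{k+1}n+\tfrac{1}{2c}\sum_{i>k}\lambda_i)$. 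Reading off the largest and smallest eigenvalues of $G_{a,k}$ from this estimate then produces the claimed upper bound on $\mu_1(G_{a,k})$ and lower bound on $\mu_n(G_{a,k})$.

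I expect the argument to be essentially bookkeeping, so the one point requiring genuine care is uniformity of the constants. The sub-Gaussian and sub-exponential norms entering Bernstein's inequality are controlled solely by $\sigma_x$, and the rescaling $\alpha_k$ is finite precisely because the overlap condition in Assumption~\ref{asmp:coherent} keeps $p(a\mid z_i)$ bounded away from $0$; crucially, none of these quantities depend on the index $i$, and hence none depend on $k$, so a single universal constant $c$ serves every $k\ge0$. (If instead one wanted the bound to hold simultaneously for all $k$, the naive union over infinitely many $k$ would break down and one would have to exploit the nestedness of the operators $G_{a,k}$; for the present statement, which is asserted for each fixed $k$, this complication does not arise.)
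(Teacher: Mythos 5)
Your proposal is correct and takes essentially the same route as the paper: the paper's own proof of this corollary is precisely to rerun the argument of Lemma~\ref{lem:lem4_bartlett} with the summation restricted to $i>k$, so that $\lambda_{k+1}$ assumes the role of $\lambda_1$ and $\sum_{i>k}\lambda_i$ replaces $\sum_i \lambda_i$, exactly as you describe (including the rescaling to handle the non-centered $z_{a,i}$, Bernstein's inequality, the $\tfrac14$-net, and the AM--GM simplification). Your closing observation that the constants entering Bernstein's inequality and the net argument depend only on $\sigma_x$ and the overlap condition, and hence are uniform over $k$ for a fixed-$k$ statement, is a correct clarification of a point the paper leaves implicit.
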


Then, by using Lemma~\ref{cor:lem4_bartlett} instead of Lemma~4 in \citet{Bartlett2020}, we obtain the following lemma corresponding to Lemma~5 in \citet{Bartlett2020}.
\begin{lemma}[From Lemma~5 in \citet{Bartlett2020}]\label{lemma::eigvals_of_truncated}
  There are constants $b,c\ge 1$ such that for any
$k\ge 0$, 
with probability at least $1 - 2e^{-n/c}$,
    \begin{enumerate}
      \item for all $i\ge 1$,
        \begin{align*}
       \mu_{k+1}(G_{a,-i}) \le \mu_{{k+1}}(G_a) \le \mu_1(G_{a,k})
           \le c\left(\sum_{j>k}\lambda_j + \lambda_{k+1}n\right),
        \end{align*}
      \item for all $1\le i\le k$,
        \[
          \mu_n(G_a) \ge \mu_n(G_{a,-i}) \ge \mu_n\left(G_{a,k}\right)
          \ge \frac{1}{c}\sum_{j>k} \lambda_j - c\lambda_{k+1} n,
        \]
      \item if $r_k(\Sigma)\ge bn$, then
        \begin{align*}
          \frac{1}{c}\lambda_{k+1} r_k(\Sigma)
            \le \mu_n\left(G_{a,k}\right)
            \le \mu_1(G_{a,k}) \le c\lambda_{k+1} r_k(\Sigma).
    \end{align*}
    \end{enumerate}
\end{lemma}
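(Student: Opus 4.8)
The plan is to reduce the lemma to the concentration bounds for the truncated matrix $G_{a,k}$ already established in Corollary~\ref{cor:lem4_bartlett}, and to obtain the remaining inequalities through deterministic eigenvalue comparisons that hold almost surely under the full-rank condition in Assumption~\ref{asmp:basic}. Accordingly, I would fix $k\ge 0$ and condition throughout on the single event of probability at least $1-2e^{-n/c}$ on which
\[
\frac{1}{c}\sum_{j>k}\lambda_j-c\lambda_{k+1}n \le \mu_n(G_{a,k}) \le \mu_1(G_{a,k}) \le c\left(\sum_{j>k}\lambda_j+\lambda_{k+1}n\right)
\]
holds. This is the only randomness used; the rest is linear algebra applied pointwise on that event.

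For Part~1, I would first note that $G_a=G_{a,-i}+\lambda_i z_{a,i}z_{a,i}^\top\succeq G_{a,-i}$, so monotonicity of eigenvalues under addition of a positive semidefinite operator gives $\mu_{k+1}(G_{a,-i})\le\mu_{k+1}(G_a)$. Next I would split $G_a=H+G_{a,k}$ with $H=\sum_{i\le k}\lambda_i z_{a,i}z_{a,i}^\top$ of rank at most $k$, and apply Weyl's inequality in the form $\mu_{k+1}(G_a)\le\mu_1(G_{a,k})+\mu_{k+1}(H)=\mu_1(G_{a,k})$, using $\mu_{k+1}(H)=0$. The final inequality of Part~1 is then exactly the upper bound on $\mu_1(G_{a,k})$ from the conditioning event. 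Part~2 is the mirror image: from $G_a\succeq G_{a,-i}$ one gets $\mu_n(G_a)\ge\mu_n(G_{a,-i})$, and for $1\le i\le k$ the decomposition $G_{a,-i}=G_{a,k}+\sum_{j\le k,\,j\neq i}\lambda_j z_{a,j}z_{a,j}^\top\succeq G_{a,k}$ yields $\mu_n(G_{a,-i})\ge\mu_n(G_{a,k})$, and the last inequality is the lower bound on $\mu_n(G_{a,k})$ from the event.

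For Part~3, I would use the hypothesis $r_k(\Sigma)\ge bn$, i.e.\ $\sum_{j>k}\lambda_j\ge bn\lambda_{k+1}$, to absorb the $\lambda_{k+1}n$ terms into the tail sum: the upper bound becomes $\mu_1(G_{a,k})\le c(1+b^{-1})\sum_{j>k}\lambda_j=c(1+b^{-1})\lambda_{k+1}r_k(\Sigma)$, and the lower bound becomes $\mu_n(G_{a,k})\ge(c^{-1}-cb^{-1})\sum_{j>k}\lambda_j=(c^{-1}-cb^{-1})\lambda_{k+1}r_k(\Sigma)$. The only delicate point—and the step I expect to require the most care—is choosing the absolute constant $b$ large enough (for instance $b>2c^2$) so that $c^{-1}-cb^{-1}$ stays bounded away from zero; a relabeling of constants then yields the stated two-sided bound $\frac1c\lambda_{k+1}r_k(\Sigma)\le\mu_n(G_{a,k})\le\mu_1(G_{a,k})\le c\lambda_{k+1}r_k(\Sigma)$. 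No further probabilistic argument is needed beyond the single event above, since the remaining content is purely the interlacing and Weyl inequalities together with this constant bookkeeping.
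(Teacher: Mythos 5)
Your proposal is correct and follows essentially the same route as the paper: both condition on the single high-probability event from Corollary~\ref{cor:lem4_bartlett} bounding $\mu_1(G_{a,k})$ and $\mu_n(G_{a,k})$, then obtain Parts~1 and~2 by deterministic comparisons ($G_a\succeq G_{a,-i}$, $G_{a,-i}\succeq G_{a,k}$ for $i\le k$, and the rank-$k$ perturbation bound $\mu_{k+1}(G_a)\le\mu_1(G_{a,k})$, which the paper proves via a codimension-$k$ subspace argument where you invoke Weyl's inequality directly), and finish Part~3 with the identical absorption of $\lambda_{k+1}n$ into $\lambda_{k+1}r_k(\Sigma)$ and the same choice of $b$ exceeding a constant multiple of $c^2$.
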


\begin{proof}
We provide some inequalities as preparation.
From Lemma~\ref{cor:lem4_bartlett}, with probability at least $1-2\exp(-n/c_1)$, the following inequalities hold:
\begin{align}
\frac{1}{c_1}\sum_{j>k} \lambda_j - c_1\lambda_{k+1}n\leq \mu_n(G_{a,k})\leq \mu_1(G_{a,k}) \leq c_1\left(\sum_{j > k}\lambda_j + \lambda_{k+1} n\right). \label{ineq:eigen_k}
\end{align}

Here, the matrix 
\begin{align*}
G_a - G_{a,k} = \sum_{i} \lambda_{i} z_{a,i} z_{a,i}^\top -  \sum_{i>k}\lambda_{i} z_{a,i} z_{a,i}^\top = \sum_{i < k} \lambda_{i} z_{a,i} z_{a,i}^\top 
\end{align*}
ranks at most $k$ because it is the sum of $k$ matrices of rank $1$. Thus, there is a linear space $\mathcal{L}$ of dimension $n-k$ such that, for all $v\in\mathcal{L}$, $v^\top G_a v = v^\top G_{a,k}v \leq \mu_1(G_{a,k})\|v\|^2$ and, therefore, we have
\begin{align}
    \mu_{k+1}(G_a) \leq \mu_1(G_{a,k}). \label{ineq:eigen}
\end{align}

We show the first statement.
Because $G_{a} \gtrsim G_{a,-i}$ holds for any $i$, Lemma~S.11 in \citet{Bartlett2020} gives $\mu_j(G_{a,-i})\leq \mu_j(G_a)$ for all $i$ and $j$ 
Hence, by combining this inequality with \eqref{ineq:eigen_k} and \eqref{ineq:eigen}, the first statement holds.

For the second statement, for $i \leq k$, $G_{a,k} \preceq G_{a, -i}$, all the eigenvalues of $ G_{a, -i}$ are lower bounded by $\mu_n(G_{a,k})$.
Thus, with \eqref{ineq:eigen_k} and \eqref{ineq:eigen}, the second statement holds.

Finally, for the third statement, if $r_k(\Sigma)\ge b n$,
\begin{align*}
  \sum_{j>k}\lambda_j + \lambda_{k+1}n
    & = \lambda_{k+1} r_k(\Sigma) + \lambda_{k+1}n \le \left(1+\frac{1}{b}\right)\lambda_{k+1} r_k(\Sigma), \\
  \frac{1}{c_1}\sum_{j>k}\lambda_j - c_1\lambda_{k+1}n
    & = \frac{1}{c_1}\lambda_{k+1} r_k(\Sigma) - c_1\lambda_{k+1}n \ge \left(\frac{1}{c_1}-\frac{c_1}{b}\right)
      \lambda_{k+1} r_k(\Sigma).
\end{align*}
Choosing $b > c_1^2$ and $c > \max\{c_1+1/c_1,
(1/c_1-c_1/b)^{-1}\}$ and \eqref{ineq:eigen_k} give the
third claim of the lemma.
\end{proof}

\subsection{Upper Bound on \texorpdfstring{$\mathrm{tr}(C_a)$}{TEXT}}
\label{sec:upper_trace}
Lemma~\ref{lemma::eigvals_of_truncated} gives the following upper bound on $\mathrm{tr}(C_a)$, which corresponds to Lemma~6 in \citet{Bartlett2020}.
\begin{lemma}\label{lemma:traceupper}
  There are constants $b,c\ge 1$ such that 
  if $0\le k\le n/c$, $r_k(\Sigma)\ge bn$, and $m\le k$ 
then  with probability at least 
$1 - 7e^{-n/c}$,
    \begin{align*}
      \tr(C_a)
        &\le c \left(\frac{m}{n} + n \frac{\sum_{i>m}\lambda_i^2}
            {\left(\lambda_{k+1}r_k(\Sigma)\right)^2} \right).
    \end{align*}
\end{lemma}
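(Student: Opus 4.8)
The plan is to follow the proof of Lemma~6 in \citet{Bartlett2020} essentially verbatim, since the two structural ingredients it relies on have already been transported to our setting: the per-coordinate representation of $\tr(C_a)$ through the sub-Gaussian vectors $z_{a,i}$ (Lemma~\ref{lemma::representation_through_z}) and the eigenvalue concentration of the Gram operators $G_a$, $G_{a,-i}$, $G_{a,k}$ (Lemma~\ref{lemma::eigvals_of_truncated}). Starting from the identity
\[
\tr(C_a) = \sum_i \frac{\lambda_i^2 z_{a,i}^\top G_{a,-i}^{-2} z_{a,i}}{\left(1 + \lambda_i z_{a,i}^\top G_{a,-i}^{-1} z_{a,i}\right)^2},
\]
I would split the sum at the index $m$ and treat the two ranges by different mechanisms.

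For the tail indices $i > m$ I would discard the denominator via $(1 + \lambda_i z_{a,i}^\top G_{a,-i}^{-1} z_{a,i})^2 \ge 1$ and use $z_{a,i}^\top G_{a,-i}^{-2} z_{a,i} \le \|z_{a,i}\|^2 \mu_n(G_{a,-i})^{-2}$. Since $m \le k$ and $r_k(\Sigma) \ge bn$, the third part of Lemma~\ref{lemma::eigvals_of_truncated} gives $\mu_n(G_{a,-i}) \ge \mu_n(G_{a,k}) \ge \frac1c \lambda_{k+1} r_k(\Sigma)$, while the first inequality of Corollary~\ref{cor::cor norm of projection} gives $\|z_{a,i}\|^2 \le cn$ on a $1-e^{-n/c}$ event. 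Summing over $i > m$ then produces the term $cn\sum_{i>m}\lambda_i^2/(\lambda_{k+1} r_k(\Sigma))^2$.

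For the leading indices $i \le m \le k$ I would lower bound the denominator by $(\lambda_i z_{a,i}^\top G_{a,-i}^{-1} z_{a,i})^2$, reducing each summand to the scale-free ratio $z_{a,i}^\top G_{a,-i}^{-2} z_{a,i}/(z_{a,i}^\top G_{a,-i}^{-1} z_{a,i})^2$, and then show this ratio is $O(1/n)$, so that the whole range contributes $cm/n$. The numerator is at most $cn/(\lambda_{k+1}r_k(\Sigma))^2$ by the argument of the previous paragraph; the crux is the matching lower bound $z_{a,i}^\top G_{a,-i}^{-1} z_{a,i} \ge n/(c\lambda_{k+1}r_k(\Sigma))$. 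Here I would use parts~1 and~2 of Lemma~\ref{lemma::eigvals_of_truncated}: for $i \le k$ at least $n-k \ge n/2$ eigenvalues of $G_{a,-i}$ lie in $[\frac1c\lambda_{k+1}r_k(\Sigma), c\lambda_{k+1}r_k(\Sigma)]$, so $\tr(G_{a,-i}^{-1}) \ge (n-k)/(c\lambda_{k+1}r_k(\Sigma)) \ge n/(c'\lambda_{k+1}r_k(\Sigma))$; then, since $z_{a,i}$ is independent of $G_{a,-i}$ and sub-Gaussian, a concentration of the quadratic form around its mean $\tr(G_{a,-i}^{-1})$ (via Proposition~\ref{lemma::sub-Gaussian-quadratic-form} together with the projection lower bound in Corollary~\ref{cor::cor norm of projection}) transfers this trace lower bound to $z_{a,i}^\top G_{a,-i}^{-1} z_{a,i}$.

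The main obstacle is precisely this lower bound on $z_{a,i}^\top G_{a,-i}^{-1} z_{a,i}$ for the leading directions, both because it requires controlling the bulk of the spectrum of $G_{a,-i}$ rather than only its extreme eigenvalues, and because the $z_{a,i}$ in our setting are not centered sub-Gaussian but the mixtures described in Lemma~\ref{lemma::representation_through_z}; the refined Corollary~\ref{cor::cor norm of projection}, which permits a non-zero mean, is what makes the quadratic-form concentration go through for this mixture. Finally I would combine the two ranges, take a union bound over the $O(1)$ concentration events invoked (yielding the stated probability $1-7e^{-n/c}$), and rename constants to reach the claimed bound.
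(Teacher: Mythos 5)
Your split at $m$ and your treatment of the leading indices $i\le m$ follow the paper's argument: drop the $1$ in the denominator, cancel $\lambda_i^2$, bound the numerator by $\|z_{a,i}\|^2\mu_n(G_{a,-i})^{-2}$, and lower-bound the denominator through the bulk of the spectrum. One caveat there: Proposition~\ref{lemma::sub-Gaussian-quadratic-form} gives only an \emph{upper} tail for sub-Gaussian quadratic forms, so it cannot deliver the lower bound $z_{a,i}^\top G_{a,-i}^{-1}z_{a,i}\ge n/(c\lambda_{k+1}r_k(\Sigma))$ by ``concentration around the mean $\tr(G_{a,-i}^{-1})$.'' The mechanism that actually works (and that the paper uses) is the other tool you cite: restrict to the span $\mathcal{L}_i$ of the $n-k$ smallest eigenvectors of $G_{a,-i}$, where $\mu_{k+1}(G_{a,-i})\le c\lambda_{k+1}r_k(\Sigma)$ bounds the eigenvalues, and invoke the projection bound $\|\Pi_{\mathcal{L}_i}z_{a,i}\|^2\ge n/c$ from Corollary~\ref{cor::cor norm of projection}. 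The trace plays no role.

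The genuine gap is in your tail $i>m$. First, the inequality $\mu_n(G_{a,-i})\ge\mu_n(G_{a,k})$ that you invoke holds only for $i\le k$ (this is part~2 of Lemma~\ref{lemma::eigvals_of_truncated}): it rests on $G_{a,-i}-G_{a,k}=\sum_{j\le k,\,j\ne i}\lambda_j z_{a,j}z_{a,j}^\top\succeq 0$. For $i>k$ the difference contains the term $-\lambda_i z_{a,i}z_{a,i}^\top$, so $G_{a,-i}\succeq G_{a,k}$ fails and monotonicity gives you nothing. Second, and more fundamentally, your tail bound requires $\|z_{a,i}\|^2\le cn$ simultaneously for \emph{all} $i>m$; since $\Sigma$ may have infinitely many nonzero eigenvalues (the lemma is stated for operators on $\bbH$), this is a union bound over infinitely many events, each failing with probability $e^{-n/c}$, and the claimed probability $1-7e^{-n/c}$ cannot survive (even in finite dimension $p$ the failure probability would scale like $p\,e^{-n/c}$). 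The paper avoids both problems by never introducing $G_{a,-i}$ in the tail: it keeps the raw form $\sum_{i>m}\lambda_i^2 z_{a,i}^\top G_a^{-2}z_{a,i}$, uses the single event $\mu_n(G_a)\ge\lambda_{k+1}r_k(\Sigma)/c$ from Lemma~\ref{lemma::eigvals_of_truncated}, and then controls the weighted series $\sum_{i>m}\lambda_i^2\|z_{a,i}\|^2\le c\,n\sum_{i>m}\lambda_i^2$ by \emph{one} application of Bernstein's inequality (Proposition~\ref{cor:Bernstein}), where the summable weights $\lambda_i^2$ enter the concentration directly and no per-index event is needed. Your proof needs this restructuring of the tail to be correct; as written, the step ``summing over $i>m$ then produces the term'' does not go through.
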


\begin{proof}
Fix $b$ as in Lemma~\ref{lemma::eigvals_of_truncated}. By Lemma~\ref{lemma::representation_through_z}, 
\begin{align}
\mathrm{tr}(C_a) &= \sum_i \lambda^2_i z^\top_{a,i} G^{-2}_a z_{a,i}\nonumber\\
\label{eq:bound}
&=\sum^m_{i=1}\frac{\lambda_{i}^2 z_{a,i}^\top  G_{a,-i}^{-2}z_{a,i}}{(1 + \lambda_{i} z_{a,i}^\top G_{a,-i}^{-1}z_{a,i})^2} + \sum_{i > m}\lambda^2_i z^\top_i G^{-2}_a z_{a,i}.
\end{align}
First, we consider bounding the first term: the sum up to $m$.
If $r_k(\Sigma)\ge bn$, Lemma~\ref{lemma::eigvals_of_truncated} shows that with probability at least 
$1-2e^{-n/{c_1}}$, for all $i\le k$, we have an upper bound on $\mu_{n}(G_{a,-i})$:
\[\mu_{n}(G_{a,-i})\ge \lambda_{k+1}r_k(\Sigma)/c_1;\]
for all $i$, we have a lower bound on on $\mu_{n}(G_{a,-i})$:
\[\mu_{k+1}(G_{a,-i})\le c_1\lambda_{k+1}r_k(\Sigma).\]
The lower bound on $\mu_{n}(G_{a,-i})$ implies that for all $z \in \R^n$ and $1\le i \leq m$,
\begin{align*}
z^\top G_{a,-i}^{-2} z
  &\leq
  \frac{c_1^2\|z\|^2}{\left(\lambda_{k+1}r_k(\Sigma)\right)^2},
\end{align*}
and the upper bound on $\mu_{k+1}(G_{a,-i})$ gives
\begin{align*}
z^\top G_{a,-i}^{-1} z
  &\geq \left(\Pi_{\mathcal{L}_i} z\right)^\top
    G_{a,-i}^{-1} \Pi_{\mathcal{L}_i} z 
  \geq\frac{\|\Pi_{\mathcal{L}_i} z\|^2}{c_1\lambda_{k+1}r_k(\Sigma)}, &
\end{align*}
where 
$\mathcal{L}_i$ is the span of the
$n-k$ eigenvectors of $G_{a,-i}$ corresponding to the smallest $n-k$
eigenvalues. Recall that $\Pi_{\mathcal{L}}$ is the orthogonal projection on $\mathcal{L}$.
Then, for $i\le m$,
\begin{align}
\label{eq::bound_on_Cterm}
\frac{\lambda_i^2 z_{a,i}^\top  G_{a,-i}^{-2}z_{a,i}}{(1 + \lambda_i z_i^\top G_{a,-i}^{-1}z_{a,i})^2}
  & \leq \frac{z_{a,i}^\top  G_{a,-i}^{-2}z_{a,i}}{(z_{a,i}^\top G_{a,-i}^{-1}z_{a,i})^2}
  \leq c_1^4\frac{\|z_{a,i}\|^2}{\|\Pi_{\mathcal{L}_i}z_{a,i}\|^4}.
\end{align}
Next, we apply Corollary~\ref{cor::cor norm of projection} $m$ times, together with a union bound,
to show that with probability at least $1-3\exp(-t)$, for all $1\le i\le m$ and constants $u_1, u_2$,
\begin{align}
\|z_{a,i}\|^2
  & \leq n + u_1(t + \ln k + \sqrt{n(t + \ln k)})\le c_2n, \label{eqn:zinequality1} \\
\|\Pi_{\mathcal{L}_i}z_{a,i}\|^2
  & \geq n-u_2(k + t + \log k + \sqrt{n(t + \ln k)}) \ge n/c_3,        \label{eqn:zinequality2}
\end{align}
provided that  $t < n/c_0$ and $c > c_0$ for some sufficiently
large $c_0$ (note that $c_2$ and $c_3$ only depend on $c_0$, $a$
and $\sigma_x$, and we can still take $c$ to be sufficiently large at the end
without changing $c_2$ and $c_3$).
Combining~\eqref{eq::bound_on_Cterm},~\eqref{eqn:zinequality1},
and~\eqref{eqn:zinequality2}, with probability at least
$1-5e^{-n/c_0}$,
\begin{align*}
\frac{\lambda_i^2 z_{a,i}^\top
    G_{a,-i}^{-2}z_{a,i}}{(1 + \lambda_i z_{a,i}^\top G_{a,-i}^{-1}z_{a,i})^2}
  \leq \frac{c_4}{n}.
\end{align*}
Then, we have
\begin{align*}
\sum_{i=1}^m\frac{\lambda_i^2 z_{a,i}^\top
    G_{a,-i}^{-2}z_{a,i}}{(1 + \lambda_i z_{a,i}^\top G_{a,-i}^{-1}z_{a,i})^2}
  \leq c_4\frac{m}{n}.
\end{align*}
 
Second, consider bounding the second sum in~\eqref{eq:bound}; that is,
\begin{align*}
    \sum_{i > m}\lambda^2_i z^\top_{a,i} G^{-2}_a z_{a,i}
\end{align*}
Lemma~\ref{lemma::eigvals_of_truncated} shows that, on the same high
probability event that we consider in bounding the first half of the sum,
$\mu_n(G_a)\ge\lambda_{k+1}r_k(\Sigma)/c_1$. Hence,
  \begin{align*}
    \sum_{i>m}\lambda_i^2 z_{a,i}^\top G^{-2}_az_{a,i}
      &\le \frac{c_1^2\sum_{i>m}\lambda_i^2\|z_{a,i}\|^2}
        {\left(\lambda_{k+1}r_k(\Sigma)\right)^2}.
  \end{align*}
Note that $\sum_{i>m}\lambda_i^2\|z_{a,i} - \mathbb{E}[z_{a,i}]\|^2$ is the weighted sum of the centered sub-exponential random
variables, with the weights given by $\lambda_i^2$ in blocks of size $n$. Then, using Proposition~\ref{cor:Bernstein}, we can bound $\sum_{i>m}\lambda_i^2\|z_{a,i} - \mathbb{E}[z_{a,i}]\|^2$. Thus, Proposition~\ref{cor:Bernstein} implies that with probability at least $1-2\exp(-t)$, for some constants $c_3, c_4, c_5$,
\begin{align*}
\sum_{i>m}\lambda_i^2\|z_{a,i}\|^2 & = \sum_{i>m}\lambda_i^2\|z_{a,i}- \mathbb{E}[z_{a,i}]+ \mathbb{E}[z_{a,i}]\|^2\\
& \le 2\sum_{i>m}\lambda_i^2\|z_{a,i}- \mathbb{E}[z_{a,i}]\|^2 + 2\sum_{i>m}\lambda_i^2\|\mathbb{E}[z_{a,i}]\|^2\\
&\le c_3n\sum_{i>m}\lambda_i^2
+ c_4\max\left(\lambda_{m+1}^2t,
\sqrt{tn\sum_{i>m}\lambda_i^4}\right) \\
& \le c_3n\sum_{i>m}\lambda_i^2
+ c_4\max\left(t\sum_{i>m}\lambda_i^2,
\sqrt{tn}\sum_{i>m}\lambda_i^2\right) \\
\label{eq:lemma:res2}
& \le c_5n\sum_{i>m}\lambda_i^2,
\end{align*}
because $t < n/c_0$ and $\mathbb{E}[z_{a,i}]$ is a constant $n$-dimensional vector. Combining the above gives
  \begin{align*}
    \sum_{i>m}\lambda_i^2 z_{a,i}^\top G^{-2}_az_{a,i}
      &\le c_{6}n\frac{\sum_{i>m}\lambda_i^2}{\left(\lambda_{k+1}r_k(\Sigma)\right)^2}.
  \end{align*}
 Finally, putting both parts together and taking $c > \max\{c_0,
 c_4, c_6\}$ gives the lemma.
\end{proof}

\subsection{Lower Bound on \texorpdfstring{$\mathrm{tr}(C_a)$}{TEXT}}
\label{sec:lower_trace}
Next, we derive the lower bound on $\mathrm{tr}(C_a)$. We restate Lemma~8 and 9 in \citet{Bartlett2020} as follows:
\begin{lemma}\label{lemma:singletermlower}
  There is a constant $c$ such that for any  $i\ge 1$ with $\lambda_i > 0$, and any $0\le k\le n/c$, 
 with probability  at least 
$1-5e^{-n/c}$,
    \begin{align*}
      \frac{\lambda_i^2 z_i^\top G_{a,-i}^{-2}z_i}
          {(1 + \lambda_i z_i^\top G_{a,-i}^{-1}z_i)^2}  \ge 
   \frac{1}{cn} \left(1+\frac{\sum_{j >k} \lambda_j
          + n\lambda_{k+1}}{n\lambda_i}\right)^{-2}.
    \end{align*}
\end{lemma}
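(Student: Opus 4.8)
The plan is to follow the argument of Lemma~8 in \citet{Bartlett2020}, adapting the norm concentration to the truncated vectors $z_{a,i}$ that carry the treatment indicators. Write $A = G_{a,-i}$, set $s = \lambda_i z_{a,i}^\top A^{-1} z_{a,i}$, and abbreviate $S = \sum_{j>k}\lambda_j + n\lambda_{k+1}$. Diagonalizing $A$ and applying Cauchy--Schwarz coordinatewise in its eigenbasis gives $z_{a,i}^\top A^{-2} z_{a,i} \ge (z_{a,i}^\top A^{-1} z_{a,i})^2/\|z_{a,i}\|^2$, so the quantity to be bounded satisfies
\[
\frac{\lambda_i^2 z_{a,i}^\top A^{-2} z_{a,i}}{(1+\lambda_i z_{a,i}^\top A^{-1}z_{a,i})^2} \ge \frac{1}{\|z_{a,i}\|^2}\left(\frac{s}{1+s}\right)^2 = \frac{1}{\|z_{a,i}\|^2}\left(1 + \frac1s\right)^{-2}.
\]
Since $x \mapsto (1+x)^{-2}$ is decreasing on $x>0$, it now suffices to produce a lower bound on $s$ (equivalently an upper bound on $1/s$) together with two-sided control of $\|z_{a,i}\|^2$.

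To lower bound $s$, let $\mathcal{L}_i$ be the span of the $n-k$ eigenvectors of $A$ associated with its smallest $n-k$ eigenvalues; dropping the top-$k$ directions only decreases the nonnegative quadratic form, so $z_{a,i}^\top A^{-1} z_{a,i} \ge \|\Pi_{\mathcal{L}_i} z_{a,i}\|^2/\mu_{k+1}(A)$. Lemma~\ref{lemma::eigvals_of_truncated}(1) bounds $\mu_{k+1}(G_{a,-i}) \le c S$ on an event of probability at least $1-2e^{-n/c}$. For the projected norm, recall from Lemma~\ref{lemma::representation_through_z} that $z_{a,i}$ is independent of $G_{a,-i}$, hence of $\mathcal{L}_i$, which has codimension $k$; thus Corollary~\ref{cor::cor norm of projection} applies and yields $\|\Pi_{\mathcal{L}_i} z_{a,i}\|^2 \ge n/c$ and $\|z_{a,i}\|^2 \le cn$ on an event of probability at least $1-3e^{-n/c}$ (taking $t = n/c$). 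Combining these gives $s \ge \lambda_i n/(c^2 S)$, so $1/s \le c^2 S/(\lambda_i n)$, and since $c^2 \ge 1$ we absorb the constant via $(1 + c^2 x)^{-2} \ge c^{-4}(1+x)^{-2}$ to obtain $(1 + 1/s)^{-2} \ge c^{-4}\bigl(1 + S/(\lambda_i n)\bigr)^{-2}$.

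Substituting the upper bound $\|z_{a,i}\|^2 \le cn$ into the first display and relabelling constants then produces exactly
\[
\frac{\lambda_i^2 z_{a,i}^\top G_{a,-i}^{-2} z_{a,i}}{(1+\lambda_i z_{a,i}^\top G_{a,-i}^{-1}z_{a,i})^2} \ge \frac{1}{cn}\left(1 + \frac{\sum_{j>k}\lambda_j + n\lambda_{k+1}}{n\lambda_i}\right)^{-2},
\]
and a union bound over the two high-probability events above gives the claimed $1-5e^{-n/c}$. The main obstacle is the legitimacy of applying Corollary~\ref{cor::cor norm of projection} to $z_{a,i}$: unlike in \citet{Bartlett2020}, its coordinates $\mathbbm{1}[d_\ell=a]\,x_\ell^\top v_i/\sqrt{\lambda_i}$ need not be mean zero (under selection bias $\E[p(d=a\mid x)x]\neq 0$) and do not have unit variance. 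The overlap Assumption~\ref{asmp:coherent}, $\varphi < p(d=a\mid x) < 1-\varphi$, forces each coordinate variance into a fixed interval bounded away from $0$ and $\infty$, so that $\E\|z_{a,i}\|^2 = \Theta(n)$, while the sub-Gaussianity of $x_\ell^\top v_i$ supplies the sub-Gaussian coordinate hypothesis; the refinement embodied in Corollary~\ref{cor::cor norm of projection} absorbs the nonzero mean, and the $\Theta(1)$ variance scale is swallowed by $c$. Verifying these scalings carefully, rather than the algebra of the ratio, is where the real work lies.
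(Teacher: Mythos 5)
Your proposal is correct and follows essentially the same route as the paper, whose entire proof of this lemma is a one-line pointer: reproduce the argument of Lemma~8 in \citet{Bartlett2020} (Cauchy--Schwarz on the ratio, the projection bound $z_{a,i}^\top G_{a,-i}^{-1}z_{a,i}\ge\|\Pi_{\mathcal{L}_i}z_{a,i}\|^2/\mu_{k+1}(G_{a,-i})$, the eigenvalue control from Lemma~\ref{lemma::eigvals_of_truncated}) with Bartlett's Corollary~1 replaced by Corollary~\ref{cor::cor norm of projection} to handle the non-centered, non-unit-variance coordinates of $z_{a,i}$. In fact you supply considerably more detail than the paper does, including the correct accounting of the $1-5e^{-n/c}$ probability via the union bound and an explicit acknowledgment of the mean/variance issue that the corollary substitution is meant to absorb.
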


\begin{lemma}\label{lemma::sum_of_pos}
  Suppose $n\leq \infty$ and $\{\eta_i\}_{i = 1}^n$ is a sequence
  of non-negative random variables, $\{t_i\}_{i=1}^n$ is a sequence
  of non-negative real numbers (at least one of which is strictly
  positive) such that for some $\delta \in (0,1)$ and any
  $i \leq n$, $\Pr(\eta_i > t_i) \geq 1 - \delta$. Then
    \[
      \Pr\left(\sum_{i=1}^n\eta_i \geq \frac12\sum_{i=1}^n t_i\right)
        \geq 1 - 2\delta.
    \]
\end{lemma}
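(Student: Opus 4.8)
The plan is to run a first-moment (Markov) argument on the total weight carried by the ``failed'' indices, namely those $i$ with $\eta_i \le t_i$. First I would introduce the failure indicators $\xi_i = \mathbbm{1}\{\eta_i \le t_i\}$, so that the hypothesis reads $\E[\xi_i] = \Pr(\eta_i \le t_i) \le \delta$ for every $i$. Writing $S = \sum_{i} t_i \in (0,\infty]$ (which is positive since at least one $t_i>0$) and $Z = \sum_i t_i \xi_i$, Tonelli's theorem for the non-negative summands gives $\E[Z] = \sum_i t_i \E[\xi_i] \le \delta S$.

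Next, assuming $S<\infty$, Markov's inequality applied to the non-negative variable $Z$ yields $\Pr(Z \ge \tfrac12 S) \le \E[Z]/(\tfrac12 S) \le 2\delta$. The key pointwise observation is that $\eta_i \ge t_i$ on $\{\eta_i > t_i\}$ while $\eta_i \ge 0$ always, so that $\eta_i \ge t_i(1-\xi_i)$ holds surely; summing over $i$ gives $\sum_i \eta_i \ge S - Z$. Hence on the event $\{Z < \tfrac12 S\}$, which has probability at least $1-2\delta$, we obtain $\sum_i \eta_i \ge S - \tfrac12 S = \tfrac12 S$. This settles the claim whenever $S<\infty$.

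Then I would address the only genuinely delicate point, the case $n=\infty$ with $S=\infty$, in which the Markov step above is vacuous. For each finite $m$ set $S_m = \sum_{i=1}^m t_i$ and apply the finite-case result to $\{\eta_i\}_{i\le m}$ to obtain events $F_m = \{\sum_{i=1}^m \eta_i \ge \tfrac12 S_m\}$ with $\Pr(F_m)\ge 1-2\delta$. Since $\sum_{i=1}^\infty \eta_i \ge \sum_{i=1}^m \eta_i$, on $\limsup_m F_m$ we have $\sum_i \eta_i \ge \tfrac12 S_m$ for infinitely many $m$, and because $S_m \uparrow \infty$ this forces $\sum_i \eta_i = \infty = \tfrac12 S$. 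Reverse Fatou for the probability measure $\Pr$ gives $\Pr(\limsup_m F_m) \ge \limsup_m \Pr(F_m) \ge 1-2\delta$, completing this case. The main obstacle is precisely this infinite-weight passage to the limit; the finite case is a one-line Markov estimate, and the only care needed is the surely-valid bound $\eta_i \ge t_i(1-\xi_i)$ together with the reverse-Fatou argument.
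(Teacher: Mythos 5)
Your proof is correct and takes essentially the same route as the paper's, which disposes of this lemma by deferring to Lemma~9 of \citet{Bartlett2020}: a first-moment/Markov bound on the failed weight $\sum_i t_i \mathbbm{1}\{\eta_i \le t_i\}$ combined with the pointwise inequality $\eta_i \ge t_i \mathbbm{1}\{\eta_i > t_i\}$. Your reverse-Fatou handling of the case $\sum_i t_i = \infty$ is an added touch of rigor that the cited argument leaves implicit.
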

We can show these lemmas as well as those in \citet{Bartlett2020} by replacing Corollary~1 in \citet{Bartlett2020} with Corollary~\ref{cor::cor norm of projection}.

From Lemmas~\ref{lemma::representation_through_z}, \ref{lemma:singletermlower} and~\ref{lemma::sum_of_pos}, we can obtain the following lemma corresponding to Lemma~10 in \citet{Bartlett2020}. 
\begin{lemma}\label{lemma:lowerbound}
  There are constants $c$ such that for any $0\le k\le n/c$ and any $ b > 1$
  with probability at least 
$1-10e^{-n/c}$,
  \begin{enumerate}
    \item If $r_k(\Sigma)<bn$, then $\tr(C_a)\ge\frac{k+1}{c b^2n}$.
    \item If $r_k(\Sigma)\ge bn$, then
        \[
          \tr(C_a)\ge \frac{1}{cb^2} \min_{m\le k}\left(
          \frac{m}{n} + \frac{b^2n \sum_{i>m} \lambda_i^2}
          {\left(\lambda_{k+1}r_k(\Sigma)\right)^2}\right).
        \]
  \end{enumerate}
  In particular, if all choices of $k\le n/c$ give $r_k(\Sigma)<bn$, then
  $r_{n/c}(\Sigma)<bn$ implies that with probability at least 
$1-10e^{-n/c}$,
  $\tr(C_a)= \Omega_{\sigma_x}( 1 )$.
\end{lemma}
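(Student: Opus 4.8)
The plan is to glue together the three preceding lemmas into a single high-probability, deterministic lower bound. First I would apply Lemma~\ref{lemma::representation_through_z} to write $\tr(C_a)=\sum_i\eta_i$ with non-negative summands $\eta_i=\frac{\lambda_i^2 z_{a,i}^\top G_{a,-i}^{-2}z_{a,i}}{(1+\lambda_i z_{a,i}^\top G_{a,-i}^{-1}z_{a,i})^2}$. For the fixed $k$ in the statement, Lemma~\ref{lemma:singletermlower} gives $\eta_i\ge t_i:=\frac{1}{cn}\left(1+\frac{\rho}{n\lambda_i}\right)^{-2}$ with probability at least $1-5e^{-n/c}$, where I abbreviate $\rho:=\sum_{j>k}\lambda_j+n\lambda_{k+1}=\lambda_{k+1}(r_k(\Sigma)+n)$. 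Lemma~\ref{lemma::sum_of_pos}, applied with $\delta=5e^{-n/c}$, then yields $\tr(C_a)\ge\tfrac12\sum_i t_i$ with probability at least $1-10e^{-n/c}$. This is the crucial step: it controls the (possibly infinite) sum with no union-bound blow-up, so all remaining work is deterministic and reduces to lower-bounding $\sum_i\left(1+\frac{\rho}{n\lambda_i}\right)^{-2}$.

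For the first claim assume $r_k(\Sigma)<bn$. Then $\rho<(b+1)n\lambda_{k+1}<2bn\lambda_{k+1}$ since $b>1$, so for every $i\le k+1$ (where $\lambda_i\ge\lambda_{k+1}$) one has $\rho/(n\lambda_i)<2b$ and hence $\left(1+\frac{\rho}{n\lambda_i}\right)^{-2}>(3b)^{-2}$. Retaining only the first $k+1$ summands gives $\sum_i t_i\ge\frac{k+1}{9cb^2n}$, and after the factor $\tfrac12$ this is the stated bound with a relabeled constant. The ``in particular'' assertion is the special case $k=\lfloor n/c\rfloor$: if $r_{\lfloor n/c\rfloor}(\Sigma)<bn$ then $\tr(C_a)\ge\frac{k+1}{cb^2n}>\frac{1}{c^2b^2}=\Omega_{\sigma_x}(1)$, because $b$ and $c$ depend only on $\sigma_x$.

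For the second claim assume $r_k(\Sigma)\ge bn$, which (since $b>1$ forces $n<r_k(\Sigma)$) gives $\lambda_{k+1}r_k(\Sigma)\le\rho\le2\lambda_{k+1}r_k(\Sigma)$. I would split the sum at $m_0:=|\{i:n\lambda_i\ge\rho\}|$, noting that $n\lambda_{k+1}<\rho$ forces every ``large'' index below $k+1$, so $m_0\le k$. For $i\le m_0$ each summand is at least $\tfrac14$, contributing a term of order $\frac{m_0}{cn}$ to $\tfrac12\sum_i t_i$; for $i>m_0$ we have $n\lambda_i<\rho$, so $\left(1+\frac{\rho}{n\lambda_i}\right)^{-2}\ge\frac{n^2\lambda_i^2}{4\rho^2}$, contributing a term of order $\frac{n\sum_{i>m_0}\lambda_i^2}{c\rho^2}\ge\frac{n\sum_{i>m_0}\lambda_i^2}{c(\lambda_{k+1}r_k(\Sigma))^2}$ after using $\rho\le2\lambda_{k+1}r_k(\Sigma)$. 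Summing the two pieces and enlarging the outer constant bounds $\tr(C_a)$ below by $\frac{1}{c'b^2}\left(\frac{m_0}{n}+\frac{b^2n\sum_{i>m_0}\lambda_i^2}{(\lambda_{k+1}r_k(\Sigma))^2}\right)$, which is at least $\frac{1}{c'b^2}\min_{m\le k}(\cdots)$ since $m_0$ is an admissible value of $m$. The main obstacle is precisely the bookkeeping of this last step: one must check that the split index $m_0$ really satisfies $m_0\le k$, and that both contributions can be absorbed into the single prefactor $\frac{1}{c'b^2}$ while correctly tracking the two-sided comparison between $\rho$ and $\lambda_{k+1}r_k(\Sigma)$. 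Everything else is a transcription of the corresponding argument in \citet{Bartlett2020}, made routine because Lemmas~\ref{lemma::representation_through_z}--\ref{lemma::sum_of_pos} have already absorbed the sub-Gaussian and selection-dependent features specific to this setting.
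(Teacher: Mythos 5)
Your proposal is correct and follows essentially the same route as the paper: the paper's proof simply defers to Lemma~10 of \citet{Bartlett2020}, replacing its Lemmas~3, 8, and 9 with Lemmas~\ref{lemma::representation_through_z}, \ref{lemma:singletermlower}, and~\ref{lemma::sum_of_pos}, which is precisely the argument you carry out explicitly (including the correct probability bookkeeping $\delta = 5e^{-n/c} \mapsto 1-10e^{-n/c}$, the split at $m_0 \le k$, and the two-sided comparison $\lambda_{k+1}r_k(\Sigma) \le \rho \le 2\lambda_{k+1}r_k(\Sigma)$).
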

We can show the lemma as well as Lemma~10 in \citet{Bartlett2020} by replacing Lemmas~3, 8, and 9 in \citet{Bartlett2020} with Lemmas~\ref{lemma::representation_through_z}, \ref{lemma:singletermlower} and~\ref{lemma::sum_of_pos}.

\subsection{Upper Bounds regarding  \texorpdfstring{$\left\|\theta^{*\top}_1\Sigma X^\top_0(X_0X^\top_0)^{-1}\right\|_2$}{TEXT} and \texorpdfstring{$\left\|\theta^{*\top}_0\Sigma X^\top_1(X_1X^\top_1)^{-1}\right\|_2$}{TEXT}}
\label{sec:d_e}
Next, we show the upper bounds of $\left\|\theta^{*\top}_1\Sigma X^\top_0(X_0X^\top_0)^{-1}\right\|_2$ and $\left\|\theta^{*\top}_0\Sigma X^\top_1(X_1X^\top_1)^{-1}\right\|_2$. The proof uses the result presented in Section~\ref{sec:upper_trace}.
\begin{lemma}\label{lemma:l2upper}
For each $a\in\{1,0\}$, there are constants $b,c\ge 1$ such that 
  if $0\le k\le n/c$, $r_k(\Sigma)\ge bn$, and $m\le k$ 
then  with probability at least 
$1 - 7e^{-n/c}$,
    \begin{align*}
      \left\|\theta^{*\top}_a\Sigma X^\top_{1-a}(X_{1-a}X^\top_{1-a})^{-1}\right\|_2
        &\leq  c\left\|\theta^{*}_a\right\|_2 \sqrt{\left(\frac{m}{n} + n \frac{\sum_{i>m}\lambda_i^2}
            {\left(\lambda_{k+1}r_k(\Sigma)\right)^2} \right)}.
    \end{align*}
\end{lemma}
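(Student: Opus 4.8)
The plan is to reduce the target quantity to the trace $\tr(C_{1-a})$, which is already controlled by Lemma~\ref{lemma:traceupper}, via a purely deterministic algebraic manipulation; all of the probability will then be inherited from that lemma. Writing $P_{1-a} := (X_{1-a}X^\top_{1-a})^{-1}X_{1-a}$ as in the definition preceding Lemma~\ref{lem:bv_simple}, we have $C_{1-a}=P_{1-a}\Sigma P_{1-a}^\top$, and the row vector inside the norm is exactly $\theta^{*\top}_a\Sigma P_{1-a}^\top$. Transposing, its squared Euclidean norm equals the scalar $\theta^{*\top}_a\Sigma P_{1-a}^\top P_{1-a}\Sigma\theta^*_a$, so it suffices to bound this quadratic form.

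The key step is to convert this quadratic form into a trace. Using the functional calculus to factor $\Sigma=\Sigma^{1/2}\Sigma^{1/2}$ and setting $g:=\Sigma^{1/2}\theta^*_a\in\bbH$ together with $N:=\Sigma^{1/2}P_{1-a}^\top P_{1-a}\Sigma^{1/2}$, I would note that $N\succeq 0$ (it is of the form $A^\top A$ with $A=P_{1-a}\Sigma^{1/2}$) and that $\theta^{*\top}_a\Sigma P_{1-a}^\top P_{1-a}\Sigma\theta^*_a = g^\top N g \le \|g\|^2\,\mu_1(N)\le \|g\|^2\,\tr(N)$, where the last inequality uses positive semidefiniteness. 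By cyclicity of the trace, $\tr(N)=\tr(\Sigma^{1/2}P_{1-a}^\top P_{1-a}\Sigma^{1/2})=\tr(P_{1-a}\Sigma P_{1-a}^\top)=\tr(C_{1-a})$, while $\|g\|^2=\theta^{*\top}_a\Sigma\theta^*_a\le\|\Sigma\|\,\|\theta^*_a\|^2$. Combining, we get deterministically $\big\|\theta^{*\top}_a\Sigma X^\top_{1-a}(X_{1-a}X^\top_{1-a})^{-1}\big\|_2^2 \le \|\Sigma\|\,\|\theta^*_a\|^2\,\tr(C_{1-a})$.

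Finally, I would invoke Lemma~\ref{lemma:traceupper}: under the stated hypotheses $0\le k\le n/c$, $r_k(\Sigma)\ge bn$, and $m\le k$, with probability at least $1-7e^{-n/c}$ one has $\tr(C_{1-a})\le c\big(\tfrac{m}{n}+n\tfrac{\sum_{i>m}\lambda_i^2}{(\lambda_{k+1}r_k(\Sigma))^2}\big)$. Substituting this bound, taking square roots, and absorbing the factor $\sqrt{\|\Sigma\|}$ into the constant (equivalently, using the normalization $\|\Sigma\|=1$ adopted in the benign-overfitting discussion) yields the claim with the same probability $1-7e^{-n/c}$, since the reduction above introduces no additional randomness.

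As for the main obstacle: there is essentially no hard analytic estimate here, because all the heavy lifting resides in Lemma~\ref{lemma:traceupper}. The only points requiring care are the reduction identity $\tr(N)=\tr(C_{1-a})$ and the bound $\mu_1(N)\le\tr(N)$: I must justify the $\Sigma^{1/2}$ factorization and the cyclicity of the trace in the possibly infinite-dimensional Hilbert space $\bbH$, which is legitimate because $P_{1-a}$ and $P_{1-a}^\top$ have rank at most $n$, so that $N$ and $C_{1-a}$ are finite-rank (hence trace-class) operators sharing a common nonzero spectrum. I would also verify carefully that the transposition correctly identifies the $\ell_2$ norm of the row vector with $\|P_{1-a}\Sigma\theta^*_a\|_2$.
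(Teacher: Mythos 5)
Your proposal is correct, and it shares the paper's overall skeleton -- bound the squared Euclidean norm by $\|\theta^*_a\|^2$ times a trace, then control that trace on the high-probability event of Lemma~\ref{lemma:traceupper} -- but the reduction step is genuinely different and cleaner. The paper bounds the quadratic form by $\|\theta^*_1\|_2^2\,\tr\bigl(\Sigma X_0^\top(X_0X_0^\top)^{-2}X_0\Sigma\bigr)$, a trace that involves $\Sigma^2$ rather than $\Sigma$: in the spectral decomposition this produces $\sum_i\lambda_i^3 z_{0,i}^\top G_0^{-2}z_{0,i}$ instead of $\sum_i\lambda_i^2 z_{0,i}^\top G_0^{-2}z_{0,i}=\tr(C_0)$, so the paper must invoke $\lambda_i^3\le c\lambda_i^2$ (bounded top eigenvalue) and then \emph{re-run} the argument of Lemma~\ref{lemma:traceupper} on the modified sum. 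Your $\Sigma^{1/2}$-splitting avoids this: by placing one factor $\Sigma^{1/2}$ with $\theta^*_a$ (costing only $\|\Sigma^{1/2}\theta^*_a\|^2\le\|\Sigma\|\,\|\theta^*_a\|^2$) and using cyclicity of the trace on the finite-rank operator $N=\Sigma^{1/2}P_{1-a}^\top P_{1-a}\Sigma^{1/2}$, you land exactly on $\tr(C_{1-a})$, so Lemma~\ref{lemma:traceupper} applies as a black box with its stated probability $1-7e^{-n/c}$ and no duplication of its proof. Both routes consume the same hidden assumption -- that $\|\Sigma\|$ is bounded by a constant (the paper uses it as ``$\max_i \lambda_i \le c$,'' you as the absorbed factor $\sqrt{\|\Sigma\|}$) -- so neither is more general, but your version is more modular and also sidesteps the Sherman--Morrison--Woodbury computation in the paper's proof, which is carried out there but never actually used in the final chain of inequalities.
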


\begin{proof}
We consider a case with $a=1$. We can also show another case with $a=0$. 

By applying a step similar to that used in Lemma~\ref{lemma::representation_through_z}, we decompose the target value as
\begin{align*}
&\left\|\theta^{*\top}_1\Sigma X^\top_0(X_0X^\top_0)^{-1}\right\|_2\\
&\le \left\|\theta^{*\top}_1\sum_i \lambda_i\sqrt{\lambda_i}v_iz^\top_{0,i}\left( \sum_j\lambda_jz_{0,j}z^\top_{0,j}\right)^{-1}\right\|_2\\
&= \left\|\theta^{*\top}_1\sum_i \lambda_i\sqrt{\lambda_i}v_iz^\top_{0,i}\left( \lambda_jz_{0,j}z^\top_{0,j} + G_{0,-j}\right)^{-1}\right\|_2.
\end{align*}
The Sherman--Morrison--Woodbury formula gives
\begin{align*}
&\left( \lambda_iz_{0,i}z^\top_{0,i} + G_{0,-i}\right)^{-1} =  G^{-1}_{0,-i} - G^{-1}_{0,-i}\sqrt{\lambda_i}z_{0,i}\left(1 + \lambda_iz^\top_{0,i}G^{-1}_{0, -i}z_{0,i}\right)^{-1}z^\top_{0,i}\sqrt{\lambda_i}G^{-1}_{0,-i}.
\end{align*}
Therefore, 
\begin{align*}
&\sqrt{\lambda_i}v_iz^\top_{0,i}\left( \lambda_iz_{0,i}z^\top_{0,i} + G_{0,-i}\right)^{-1}\\
&=  \sqrt{\lambda_i}v_iz^\top_{0,i}\left(G^{-1}_{0,-i} - G^{-1}_{0,-i}\sqrt{\lambda_i}z_{0,i}\left(1 + \lambda_iz^\top_{0,i}G^{-1}_{0, -i}z_{0,i}\right)^{-1}z^\top_{0,i}\sqrt{\lambda_i}G^{-1}_{0,-i}\right)\\
&= v_i\left(z^\top_{0,i}\sqrt{\lambda_i}G^{-1}_{0,-i} -  z^\top_{0,i}G^{-1}_{0,-i}\lambda_iz_{0,i}\left(1 + \lambda_iz^\top_{0,i}G^{-1}_{0, -i}z_{0,i}\right)^{-1}z^\top_{0,i}\sqrt{\lambda_i}G^{-1}_{0,-i}\right)\\
&=  v_i\left(1 - z^\top_{0,i}G^{-1}_{0,-i}\lambda_iz_{0,i}\left(1 + \lambda_iz^\top_{0,i}G^{-1}_{0, -i}z_{0,i}\right)^{-1}\right)z^\top_{0,i}\sqrt{\lambda_i}G^{-1}_{0,-i}\\
&=  v_i\left(1 + \lambda_iz^\top_{0,i}G^{-1}_{0, -i}z_{0,i}\right)^{-1}z^\top_{0,i}\sqrt{\lambda_i}G^{-1}_{0,-i}.
\end{align*}
In this study, we use $1 - z^\top_{0,i}G^{-1}_{0,-i}\lambda_iz_{0,i}\left(1 + \lambda_iz^\top_{0,i}G^{-1}_{0, -i}z_{0,i}\right)^{-1} = \left(1 + \lambda_iz^\top_{0,i}G^{-1}_{0, -i}z_{0,i}\right)^{-1}$ because 
\begin{align*}
    &\left(1 + \lambda_iz^\top_{0,i}G^{-1}_{0, -i}z_{0,i}\right)\left(1 - z^\top_{0,i}G^{-1}_{0,-i}\lambda_iz_{0,i}\left(1 + \lambda_iz^\top_{0,i}G^{-1}_{0, -i}z_{0,i}\right)^{-1}\right)\\
    &= \left(1 + \lambda_iz^\top_{0,i}G^{-1}_{0, -i}z_{0,i}\right) - \left(1 + \lambda_iz^\top_{0,i}G^{-1}_{0, -i}z_{0,i}\right)\left(z^\top_{0,i}G^{-1}_{0,-i}\lambda_iz_{0,i}\left(1 + \lambda_iz^\top_{0,i}G^{-1}_{0, -i}z_{0,i}\right)^{-1}\right)\\
    & = \left(1 + \lambda_iz^\top_{0,i}G^{-1}_{0, -i}z_{0,i}\right) - z^\top_{0,i}G^{-1}_{0,-i}\lambda_iz_{0,i} = 1
\end{align*}
Thus, we have
\begin{align*}
&\left\|\theta^{*\top}_1\Sigma X^\top_0(X_0X^\top_0)^{-1}\right\|_2\\
&=\sqrt{\theta^{*\top}_1\Sigma X^\top_0(X_0X^\top_0)^{-2}X_0\Sigma}\\
&\le \sqrt{\|\theta^*_1\|^2_2\left\|\Sigma X^\top_0(X_0X^\top_0)^{-2}X_0\Sigma\right\|}\\
&\le \sqrt{\|\theta^*_1\|^2_2\mathrm{tr}\left(\Sigma X^\top_0(X_0X^\top_0)^{-2}X_0\Sigma\right)}.
\end{align*}
Then, because $X_0X^\top_0 = \sum_i \lambda_i z_{0,i}z^\top_{0,i}$ and $X_0\Sigma = \sum_i\lambda_i\sqrt{\lambda_i}z_{0,i}v^\top_i$ from $\sqrt{\lambda_{i}}z_{0,i}=X_0v_{i}$ and $\Sigma = \sum_i\lambda_iv_i v^\top_i$, 
\begin{align*}
&\mathrm{tr}\left(\Sigma X^\top_0(X_0X^\top_0)^{-2}X_0\Sigma\right)\\
&=\mathrm{tr}\left(\left(\sum_j \lambda_j z_{0,j}z^\top_{0,j}\right)^{-2}\sum_i\lambda^3_iz_{0,i}v^\top_iv_iz^\top_{0,i}\right)\\
&=c\sum_i\mathrm{tr}\left(\left(\sum_j \lambda_i z_{0,j}z^\top_{0,j}\right)^{-2}\lambda^2_iz_{0,i}v^\top_iv_iz^\top_{0,i}\right)\\
&=c\sum_i\mathrm{tr}\left(\lambda^2_iz^\top_{0,i}\left(\sum_j \lambda_i z_{0,j}z^\top_{0,j}\right)^{-2}z_{0,i}\right)\\
&=c\sum_i\lambda^2_iz^\top_{0,i}\left(\sum_j \lambda_i z_{0,j}z^\top_{0,j}\right)^{-2}z_{0,i},
\end{align*}
where we use $v^\top_i v_i = 1$ and $\max lambda_i \leq c$ for some constant $c > 0$. 
Then, by applying the same step in the proof of Lemma~\ref{lemma:traceupper} to $\sum_i\lambda^2_iz^\top_{0,i}\left(\sum_j \lambda_i z_{0,j}z^\top_{0,j}\right)^{-2}z_{0,i}$,
\begin{align*}
&\left\|\theta^{*\top}_1\Sigma X^\top_0(X_0X^\top_0)^{-1}\right\|_2\\
&\le \sqrt{\|\theta^*_1\|^2_2\mathrm{tr}\left(\Sigma X^\top_0(X_0X^\top_0)^{-2}X_0\Sigma\right)}\\
&\le c\|\theta^*_1\|_2\sqrt{\frac{m}{n} + n \frac{\sum_{i>m}\lambda_i^2}
            {\left(\lambda_{k+1}r_k(\Sigma)\right)^2}}.
\end{align*}
This concludes the proof.
\end{proof}

\subsection{Final Step for Proof of the Upper Bound}
\label{sec:final}
To complete the proof of Theorem~\ref{thm:main}, we combine Lemmas~\ref{lemma:bias_1_0}--\ref{lemma:bestk} with Lemma~\ref{lemma:bv}.
We set $b$ in Lemmas~\ref{lemma:lowerbound}--\ref{lemma:l2upper} and Theorem~\ref{thm:main} to the constant $b$ from  Lemma~\ref{lemma:traceupper}. Let $c_1$ be the maximum of the constants $c$ from Lemmas~\ref{lemma:lowerbound} and~\ref{lemma:traceupper}. 

By using Lemma~\ref{lemma:lowerbound}, we consider the lower bound based on the value of $k$. If there is no $k\le n/c$ such that $r_k(\Sigma)\ge bn$, then Lemma~\ref{lemma:lowerbound} implies that $\tr(C_a)\ge\frac{k+1}{c b^2n}$. Then, by combining it with Lemmas~\ref{lemma:bv} and \ref{lem:bound_c1_c0}, we can obtain the lower bound of the expected excess risk as $\Omega(\sigma^2)$,
which proves the first lower bound of Theorem~\ref{thm:main} for large $k^*$: suppose $\delta<1$ with $\log(1/\delta)<n/c$.
 If $k^* \geq n/c_1$, then 
 \[\Expect R(\hat\theta)\ge \sigma^2/c.\] 
If there exist some $k\le n/c$ such that $r_k(\Sigma)\ge bn$, then from Lemmas~\ref{lemma:traceupper} and~\ref{lemma:lowerbound}, the upper and lower
bounds of Lemmas~\ref{lemma:traceupper} and~\ref{lemma:lowerbound} regarding the terms, including $C_1$ and $C_0$, are
constant multiples of
  \[
    \min_{m\le k} \left(
      \frac{m}{n} + n\frac{\sum_{i>m}\lambda_i^2}
      {\left(\lambda_{k+1}r_k(\Sigma)\right)^2}\right);
  \]
from Lemmas~\ref{lem:bound_d_e} and \ref{lemma:l2upper}, the upper and lower
bounds regarding the term, including $D$ and $E$, are also
constant multiples of
\[
    \min_{m\le k} \sqrt{
      \frac{m}{n} + n\frac{\sum_{i>m}\lambda_i^2}
      {\left(\lambda_{k+1}r_k(\Sigma)\right)^2}};
\]

Note that by Lemma~\ref{lemma::eigvals_of_truncated}, for any qualifying value of $k$, the smallest eigenvalue
of $G_a$ is within a constant factor of $\lambda_{k+1}r_k(\Sigma)$. Thus, any two choices of $k$ satisfying $k\le n/c$ and $r_k(\Sigma)\ge
bn$ must have values of $\lambda_{k+1}r_k(\Sigma)$ within constant factors.  The smallest such $k$ simplifies the bound on $\tr(C)$, as the following proposition in \citet{Bartlett2020} shows. 

\begin{proposition}
[Lemma~11 in \citet{Bartlett2020}]
\label{lemma:bestk}
For any $b\geq 1$ and $k^* := \min \{k: r_k(\Sigma)\geq bn\}$, if $k^* <\infty$, we have
\begin{align*}
\min_{m\leq k^*}\left(\frac{m}{bn} + \frac{bn\sum_{i > m} \lambda^2_u}{\big(\lambda_{k^*+1}r_{k^*}(\Sigma)\big)^2}\right) = \frac{k^*}{bn} + \frac{bn\sum_{i > k^*}\lambda^2_i}{\big(\lambda_{k^*+1}r_{k^*}(\Sigma)\big)^2} = \frac{k^*}{bn} + \frac{bn}{R_{k^*}(\Sigma)}.
\end{align*}
\end{proposition}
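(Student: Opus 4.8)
The plan is to prove the two asserted equalities in turn, treating the rightmost one as a definitional unpacking and the leftmost one (that the minimum over $m\le k^*$ is attained at $m=k^*$) as a discrete monotonicity argument. First I would establish the rightmost equality. By Definition~\ref{def:ranks} we have $r_{k^*}(\Sigma)=\sum_{i>k^*}\lambda_i/\lambda_{k^*+1}$, hence $\lambda_{k^*+1}r_{k^*}(\Sigma)=\sum_{i>k^*}\lambda_i$ and therefore $\big(\lambda_{k^*+1}r_{k^*}(\Sigma)\big)^2=\big(\sum_{i>k^*}\lambda_i\big)^2$. Substituting this into $\frac{bn\sum_{i>k^*}\lambda_i^2}{(\lambda_{k^*+1}r_{k^*}(\Sigma))^2}$ and invoking $R_{k^*}(\Sigma)=\big(\sum_{i>k^*}\lambda_i\big)^2/\sum_{i>k^*}\lambda_i^2$ yields $bn/R_{k^*}(\Sigma)$ at once.

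For the leftmost equality I would define, for integers $0\le m\le k^*$,
\[
  f(m)=\frac{m}{bn}+\frac{bn\sum_{i>m}\lambda_i^2}{\big(\lambda_{k^*+1}r_{k^*}(\Sigma)\big)^2},
\]
and argue that $f$ is non-increasing in $m$, so that $\min_{m\le k^*}f(m)=f(k^*)$. The key computation is the forward difference for $0\le m<k^*$: using $\sum_{i>m}\lambda_i^2-\sum_{i>m+1}\lambda_i^2=\lambda_{m+1}^2$ together with the identity $\lambda_{k^*+1}r_{k^*}(\Sigma)=\sum_{i>k^*}\lambda_i$ from the previous paragraph, I obtain
\[
  f(m)-f(m+1)=-\frac{1}{bn}+\frac{bn\,\lambda_{m+1}^2}{\big(\sum_{i>k^*}\lambda_i\big)^2}.
\]
Thus it suffices to show this is non-negative, which (taking positive square roots) is equivalent to $bn\,\lambda_{m+1}\ge\sum_{i>k^*}\lambda_i$.

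The crucial ingredient is the minimality of $k^*$. Since $k^*=\min\{k\ge 0:r_k(\Sigma)\ge bn\}$, every $m<k^*$ satisfies $r_m(\Sigma)<bn$, i.e. $\sum_{i>m}\lambda_i<bn\,\lambda_{m+1}$ (here $\lambda_{m+1}\ge\lambda_{k^*+1}>0$, so $r_m(\Sigma)$ is well defined). Because $m<k^*$ gives $\{i:i>k^*\}\subseteq\{i:i>m\}$ and all $\lambda_i\ge 0$, I get $\sum_{i>k^*}\lambda_i\le\sum_{i>m}\lambda_i<bn\,\lambda_{m+1}$, which is precisely the required inequality. Hence $f(m)\ge f(m+1)$ for every $0\le m<k^*$, so $f$ attains its minimum at $m=k^*$, establishing the first equality. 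I expect no genuine obstacle here; the entire argument is elementary, and the only points to watch are keeping the direction of the eigenvalue-tail inequality consistent and ensuring the defining property $r_m(\Sigma)<bn$ is invoked only for $m<k^*$, where it indeed holds.
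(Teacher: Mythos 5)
Your proof is correct. Note that the paper itself gives no proof of this proposition: it is imported verbatim as Lemma~11 of \citet{Bartlett2020}, so the only benchmark is the original argument there, and your proof is essentially that same argument --- unpack $\lambda_{k^*+1}r_{k^*}(\Sigma)=\sum_{i>k^*}\lambda_i$ to get the rightmost equality, then use the forward difference $f(m)-f(m+1)=-\tfrac{1}{bn}+\tfrac{bn\lambda_{m+1}^2}{(\sum_{i>k^*}\lambda_i)^2}$ together with the minimality of $k^*$ (which gives $\sum_{i>k^*}\lambda_i\le\sum_{i>m}\lambda_i<bn\,\lambda_{m+1}$ for $m<k^*$) to conclude that the minimum is attained at $m=k^*$. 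Your handling of the well-definedness point ($\lambda_{m+1}\ge\lambda_{k^*+1}>0$ for $m<k^*$) is also right, so there is nothing to fix.
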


By Proposition~\ref{lemma:bestk}, the lower bound is 
within a constant factor of $\frac{k^*}{n} + \frac{n}{R_{k^*}(\Sigma)}.$

Taking $c$ sufficiently large, and combining
these results with Lemma~\ref{lemma:bv} and the upper bound on the term ${\theta^\ast}^\top B \theta^\ast$ in Lemma~\ref{lemma:bias_10} completes the proofs of Theorems~\ref{thm:main} and \ref{thm:main2}.

\section{Proof of Lemma \ref{lem:equiv_risks}}

\begin{proof}

We show that $R({\theta}) - \tilde{R}({\theta}) = 0$ as
\begin{align*}
      &R(\hat{\theta}^{\mathrm{IPW\mathchar`-learner}}) - \tilde{R}\big(\hat{\theta}^{\mathrm{IPW\mathchar`-learner}}\big)\\
      &= \mathbb{E}_{x,y}\Big[\big(\tilde{y} - x^\top \theta \big)^2 - \big(\tilde{y} - x^\top \theta^* \big)^2\Big] - \mathbb{E}_{x,y}\Big[\big(\hat{y} - x^\top \theta \big)^2 - \big(\hat{y} - x^\top \theta^* \big)^2\Big]\\
      &= \mathbb{E}_{x,y}\Big[\big(\tilde{y} - \hat{y} + \hat{y} - x^\top \theta \big)^2 - \big(\tilde{y} - \hat{y} + \hat{y} - x^\top \theta^* \big)^2\Big] - \mathbb{E}_{x,y}\Big[\big(\hat{y} - x^\top \theta \big)^2 - \big(\hat{y} - x^\top \theta^* \big)^2\Big]\\
      &= \mathbb{E}_{x,y}\Big[\big(\tilde{y} - \hat{y} \big)^2 + 2\big(\tilde{y} - \hat{y} \big) \big(\hat{y} - x^\top \theta \big) - \big(\tilde{y} - \hat{y} \big)^2 - 2\big(\tilde{y} - \hat{y} \big) \big(\hat{y} - x^\top \theta^* \big)\Big]\\
      &= 2 \mathbb{E}_{x,y}\Big[ \big(\tilde{y} - \hat{y} \big) \big(x^\top \big(\theta^* - \theta\big) \big)\Big]\\
      &= 2 \mathbb{E}_{x}\Big[  \mathbb{E}_{y}\Big[ \tilde{y} - \hat{y} |x\Big]\big(x^\top \big(\theta^* - \theta\big) \big)\Big]\\
      &= 2 \mathbb{E}_{x}\Big[   \big(\tau^*(x) - \tau^*(x) \big) \big(x^\top \big(\theta^* - \theta\big) \big)\Big]\\
      &=0.
\end{align*}
Here, we used 
\begin{align*}
    \mathbb{E}\left[\tilde{y} | x\right] = \mathbb{E}\left[y_1 - y_0| x\right] = \tau^*(x).
\end{align*}
\end{proof}

\section{Proof of Theorem \ref{thm:main2}}
\begin{proof}

Because Statements~1', 3', and 4' in Lemma~\ref{lem:basic} correspond to Assumptions~1, 3, and 4 in Definition~1 in \citet{Bartlett2020}, by combining them with 2 and 5 in Assumptions~\ref{asmp:basic}, we can directly apply Theorem~1 of \citet{Bartlett2020} to obtain the following result.
\begin{corollary}[Excess risk upper bounds in the IPW-learner]
\label{cor:main}
For any $\sigma_x$ there are  $b,c, c_1>1$ for which the following holds. Consider a linear regression problem from
Section~\ref{sec:liner_regression_model} and suppose that Assumption \ref{asmp:unconfounded}, \ref{asmp:coherent}, and \ref{asmp:basic} hold.
Suppose $\delta<1$ with $\log(1/\delta)<n/c$.
If $k^* < n/c_1$, then the excess risk (Definition \ref{def:excess}) of the predictor in \eqref{def:predictor} satisfies
   \begin{align*}
   \tilde{R}\big(\hat{\theta}^{\mathrm{IPW\mathchar`-learner}}\big)
       & \le c\|\theta^*\|^2\mathcal{B}_{n,\delta}(\Sigma)+ c\log(1/\delta)\mathcal{V}_n(\Sigma).
   \end{align*}
 with probability at least $1-\delta$.
\end{corollary}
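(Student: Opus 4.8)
The plan is to reduce the CATE prediction problem for the IPW-learner to an \emph{ordinary} overparameterized linear regression problem on the transformed response $\hat{y}$, and then to invoke the benign-overfitting bound of \citet{Bartlett2020} as a black box. The key conceptual point is that although $\tilde{y} = y_1 - y_0$ is never observed, the constructed variable $\hat{y}_i$ is a fully observable, conditionally unbiased surrogate for $\tau^*(x_i)$ (Lemma~\ref{lem:unbiased_yhat}), and the minimum-norm interpolator $\hat{\theta}^{\mathrm{IPW\mathchar`-learner}} = X^\top(XX^\top)^{-1}\hat{\bs{y}}$ is precisely the estimator of \citet{Bartlett2020} applied to the regression of $\hat{\bs{y}}$ on the full design $X = X_1 + X_0$.

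First I would use Lemma~\ref{lem:equiv_risks} to replace the target risk $R$ by the surrogate risk $\tilde{R}$; this identity holds because the cross term $\Expect_x[\Expect_y[\tilde{y}-\hat{y}\mid x]\,x^\top(\theta^*-\theta)]$ vanishes by conditional unbiasedness, so it suffices to control $\tilde{R}(\hat{\theta}^{\mathrm{IPW\mathchar`-learner}})$. Next I would verify that the regression problem $(x,\hat{y})$ with parameter $\theta^* = \theta_1^* - \theta_0^*$ satisfies every hypothesis of Theorem~1 in \citet{Bartlett2020}. The covariate conditions — sub-Gaussianity $x=\Sigma^{1/2}z$ and the almost-sure full-rank/interpolation property — are inherited unchanged from Assumption~\ref{asmp:basic}, since they concern only $x$ and $X$. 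The \emph{noise} conditions are supplied by Lemma~\ref{lem:basic}: $\hat{y}$ is mean zero, its conditional noise variance is bounded below by $\tilde{\sigma}^2$, and $\hat{y}-x^\top\theta^*$ is conditionally $\tilde{\sigma}_y^2$-sub-Gaussian. These are exactly Assumptions~1, 3, and~4 of \citet[Definition~1]{Bartlett2020} with $(\sigma^2,\sigma_y^2)$ replaced by $(\tilde{\sigma}^2,\tilde{\sigma}_y^2)$.

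With all hypotheses in place, Theorem~1 of \citet{Bartlett2020} applies verbatim to $\hat{\theta}^{\mathrm{IPW\mathchar`-learner}}$, giving, on an event of probability at least $1-\delta$,
\[
  \tilde{R}(\hat{\theta}^{\mathrm{IPW\mathchar`-learner}}) \le c\|\theta^*\|^2\,\mathcal{B}_{n,\delta}(\Sigma) + c\log(1/\delta)\,\mathcal{V}_n(\Sigma),
\]
where the constants now absorb $\tilde{\sigma}$ and $\sigma_x$. Crucially, only the population covariance $\Sigma$ enters, never $\Sigma_1$ or $\Sigma_0$, because the regression is run on the full design $X = X_1+X_0$ whose associated covariance is exactly $\Sigma$. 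Combining this with the risk identity $R = \tilde{R}$ from Lemma~\ref{lem:equiv_risks} then yields the stated bound on $R(\hat{\theta}^{\mathrm{IPW\mathchar`-learner}})$.

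The hard part lies entirely in the preparatory Lemma~\ref{lem:basic}, not in the final assembly: establishing sub-Gaussianity and the lower variance bound for $\hat{y}$ is delicate precisely because $\hat{y}$ divides by the propensity score $p(d=a\mid x)$, which could in principle inflate the tails or the effective noise. This is where Assumption~\ref{asmp:coherent} (overlap, $\varphi < p(d=1\mid x) < 1-\varphi$) is indispensable, since the uniform bound $1/p(d=a\mid x) < \varphi^{-1}$ keeps the reweighted response sub-Gaussian with a finite constant and its conditional second moment bounded below. Once these moment properties are secured, the IPW-learner behaves like a well-specified ordinary linear regression, and — in sharp contrast to the T-learner of Theorem~\ref{thm:main} — no deviation term $\|\Sigma-\zeta^*\Sigma_a\|$ ever arises.
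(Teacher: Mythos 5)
Your proposal is correct and follows essentially the same route as the paper's own proof: both verify the hypotheses of Theorem~1 in \citet{Bartlett2020} by pairing the covariate conditions of Assumption~\ref{asmp:basic} with the noise conditions for $\hat{y}$ established in Lemma~\ref{lem:basic}, apply that theorem as a black box to bound $\tilde{R}\big(\hat{\theta}^{\mathrm{IPW\mathchar`-learner}}\big)$ in terms of $\mathcal{B}_{n,\delta}(\Sigma)$ and $\mathcal{V}_n(\Sigma)$ alone, and then use Lemma~\ref{lem:equiv_risks} to transfer the bound to $R$. Your closing observation that the real work sits in Lemma~\ref{lem:basic}, where the overlap condition of Assumption~\ref{asmp:coherent} controls the propensity-weighted tails, is also exactly where the paper places the burden.
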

\begin{proof}
In Definition~1 of \citet{Bartlett2020}, we replace Assumptions~1, 3, and 4 with Statements~1', 3', and 4' in Lemma~\ref{lem:basic}. Then, we can define a linear regression problem under these assumptions. Therefore, we can directly apply Theorem~1 in \citet{Bartlett2020} to obtain the upper bound of the new excess risk $\tilde{R}\big(\theta\big)$.
\end{proof}
By combining this corollary with Lemma \ref{lem:equiv_risks}, we obtain the statement.
\end{proof}

\section{Proof of Theorem \ref{theorem:benign_eigenvalues}}

We firstly develop the following lemma on the eigenvalues of $\Sigma_a$.
\begin{lemma}[Lemma \ref{lem:benign_cov_a}]
If $\Sigma$ is a benign covariance, we have
\begin{align}
\max_{a \in \{0,1\}}\sum_{k=1}^\infty \mu_k (\Sigma_a) = o(n)
\end{align}
\end{lemma}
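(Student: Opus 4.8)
The plan is to reduce everything to a single structural fact: the selection operator $\Sigma_a$ is dominated by the population operator $\Sigma$, so its trace (which is the quantity to be bounded) can never exceed $\tr(\Sigma)$, after which the claim follows by simply reading off $\tr(\Sigma)$ from the two benign-covariance forms of Definition~\ref{def:benign_cov}. First I would record the domination. Since $\Sigma_a = \E[\mathbbm{1}[d=a]\,xx^\top] = \E[p(d=a\mid x)\,xx^\top]$ and $0\le p(d=a\mid x)\le 1$ (indeed Assumption~\ref{asmp:coherent} gives $\varphi< p(d=a\mid x)<1-\varphi$), for every $\lambda\in\bbH$ we have $\lambda^\top\Sigma_a\lambda = \E[p(d=a\mid x)(\lambda^\top x)^2]\le \E[(\lambda^\top x)^2]=\lambda^\top\Sigma\lambda$, i.e. $\Sigma_a\preceq\Sigma$. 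The trace bound is then immediate — most directly by computing
\[
\sum_{k=1}^\infty \mu_k(\Sigma_a)=\tr(\Sigma_a)=\E\big[\mathbbm{1}[d=a]\,\|x\|^2\big]\le \E\big[\|x\|^2\big]=\tr(\Sigma)=\sum_{k=1}^\infty \mu_k(\Sigma),
\]
(or, equivalently, via Courant--Fischer, which yields $\mu_k(\Sigma_a)\le\mu_k(\Sigma)$ termwise). Hence it suffices to prove $\tr(\Sigma)=o(n)$ for both $a\in\{0,1\}$ simultaneously.

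Next I would verify $\tr(\Sigma)=o(n)$ in each case of Definition~\ref{def:benign_cov}. In case (a), $\Sigma$ is a fixed operator and $\tr(\Sigma)=\sum_{k\ge 1}k^{-\alpha}\ln^{-\beta}(k+1)$; for the benign parameters $\alpha=1,\ \beta>1$ (and a fortiori any $\alpha>1$) this series converges by comparison with $\int_{2}^{\infty}dt/(t\ln^{\beta}t)$, so $\tr(\Sigma)=O(1)=o(n)$. In case (b), $\tr(\Sigma_n)=\sum_{k=1}^{p_n}(\gamma_k+\epsilon_n)=\sum_{k=1}^{p_n}\gamma_k+p_n\epsilon_n$; the first sum is controlled by the convergent geometric tail $\sum_{k\ge 1}\Theta(e^{-k/\tau})=O(1)$, and the second term is $o(n)$ in the benign regime $\epsilon_n p_n=o(n)$ of Theorem~\ref{theorem:benign_eigenvalues} (equivalently, the hypothesis $r_0(\Sigma_n)=o(n)$ from \eqref{eq:benign_cov}). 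Combining, $\tr(\Sigma_n)=o(n)$.

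The argument is essentially routine once $\Sigma_a\preceq\Sigma$ is established; the only point requiring care is case (b), where $\tr(\Sigma_n)=o(n)$ is \emph{not} automatic from the eigenvalue form alone but rests on $\epsilon_n p_n=o(n)$, so I would make explicit that this is the regime in which the lemma is invoked. Finally, to recover the main-text formulation phrased in terms of $r_0(\Sigma_a)$ rather than the trace, I would additionally use the lower domination $\Sigma_a\succeq\varphi\Sigma$ from Assumption~\ref{asmp:coherent}, which forces $\mu_1(\Sigma_a)\ge\varphi\,\mu_1(\Sigma)$ and hence
\[
r_0(\Sigma_a)=\frac{\tr(\Sigma_a)}{\mu_1(\Sigma_a)}\le \frac{\tr(\Sigma)}{\varphi\,\mu_1(\Sigma)}=\varphi^{-1}\,r_0(\Sigma)=o(n),
\]
so both formulations of the statement follow from the same sandwich $\varphi\Sigma\preceq\Sigma_a\preceq\Sigma$.
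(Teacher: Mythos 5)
Your proposal is correct and follows essentially the same route as the paper's own proof: the paper likewise establishes the operator domination of $\Sigma_a$ by $\Sigma$ (showing $z^\top(\Sigma-\Sigma_a)z>0$ via Assumption~\ref{asmp:coherent}) and then concludes from the asserted fact that $\sum_k\mu_k(\Sigma)=o(n)$ for a benign covariance. If anything, your write-up is more complete than the paper's, which never verifies that assertion case-by-case: your observation that in case (b) of Definition~\ref{def:benign_cov} the bound $\tr(\Sigma_n)=o(n)$ is not automatic but rests on the regime $\epsilon_np_n=o(n)$ (a hypothesis of Theorem~\ref{theorem:benign_eigenvalues}, not of the definition itself), together with your explicit bridge from the trace statement to the $r_0(\Sigma_a)$ formulation via $\Sigma_a\succeq\varphi\Sigma$, fills details the paper leaves implicit.
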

\begin{proof}
Since 
$\sum_{k=1}^\infty \mu_k (\Sigma) = o(n)$
holds for the benign covariance $\Sigma$, showing the positive definiteness of $\Sigma - \Sigma_a$ is sufficient to achieve the statement.

Take $z \in \mathbb{H}$ arbitrary such that $\|z\| = 1$.
\begin{align*}
    z^\top (\Sigma - \Sigma_a)z &= z^\top \Expect[(1  - \mathbbm{1}[d=a])xx^\top] z \\
    &= z^\top \Expect_x[ \Expect_d[1  - \mathbbm{1}[d=a]|x]xx^\top]z \\
    &=z^\top \Expect_x[ (1  - p(d=a|x))xx^\top]z.
\end{align*}
Since $x x^\top$ is a positive semi-definite operator, we obtain
\begin{align*}
    z^\top (1-p(d=a|x)) x x^\top z \geq z^\top (1-\varphi) x x^\top z.
\end{align*}
Hence, we have
\begin{align*}
    z^\top \Expect_x[ (1  - p(d=a|x))xx^\top]z \geq (1-\varphi) z^\top \Expect[x x^\top] z = (1-\varphi) z^\top \Sigma z > 0.
\end{align*}
The last inequality follows the positive definitenes of $\Sigma$ and Assumption \ref{asmp:coherent}.
\end{proof}




\section{Proof of Theorem \ref{prp:benign_eigenvalues}}
\begin{proof}
We obtain the statement by combining Theorem \ref{thm:main2} and Theorem 6 in \cite{Bartlett2020}.
\end{proof}

\end{document}